\def\R{\mathbb{R}}
\def\eps{\varepsilon}
\def\A{{\mathcal A}}
\def\E{{\mathbb E}}
\def\F{{\mathcal F}}
\def\H{{\mathcal H}}
\def\I{{\mathcal I}}
\def\P{{\mathcal P}}
\def\T{{\mathcal T}}
\def\X{{\mathcal X}}
\def\Y{{\mathcal Y}}
\def\M{{\mathcal M}}
\def\N{{\mathcal N}}
\def\L{{\mathcal L}}
\def\U{{\mathcal U}}
\def\V{{\mathcal V}}
\def\sBer{{\mathsf{Bernoulli}}}
\def\sU{{\mathsf U}}
\def\sL{{\mathsf L}}
\def\sG{{\mathsf G}}
\def \var {{\mathsf {var}   }}
\def \mmse {{\mathsf {mmse}   }}
\def \cp {\mathsf{P}_{\mathsf{c}}}
\def \ib {\mathsf{IB}}
\def \pf {\mathsf{PF}}
\def \bsc {\mathsf{BSC}}
\def \kl {D_{\mathsf{KL}}}
\def \pf {\mathsf{PF}}
\def \pxy {P_{XY}}
\def \conv {\mathcal{K}_{\mathsf{\cup}}}
\def \conc {\mathcal{K}_{\mathsf{\cap}}}
\newcommand{\eq}[1]{\begin{equation*}
	#1
	\end{equation*}}
\newcommand{\eqn}[2]{\begin{equation}
	\label{#1}
	#2
	\end{equation}}
\newcounter{example}
\newenvironment{example}[1][]{\refstepcounter{example}\par\medskip
   \noindent \textit{Example~\theexample. #1} \rmfamily}{\medskip}
\newtheorem{definition}{Definition}
\newtheorem{theorem}{Theorem}
\newtheorem{corollary}{Corollary}
\newtheorem{proposition}{Proposition}
\newtheorem{lemma}{Lemma}
\newtheorem{remark}{Remark}
\newcommand{\markov}{\mathrel\multimap\joinrel\mathrel-%
\mspace{-9mu}\joinrel\mathrel-}
\tikzstyle{RectObject}=[rectangle,fill=white,draw,line width=0.2mm]
\tikzstyle{line}=[draw]
\tikzstyle{arrow}=[draw, -latex]
\DeclareFontFamily{U}{BOONDOX-calo}{\skewchar\font=45 }
\DeclareFontShape{U}{BOONDOX-calo}{m}{n}{
	<-> s*[1.05] BOONDOX-r-calo}{}
\DeclareFontShape{U}{BOONDOX-calo}{b}{n}{
	<-> s*[1.05] BOONDOX-b-calo}{}
\DeclareMathAlphabet{\mathcalboondox}{U}{BOONDOX-calo}{m}{n}
\SetMathAlphabet{\mathcalboondox}{bold}{U}{BOONDOX-calo}{b}{n}
\DeclareMathAlphabet{\mathbcalboondox}{U}{BOONDOX-calo}{b}{n}
\definecolor{DukeBlue}{HTML}{001A57}
\definecolor{DarkRed}{rgb}{0.75, 0.0, 0.0}
\definecolor{DarkGreen}{rgb}{0.0, 0.5, 0.0}
\begin{document}

	\title{\vspace{5.5mm}   Bottleneck Problems:\\ Information and Estimation-Theoretic View\thanks{This work was supported in part by NSF under grants CIF 1922971, 1815361, 1742836, 1900750, and CIF CAREER 1845852.}}

\author{%
Shahab Asoodeh and Flavio P. Calmon\thanks{S. Asoodeh and F. P. Calmon are with School of Engineering and Applied Science, Harvard University (e-mails: \{shahab, flavio\}@seas.harvard.edu). Part of the results in this
paper was presented at the International Symposium on Information Theory 2018 \cite{Hsu_Generalizing}. }
}	
	\date{}
	\maketitle

	\begin{abstract}
		Information bottleneck (IB) and privacy funnel (PF) are two closely related optimization problems 
		which have found applications in machine learning, design of privacy algorithms, capacity problems (e.g., Mrs. Gerber's Lemma), strong data processing inequalities, among others.
		In this work, we first investigate the functional properties of 
		IB and PF through a unified theoretical framework. We then connect them to three information-theoretic coding problems, namely  hypothesis testing against independence, noisy source coding and dependence dilution. Leveraging these connections, we prove a new cardinality bound for the auxiliary variable in IB, making its computation more tractable for discrete random variables.    
		
		In the second part, we introduce a general family of optimization problems, termed as \textit{bottleneck problems}, by 	replacing mutual information in IB and PF with other notions of mutual information, namely  $f$-information and Arimoto's mutual information.  We then argue that, unlike IB and PF, these problems lead to easily interpretable guarantee in  a variety of inference tasks with statistical constraints on accuracy and privacy. Although the underlying optimization problems are non-convex, we develop a technique to evaluate bottleneck problems in closed form by equivalently expressing them in terms of lower convex or upper concave envelope of certain functions.  By applying this technique to binary case, we derive closed form expressions for several bottleneck problems. 
	\end{abstract}
	
\tableofcontents

\section{Introduction}
Optimization formulations that involve information-theoretic quantities (e.g., mutual information) have been instrumental in a variety of learning problems found in machine learning.
A notable example is the \textit{information bottleneck} ($\ib$) method \cite{tishby2000information}.
Suppose $Y$ is a target variable and $X$ is an observable correlated variable with joint distribution $\pxy$. The goal of $\ib$  is to learn  a "compact" summary (aka \textit{bottleneck}) $T$ of $X$ that is \textit{maximally} "informative" for inferring $Y$. 
The bottleneck variable $T$ is assumed to be  generated from $X$ by applying a \textit{random} function $F$ to $X$, i.e., $T= F(X)$, in such a way that it is conditionally independent of $Y$ given $X$, that we denote by
	\begin{equation}\label{Eq:Markov}
	    Y\markov X\markov T.
	\end{equation} 
The $\ib$  quantifies this goal by measuring  the ``compactness'' of $T$ using the mutual information $I(X;T)$ and, similarly,  ``informativeness'' by $I(Y;T)$. For a given level of compactness $R\geq 0$, $\ib$  extracts the bottleneck variable $T$ that solves the constrained optimization problem
\eqn{IB00}{\ib(R) \coloneqq \sup~ I(Y; T)\qquad \text{subject~ to} \qquad I(X; T)\leq R,}
where the supremum is taken over all randomized functions $T = F(X)$ satisfying $Y\markov X\markov T$.  

The optimization problem that underlies the information bottleneck has been studied in the information theory literature as early as the 1970's --- see \cite{Gerber,Witsenhausen_Wyner,Gerbator_Ahlswede1977, Wyner_Gerber} --- as a technique to prove impossibility results in information theory and also to study the common information between $X$ and $Y$. 
Wyner and Ziv \cite{Gerber} explicitly determined the value of $\ib(R)$ for the special case of binary $X$ and $Y$ --- a result widely known as \textit{Mrs. Gerber's Lemma} \cite{Gerber,networkinfotheory}. 
More than twenty years later, the information bottleneck function was studied by Tishby et al. \cite{tishby2000information} and re-formulated in a data analytic context.
Here, the random variable $X$ represents a high-dimensional observation with a corresponding  low-dimensional feature $Y$. $\ib$  aims at specifying a compressed description of image which is maximally informative about feature $Y$. This framework led to several applications in clustering \cite{slonim2000document,IB_clustering,Agglomerative_IB} and quantization \cite{Quantization_IB,Quantization2_IB}.

A closely-related framework to $\ib$  is the \textit{privacy funnel} ($\pf$)  problem \cite{Makhdoumi2014FromTI, Calmon_fundamental-Limit, Asoodeh_Allerton}. In the $\pf$ framework, a bottleneck variable $T$ is sought to maximally preserve "information" contained in $X$ while revealing as little about $Y$ as possible. This framework aims to capture the inherent trade-off between revealing $X$ perfectly and leaking a sensitive attribute $Y$. For instance, suppose a user wishes to share an image $X$ for some classification tasks.  The image might carry information about attributes, say $Y$, that the user might consider as
sensitive, even when such information is of limited use for the tasks, e.g,  location, or emotion. The $\pf$  framework seeks to extract a representation of $X$ from which the original image can be recovered with maximal accuracy while minimizing the privacy leakage with respect to $Y$.  Using mutual information for both privacy leakage and informativeness, the  privacy funnel can be formulated as    
 \eqn{PF00}{\pf(r) \coloneqq \inf~ I(Y; T)\qquad \text{subject~ to} \qquad I(X; T)\geq r,}
where the infumum is taken over all randomized function $T = F(X)$ and $r$ is the parameter specifying the level of informativeness.
It is evident from the formulations \eqref{IB00} and \eqref{PF00} that $\ib$  and $\pf$  are closely related. In fact, we shall see later that they correspond to the upper and lower boundaries of a two-dimensional compact convex set. This duality has led to design of 
greedy algorithms \cite{Makhdoumi2014FromTI, Sadeghi_PF} for estimating $\pf$  based on the \textit{agglomerative information bottleneck} \cite{Agglomerative_IB} algorithm.
A similar formulation has recently been proposed in \cite{PF_Adverserially} as a  tool to train a neural network for learning a private representation of data $X$. Solving $\ib$  and $\pf$  optimization problems analytically is challenging.  However, recent machine learning applications, and deep learning algorithms in particular, have reignited the study of both $\ib$  and $\pf$ (see Related Work).

In this paper, we first give a cohesive overview of the existing results surrounding the $\ib$ and the $\pf$ formulations. We then provide a comprehensive analysis of $\ib$  and $\pf$ from an information-theoretic perspective, as well as a survey of several formulations connected to the $\ib$ and $\pf$ that have been introduced in the information  theory and machine learning literature.   Moreover, we overview connections with 
coding problems such as remote
source-coding \cite{Noisy_SourceCoding_Dobrushin}, testing against independence \cite{Hypothesis_Testing_Ahslwede}, and dependence dilution \cite{Asoode_submitted}. Leveraging these
connections, we prove a new cardinality bound for the bottleneck variable in $\ib$, leading to more tractable optimization problem for $\ib$. We then consider a broad family of optimization problems by going beyond mutual information in formulations \eqref{IB00} and \eqref{PF00}. We propose two candidates for this task:  \textit{Arimoto's mutual information} \cite{Arimoto_Original_Paper} and $f$-information \cite{Maxim_Strong_TIT}. By replacing $I(Y; T)$ and/or $I(X;T)$ with either of these measures, we generate a family of optimization problems that we referred to as the \textit{bottleneck problems}.  These problems are shown to better capture the underlying trade-offs intended by $\ib$ and $\pf$. More specifically, our main contributions are listed next.
\begin{itemize}
\item Computing $\ib$  and $\pf$  are notoriously challenging when $X$ takes values in a set with infinite cardinality (e.g., $X$ is drawn from a continuous probability distribution). We consider three different scenarios to circumvent this difficulty. First, we assume that $X$ is a Gaussian perturbation of $Y$, i.e., $X = Y + N^\sG$ where $N^\sG$ is a noise variable sampled from a Gaussian distribution independent of $Y$. Building upon the recent advances in entropy power inequality in \cite{EPI_Courtade}, we derive a sharp upper bound for $\ib(R)$. As a special case, we consider jointly Gaussian $(X,Y)$ for which the upper bound becomes tight. This then provides a significantly simpler proof for the fact that in this special case the optimal bottleneck variable $T$ is also Gaussian than the original proof given in \cite{GaussianIB}. 
In the second scenario, we assume that $Y$ is a Gaussian perturbation of $X$, i.e., $Y = X + N^\sG$. This corresponds to a practical setup where the feature $Y$ might be perfectly obtained from a noisy observation of $X$. Relying on the recent results in strong data processing inequality \cite{calmon2015strong}, we obtain an upper bound on $\ib(R)$ which is tight for small values of $R$. In the last scenario, we compute second-order approximation of $\pf(r)$ under the assumption that $T$ is obtained by Gaussian perturbation of $X$, i.e., $T = X + N^\sG$. Interestingly, the rate of increase of $\pf(r)$  for small values of $r$ is shown to be dictated by an asymmetric measure of dependence introduced by R\'enyi \cite{Renyi-dependence-measure}.

\item We extend the Witsenhausen and Wyner's approach \cite{Witsenhausen_Wyner} for analytically computing $\ib$ and $\pf$. This technique  converts solving the optimization problems in $\ib$ and $\pf$ to determining the convex and concave envelopes of a certain function, respectively. We apply this technique to binary $X$ and $Y$ and derive a closed form expression for $\pf(r)$-- we call this result \textit{Mr. Gerber's Lemma}.

\item Relying on the connection between $\ib$ and noisy source coding \cite{Noisy_SourceCoding_Dobrushin} (see \cite{Bottleneck_Polyanskiy, Bottleneck_Shamai}), we show that the optimal bottleneck variable $T$ in optimization problem \eqref{IB00} takes values in a set $\T$ with cardinality $|\T|\leq |\X|$. Compared to the best cardinality bound previously known (i.e., $|\T|\leq |\X|+1$), this result leads to a  reduction in the search space's dimension of the optimization problem \eqref{IB00} from $\R^{|\X|^2}$ to $\R^{|\X|(|\X|-1)}$. Moreover, we show that this does not hold for $\pf$, indicating a fundamental difference in optimizations problems \eqref{IB00} and \eqref{PF00}.

\item Following  \cite{strouse2017dib, Asoodeh_Allerton}, we study the \textit{deterministic} $\ib$ and $\pf$ (denoted by $\mathsf{dIB}$ and $\mathsf{dPF}$) in which $T$ is assumed to be a deterministic function of $X$, i.e., $T = f(X)$ for some function $f$. By connecting $\mathsf{dIB}$ and $\mathsf{dPF}$ with  entropy-constrained scalar quantization problems in information theory \cite{Polyanskiy_Distilling}, we obtain bounds on them explicitly in terms of $|\X|$. Applying these bounds to $\ib$, we obtain that 
$\frac{\ib(R)}{I(X;Y)}$ is bounded by one from above and by $\min\{\frac{R}{H(X)}, \frac{e^R-1}{|\X|}\}$ from below. 
\item By replacing $I(Y; T)$ and/or $I(X; T)$ in \eqref{IB00} and \eqref{PF00} with Arimoto's mutual information or $f$-information, we generate a family of bottleneck problems. 
We then argue that these new functionals better describe the trade-offs that were intended to be captured by $\ib$ and $\pf$. 
The main reason is three-fold: First, as illustrated in Section~\ref{sec:Operational}, mutual information in $\ib$ and $\pf$ are mainly justified when  $n\gg 1$ independent samples $(X_1, Y_1), \dots, (X_n, Y_n)$ of $\pxy$ are considered. However, Arimoto's mutual information allows for operational interpretation even in the single-shot regime (i.e., for $n=1$). Second, $I(Y; T)$ in $\ib$ and $\pf$ is meant to be a proxy for the efficiency of reconstructing $Y$ given observation $T$. However, this can be accurately formalized by probability of correctly guessing $Y$ given $T$ (i.e., Bayes risk) or minimum mean-square error (MMSE) in estimating $Y$ given $T$. While $I(Y; T)$ bounds these two measures, we show that they are \textit{precisely} characterized by Arimoto's mutual information and $f$-information, respectively. 
Finally, when $\pxy$ is unknown, mutual information is known to be  notoriously difficult to estimate. Nevertheless, Arimoto's mutual information and $f$-information are easier to estimate: While mutual information can be estimated with estimation error that scales as $O(\log n/\sqrt{n})$ \cite{Shamir_IB}, Diaz et a. \cite{Diaz_Robustness} showed that this estimation error for Arimoto's mutual information and $f$-information is $O(1/\sqrt{n})$.

We also generalize our computation  technique that enables us to analytically compute these bottleneck problems. Similar as before, this technique converts computing bottleneck problems to determining convex and concave envelopes of certain functions. Focusing on binary $X$ and $Y$, we derive closed form expressions for some of the bottleneck problems.      
\end{itemize}

\subsection{Related Work}
The $\ib$ formulation has been extensively applied in representation learning and clustering \cite{IB_clustering, IB_DocumentClustering, IB_DoubleClustering, IB_Hidden,Zaidi_distributedIB,Zaidi2019distributed}. Clustering based on $\ib$ results in algorithms that cluster data points in terms of the similarity of $P_{Y|X}$. When data points lie in a metric space, usually geometric clustering is preferred where clustering is based upon the geometric  (e.g., Euclidean) distance.   
Strouse and Schwab \cite{strouse2017dib, strouse2019clustering} proposed the \textit{deterministic} $\ib$ (denoted by $\mathsf{dIB}$)  by enforcing that  $P_{T|X}$ is a deterministic mapping: 
$\mathsf{dIB}(R)$ denotes the supremum of $I(Y; f(X))$ over all functions $f:\X\to \T$ satisfying $H(f(X))\leq R$. This optimization problem is closely related to the problem of scalar quantization in information theory:  designing a function $f:\X\to [M]\coloneqq \{1, \dots, M\}$ with a pre-determined output alphabet with $f$ optimizing some objective functions. This objective might be maximizing or minimizing $H(f(X))$  \cite{Cicalese} or maximizing $I(Y; f(X))$ for a random variable $Y$ correlated with $X$ \cite{Polyanskiy_Distilling, Lapidoth_Koch, LDPC1_quantization, LDPC2_quantization}. Since $H(f(X))\leq \log M$ for $f:\X\to [M]$,  the latter problem provides lower bounds for $\mathsf{dIB}$ (and thus for $\ib$). In particular, one can exploit \cite[Theorem 1]{LDPC3_quantization}  to obtain
$I(X; Y) - \mathsf{dIB}(R)\leq O(e^{-2 R/|\Y|-1})$ provided that $\min\{|\X|, 2^R\} >2 |\Y|$. This result establishes a \textit{linear} gap between $\mathsf{dIB}$ and $I(X; Y)$ irrespective of $|\X|$. 

The connection between quantization and $\mathsf{dIB}$ further allows us to obtain \textit{multiplicative} bounds. For instance, if $Y\sim \sBer(\frac{1}{2})$ and $X = Y + N^\sG$, where $N^\sG\sim \N(0, 1)$ is independent of $Y$, then it is well-known in information theory literature that $I(Y; f(X))\geq \frac{2}{\pi}I(X; Y)$ for all non-constant $f:\X\to \{0,1\}$ (see, e.g., \cite[Section 2.11]{Viterbi}), thus $\mathsf{dIB}(R)\geq \frac{2}{\pi}I(X; Y)$ for  $R\leq 1$. We further explore this connection to provide  multiplicative bounds on $\mathsf{dIB}(R)$ in Section~\ref{Sec:DIB}. 

The study of $\ib$  has recently gained increasing traction  in the context of deep learning. By taking $T$ to be the activity of the hidden layer(s), Tishby and Zaslavsky \cite{tishby2015deep} (see also \cite{IB_DP_openBox}) argued that neural network classifiers trained with cross-entropy loss and stochastic gradient descent (SGD) inherently aims at solving the $\ib$  optimization problems. In fact,  it is claimed that the graph of the function $R\mapsto \ib(R)$ (the so-called the information plane) characterizes the  learning dynamic of different layers in the network:   shallow layers correspond to maximizing $I(Y;T)$ while deep layers'  objective is minimizing $I(X;T)$. While the generality of this claim was refuted empirically  in \cite{On_IB_DL} and theoretically in \cite{Inf_flow_IB_Polyiansky,Amjad_IB}, it inspired significant follow-up studies. These include (i) modifying  neural network training in order to solve the $\ib$  optimization problem \cite{alemi2016deep, kolchinsky2017nonlinear, kolchinsky2018caveats,ReleventSparseCode, wickstrom2020information}; (ii) creating connections between $\ib$  and generalization error \cite{Piantanida_roleIB}, robustness \cite{alemi2016deep}, and detection of out-of-distribution data \cite{Alemi_Uncertainity}; and (iii) using $\ib$  to understand specific characteristic of neural networks  \cite{Yu2018UnderstandingCN, Cheng2018EvaluatingCO, wickstrom2020information,Higgins2017betaVAELB}. 

In both $\ib$ and $\pf$, mutual information poses some limitations. For instance, it may become infinity in deterministic neural networks \cite{On_IB_DL, Inf_flow_IB_Polyiansky,Amjad_IB} and also may not lead to proper privacy guarantee \cite{Issa_Leakage_TIT}. 
As suggested in \cite{wickstrom2020information, Sufficient_Statistics}, one way to address this issue is to replace mutual information with other statistical measures.  In the privacy literature, several measures with strong privacy guarantee have been proposed including  
R\'enyi maximal correlation \cite{Asoodeh_CWIT, Asoode_submitted, Fawaz_Makhdoumi}, probability of correctly recovering \cite{Asoodeh_TIT19, Asoode_ISIT17}, minimum mean-squared estimation error (MMSE) \cite{Asoode_MMSE_submitted, Calmon_principal_TIT}, $\chi^2$-information \cite{Hao_Privacy_estimation} (a special case of $f$-information to be described in Section~\ref{Sec:Family}), Arimoto's and Sibson's mutual information \cite{Shahab_PhD_thesis, Issa_Leakage_TIT} -- to be discussed in Section~\ref{Sec:Family}, maximal leakage \cite{Liao_maximal_leakage}, and local differential privacy \cite{privacyaware}. All these measures ensure \textit{interpretable} privacy guarantees. For instance, it is shown in\footnote{The original results in \cite{Asoode_MMSE_submitted, Calmon_principal_TIT} involve R\'enyi maximal correlation instead of $\chi^2$-information. However, it can be shown that $\chi^2$-information is equal to the sum of squares of the singular values of $f(Y)\mapsto \E[f(Y)|T]$ minus one (the largest one), while R\'enyi maximal correlation is equal to the second largest singular value \cite{Witsenhausen:dependent}. Thus, $\chi^2$-information upper bounds R\'enyi maximal correlation.} \cite{Asoode_MMSE_submitted, Calmon_principal_TIT} that if $\chi^2$-information between $Y$ and $T$ is sufficiently small, then no \textit{functions} of $Y$ can be efficiently reconstructed given $T$; thus providing an interpretable privacy guarantee. 

Another limitation of mutual information is related to its estimation difficulty. It is known that mutual information can be estimated from $n$ samples with the estimation error that scales as $O(\log n/\sqrt{n})$ \cite{Shamir_IB}. However, as shown by Diaz et al.~\cite{Diaz_Robustness}, the estimation error for most of the above measures scales as $O(1/\sqrt{n})$. Furthermore, the recently popular variational estimators for mutual information, typically implemented via deep learning methods \cite{MI_Estimator_Poole, MINE_Belghazi, Contrastive}, presents some fundamental limitations \cite{Understanding_Variational}: the variance of the estimator might grow \textit{exponentially} with the ground truth mutual information and also the estimator might  not satisfy basic properties of mutual information such as data processing inequality or additivity. McAllester and Stratos \cite{Stratos_MI_Estimator} showed that some of these limitations are inherent to a large family of mutual information estimators.

\subsection{Notation}
We use capital letters, e.g., $X$, for random variables and calligraphic letters for their alphabets, e.g.,  $\X$. If $X$ is distributed according to probability mass function (pmf) $P_X$, we write $X\sim P_X$. Given two random variables $X$ and $Y$, we write $P_{XY}$ and $P_{Y|X}$ as the joint distribution and the conditional distribution of $Y$ given $X$. We also interchangeably refer to $P_{Y|X}$ as a \textit{channel} from $X$ to $Y$. 
We use $H(X)$ to denote both entropy and differential entropy of $X$, i.e., we have 
$$H(X) = -\sum_{x\in \X}P_X(x)\log P_X(x)$$
if $X$ is a discrete random variable taking values in $\X$ with probability mass function (pmf) $P_X$ and 
$$H(X) = -\int\log f_X(x) \log f_X(x) \text{d}x,$$
where $X$ is an absolutely continuous random variable with probability density function (pdf) $f_X$. 
If $X$ is a binary random variable with $P_X(1) = p$, we write $X\sim \sBer(p)$. In this case, its entropy is called \textit{binary entropy function} and denoted by $h_\mathsf{b}(p) \coloneqq -p\log p - (1-p)\log(1-p)$. 
We use superscript $\sG$ to describe a standard Gaussian random variable, i.e., $N^\sG\sim \N(0, 1)$.
Given two random variables $X$ and $Y$, their (Shannon's) mutual information is denoted by $I(X; Y)\coloneqq H(Y) - H(Y|X)$. We let $\P(\X)$ denote the set of all probability distributions on the set $\X$.  Given an arbitrary $Q_X\in \P(\X)$ and a channel $P_{Y|X}$, we let $Q_X P_{Y|X}$ denote the resulting output distribution on $\Y$. 
For any $a\in [0,1]$, we use $\bar a$ to denote $1-a$ and for any integer $k\in \mathbb N$, $[k]\coloneqq \{1, 2, \dots, k\}$. 

Throughout the paper, we assume a pair of (discrete or continuous) random variables $(X,Y)\sim \pxy$ are given with a fixed joint distribution $\pxy$, marginals $P_X$ and $P_Y$, and conditional distribution $P_{Y|X}$. We then use $Q_X\in \P(\X)$ to denote an \textit{arbitrary} distribution with $Q_Y = Q_X P_{Y|X}\in \P(\Y)$.

\section{Information Bottleneck and Privacy Funnel:  Definitions and Functional Properties} \label{Sec:IB_PF}
	In this section, we review the  \textit{information bottleneck}  and its closely related functional, the \textit{privacy funnel}. We then prove some analytical properties of these two functionals and develop a convex analytic approach which enables us to compute closed-form expressions for both these two functionals in some simple cases.   
	
	To precisely quantify the trade-off between these two conflicting goals, the $\ib$ optimization problem \eqref{IB00} was proposed \cite{tishby2000information}. Since any randomized function $T = F(X)$ can be equivalently characterized by a conditional distribution, \eqref{IB00} can be instead expressed as 
	\begin{equation}\label{eq:Def_IB}
	\mathsf{IB}(P_{XY}, R)\coloneqq \sup_{\substack{P_{T|X}:Y\markov X\markov T \\ I(X;T)\leq R}} I(Y; T), \qquad \text{or}\qquad  \widetilde{\mathsf{IB}}(P_{XY}, \tilde R)\coloneqq \inf_{\substack{P_{T|X}:Y\markov X\markov T \\ I(Y;T)\geq \tilde R}} I(X; T).
	\end{equation}
	where $R$ and $\tilde{R}$ denote the level of desired compression and informativeness, respectively. 
	We use $\ib(R)$ and $\widetilde\ib(\tilde R)$ to denote $\ib(\pxy, R)$ and $\widetilde\ib(\pxy, \tilde R)$, respectively, when the joint distribution is clear from the context. 
	Notice that if $\ib(\pxy, R) = \tilde R$,  then $\widetilde\ib(\pxy, \tilde R) = R$.  
	
	Now consider the setup where data $X$
	is required to be disclosed  while maintaining the privacy of a sensitive attribute, represented by $Y$. This goal was formulated by $\pf$ in \eqref{PF00}. As before, replacing randomized function $T = F(X)$ with conditional distribution $P_{T|X}$, we can equivalently express \eqref{PF00} as 
\begin{equation}\label{eq:Def_PF}
	\pf(P_{XY}, r)\coloneqq \inf_{\substack{P_{T|X}:Y\markov X\markov T \\ I(X;T)\geq r}} I(Y; T), \qquad \text{or} \qquad \widetilde\pf(P_{XY}, \tilde r)\coloneqq \sup_{\substack{P_{T|X}:Y\markov X\markov T \\ I(Y;T)\leq \tilde r}} I(X; T),
	\end{equation}
where $\tilde r$ and $r$ denote the level of desired privacy and informativeness, respectively. The case  $\tilde r = 0$ is particularly interesting in practice and specifies \textit{perfect privacy}, see e.g.,  \cite{Calmon_fundamental-Limit,Rassouli_Perfect}. As before, we write $\widetilde\pf( \tilde r)$ and $\pf(r)$ for $\widetilde\pf( \pxy, \tilde r)$ and $\pf(\pxy, r)$ when $\pxy$ is clear from the context.
\begin{figure}[t]
	\centering
	\begin{subfigure}[t]{0.32\textwidth}
		\includegraphics[height=4cm, width=1\columnwidth]{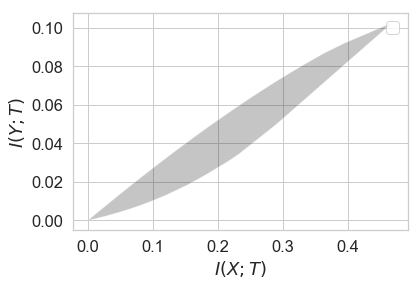}
		\caption{$|\X| = |\Y|=2$}
		\label{fig:perfect_Privacy}
		\end{subfigure}
		\begin{subfigure}[t]{0.32\textwidth}
		\includegraphics[height=4cm, width=1\columnwidth]{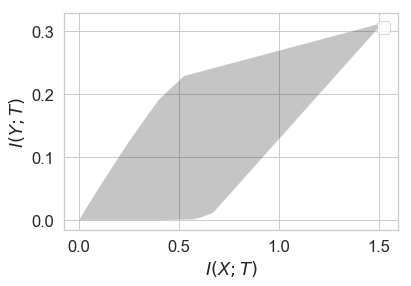}
		\caption{$|\X| = 3, |\Y|=2$}
		\end{subfigure}
		\begin{subfigure}[t]{0.32\textwidth}
		\includegraphics[height=4cm, width=1\columnwidth]{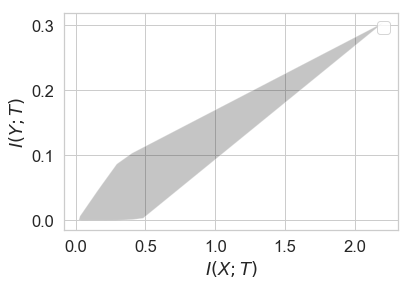}
		\caption{$|\X| = 5, |\Y|=2$}
		\end{subfigure}
		\caption{Examples of the set $\M$, defined in \eqref{Def:Set_M}. The upper and lower boundaries of this set correspond to $\ib$ and $\pf$, respectively.  It is worth noting that, while $\ib(R) =0$  only at $R=0$, $\pf(r) = 0$ holds in general for $r$ belonging to a non-trivial interval (only for $|\X|>2$). Also, note that in general neither upper nor lower boundaries are smooth. A sufficient condition for smoothness is $P_{Y|X}(y|x)>0$ (see Theorem~\ref{Thm:IB_Properties}), thus  both $\ib$ and $\pf$ are smooth in the binary case. }   
\label{fig:M_Set}
\end{figure}

The following properties of $\ib$ and $\pf$ follow directly from their definitions. The proof of this result (and any other results in this section) is given in Appendix~\ref{Appendix_ProofSecIB_PF}.
	\begin{theorem}\label{Thm:IB_Properties}
		For a given $P_{XY}$, the mappings $\ib(R)$ and $\pf(r)$ have the following properties:
		\begin{itemize}
			\item $\ib(0)=\pf(0)=0$.
			
			\item $\ib(R)= I(X;Y)$ for any $R\geq H(X)$ and $\pf(r) = I(X; Y)$ for $r\geq H(X)$.
			\item $0\leq\ib(R)\leq \min\{R, I(X;Y)\}$ for any $R\geq 0$ and $\pf(r)\geq \max\{r - H(X|Y), 0\}$ for any $r\geq 0$.
			\item $R\mapsto \ib(R)$ is continuous, strictly increasing,  and concave on the range $(0, I(X;Y))$.
			\item $r\mapsto \pf(r)$ is continuous, strictly increasing, and convex on the range $(0, I(X;Y))$.
			\item If $P_{Y|X}(y|x)>0$ for all $x\in \X$ and $y\in \Y$, then both $R\mapsto \ib(R)$ and $r\mapsto \pf(r)$ are continuously differentiable over $(0, H(X))$.
			\item $R\mapsto \frac{\ib(R)}{R}$ is non-increasing and $r\mapsto \frac{\pf(r)}{r}$ is non-decreasing.
			\item We have $$\ib(R)\coloneqq \sup_{\substack{P_{T|X}:Y\markov X\markov T \\ I(X;T)= R}} I(Y; T), \qquad \text{and}\qquad \pf(r)\coloneqq \inf_{\substack{P_{T|X}:Y\markov X\markov T \\ I(X;T)= r}} I(Y; T).$$
		\end{itemize}
	\end{theorem}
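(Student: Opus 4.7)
My plan is to dispatch the boundary values and the order-theoretic bounds by direct mutual-information manipulations, to extract concavity/convexity and all their immediate consequences from a single time-sharing construction, and to treat continuous differentiability separately via a Lagrangian parametrization, which I expect to be the one delicate step.

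\textbf{Endpoints and bounds.} A constant $T$ is feasible at $R = r = 0$ and gives $I(X;T) = I(Y;T) = 0$, while DPI along $Y \markov X \markov T$ forces $\ib(0) = 0$. At $R = r = H(X)$ the choice $T = X$ is feasible and achieves $I(Y;T) = I(X;Y)$, matching the DPI upper bound $I(Y;T) \leq I(X;Y)$; for $\pf$ at $r \geq H(X)$ one further uses that $I(X;T) \geq H(X)$ forces $H(X|T) = 0$, so by Markovity in both directions $I(Y;T) = I(X;Y)$. The sandwich $\ib(R) \leq \min\{R, I(X;Y)\}$ is just DPI. For the lower bound on $\pf(r)$, I would decompose
\[ I(X;T) = I(Y;T) + I(X;T \mid Y), \]
which is valid because $Y \markov X \markov T$ annihilates $I(Y;T \mid X)$, and then apply $I(X;T \mid Y) \leq H(X|Y)$ to obtain $I(Y;T) \geq r - H(X|Y)$.

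\textbf{Concavity, convexity, monotonicity, continuity, and the equality reformulation.} For concavity of $\ib$ and convexity of $\pf$, given nearly optimal $P_{T|X}^1, P_{T|X}^2$ at two parameter values, I would introduce $Q \sim \sBer(\lambda)$ independent of $X$, apply $P_{T|X}^Q$ to produce $T'$, and set $T = (T',Q)$; both $I(X;T)$ and $I(Y;T)$ then become exact $\lambda$-convex combinations, and the direction of the constraint dictates concavity in the $\ib$ case and convexity in the $\pf$ case. Continuity on $(0, I(X;Y))$ is then automatic from concavity/convexity on the interior. Strict monotonicity follows because a plateau $\ib(R_1)=\ib(R_2)\in(0,I(X;Y))$ with $R_1<R_2$ would, by concavity together with $\ib(0)=0$, force $\ib$ to be constant on $[0,R_2]$, contradicting $\ib(0)=0$; the dual argument (using non-decreasingness of $\pf$, convexity, and $\pf(0)=0$) handles $\pf$. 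For the equality reformulation I would use a one-dimensional interpolation: in the $\ib$ case, mixing any optimizer $T^*$ with $T = X$ traces a line segment in the $(I(X;T), I(Y;T))$-plane from $(I(X;T^*), I(Y;T^*))$ to $(H(X), I(X;Y))$, so by an intermediate value argument $I(X;T) = R$ can be reached without decreasing $I(Y;T)$; for $\pf$, mixing an optimizer with a constant $T$ decreases $I(X;T)$ and $I(Y;T)$ proportionally, so any optimizer must already satisfy $I(X;T) = r$. The ratio monotonicities are a one-line consequence of $f(0)=0$ together with concavity (resp.\ convexity): $f(\lambda R) \geq \lambda f(R)$ for concave $f$ yields $f(\lambda R)/(\lambda R) \geq f(R)/R$, with the convex case dual.

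\textbf{The main obstacle: continuous differentiability under positivity.} For this I would adopt a parametric Lagrangian approach. When $P_{Y|X}(y|x)>0$ for all $(x,y)$, the Lagrangian $\sL_\beta(P_{T|X}) \coloneqq I(Y;T) - \beta I(X;T)$ is smooth in $P_{T|X}$ on the simplex, as all relevant conditional and marginal distributions inherit strict positivity and no log-singularities appear. Its maximizer $P_{T|X}^{*,\beta}$ satisfies a smooth first-order stationarity condition, and the implicit function theorem furnishes smooth dependence on $\beta$, hence a smooth parametric curve $\beta \mapsto (I(X;T^{*,\beta}), I(Y;T^{*,\beta}))$. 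Since $\beta$ plays the role of the supporting slope of $\ib$ at the corresponding $R$, the remaining step --- which I expect to require the most care --- is to show that $\beta \mapsto I(X;T^{*,\beta})$ has nowhere-vanishing derivative on the relevant range, arguing from strict concavity of $\sL_\beta$ modulo the affine direction induced by the simplex constraint. Once this monotonicity is in hand, the curve is the graph of a $C^1$ function, yielding $\ib \in C^1$ on $(0, H(X))$, and the $\pf$ case is handled dually by tracing the lower boundary of the achievable $(I(X;T), I(Y;T))$ region with the opposite sign of $\beta$.
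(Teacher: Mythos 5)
Most of your argument is sound and runs parallel to the paper's: the paper derives concavity/convexity, continuity, and the ratio monotonicities from the observation that $\ib$ and $\pf$ are the upper and lower boundaries of the convex set $\M$ in \eqref{Def:Set_M}, while your explicit time-sharing construction $T=(T',Q)$ is precisely the verification that $\M$ is convex, and your intermediate-value interpolation for the equality reformulation is a constructive substitute for the paper's appeal to strict monotonicity. Two points, however, do not go through as written.

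First, a small but real slip in the strict-monotonicity step for $\ib$: a \emph{concave} function with a plateau $\ib(R_1)=\ib(R_2)=c$ is forced to be constant on $[R_1,\infty)$ (chord slopes are non-increasing), not on $[0,R_2]$; the function $R\mapsto\min\{cR/R_1,\,c\}$ is concave, vanishes at $0$, and has exactly such a plateau, so there is no contradiction with $\ib(0)=0$. The contradiction must instead come from the right anchor $\ib(H(X))=I(X;Y)>c$, which is how the paper argues. Your dual argument for $\pf$ (convexity plus $\pf(0)=0$) is correct, since for convex functions a plateau propagates to the left.

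Second, and more seriously, the continuous-differentiability item. Your plan hinges on ``strict concavity of $\sL_\beta$ modulo the affine direction induced by the simplex constraint,'' but $\sL_\beta(P_{T|X})=I(Y;T)-\beta I(X;T)$ is not concave in $P_{T|X}$: for fixed $P_X$ both $I(X;T)$ and $I(Y;T)$ are \emph{convex} in the channel $P_{T|X}$, so the Lagrangian is a difference of convex functions and is in general neither convex nor concave (indeed the whole point of the lower-convex-envelope machinery in Section~\ref{Sec:Evaluation} is that $F_\beta$ is neither). Without concavity you lose uniqueness of the maximizer, the nondegeneracy hypothesis needed for the implicit function theorem, and any handle on $\frac{d}{d\beta}I(X;T^{*,\beta})\neq 0$; the set of global maximizers can jump as $\beta$ varies (this is exactly what produces kinks in $\ib$ when the positivity hypothesis fails). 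So the one step you flagged as delicate is genuinely unproved in your sketch. The paper does not prove it from scratch either: it invokes an external differentiability result for the information bottleneck under the positivity condition together with the fact that a concave function differentiable on an open interval automatically has a continuous derivative there (Rockafellar, Theorem 25.5). If you want a self-contained argument you would need to supply the analogue of that external lemma, and the route through the Lagrangian stationarity conditions requires substantially more care than asserted.
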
   
	
	 According to this theorem, we can always restrict both $R$ and $r$ in \eqref{eq:Def_IB} and \eqref{eq:Def_PF}, respectively, to $[0, H(X)]$ as  $\ib(R) = \pf(r) = I(X;Y)$ for all $r, R\geq H(X)$.

Define $\M = \M(\pxy)\subset \R^2$ as
\begin{equation}\label{Def:Set_M}
    \M \coloneqq \big\{(I(X; T), I(Y; T)):Y\markov X\markov T, (X, Y)\sim \pxy\big\}.
\end{equation}
It can be directly verified that $\M$ is convex.  According to this theorem, $R\mapsto \ib(R)$ and $r\mapsto \pf(r)$ correspond to the upper and lower boundary of $\M$, respectively. The convexity of $\M$ then implies the concavity and convexity of $\ib$ and $\pf$.  Fig.~\ref{fig:M_Set} illustrates the set $\M$ for the simple case of binary $X$ and $Y$.

	 While  both  $\ib(0) = 0$ and  $\pf(0) = 0$, their behavior in the neighborhood around zero might be completely different. As illustrated in Fig.~\ref{fig:M_Set}, $\ib(R)>0$ for all $R>0$, whereas $\pf(r) = 0$ for $r\in [0, r_0]$ for some $r_0>0$. When such $r_0>0$ exists, we say  perfect privacy occurs: there exists a variable $T$ satisfying $Y\markov X\markov T$ such that $I(Y; T) = 0$ while $I(X; T)>0$; making $T$ a representation of $X$ having perfect privacy (i.e., no information leakage about $Y$).  A necessary and sufficient condition for the existence of such $T$ is given in \cite[Lemma 10]{Asoode_submitted} and \cite[Theorem 3]{Calmon_fundamental-Limit}, described next.
	 \begin{theorem}[Perfect privacy]\label{Thm:perfect_Privacy}
	 Let $(X, Y)\sim \pxy$ be given and   $\A\subset [0, 1]^{|\Y|}$ be the set of vectors $\{P_{Y|X}(\cdot|x), x\in \X\}$. Then there exists $r_0>0$ such that $\pf(r) = 0$ for $r\in [0, r_0]$ if and only if vectors in $\A$ are linearly independent. 
	 \end{theorem}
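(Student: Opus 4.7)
The plan is to translate the existence of a bottleneck $T$ with $I(Y;T)=0$ and $I(X;T)>0$ into a linear-algebraic condition on the transition matrix $M\in\R^{|\Y|\times|\X|}$ whose $x$-th column is the vector $P_{Y|X}(\cdot|x)\in\A$, and then to invoke the claimed (in)dependence criterion on these columns. Using the Markov chain $Y\markov X\markov T$, I would first write $P_{Y|T}(\cdot|t)=M\,P_{X|T}(\cdot|t)$ and $P_Y=M\,P_X$. The constraint $I(Y;T)=0$ is equivalent to $P_{Y|T}(\cdot|t)=P_Y$ for every $t$ with $P_T(t)>0$, which by subtraction becomes the \emph{linear} condition $M\bigl(P_{X|T}(\cdot|t)-P_X\bigr)=0$; equivalently, the posterior-shift $v_t:=P_{X|T}(\cdot|t)-P_X$ lies in $\ker(M)$. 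Meanwhile, $I(X;T)>0$ is equivalent to $v_t\neq 0$ for some such $t$.

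Next I would verify that kernel elements are compatible with the probability-simplex structure. Column-stochasticity of $M$ forces $\mathbf 1^{\!\top}M=\mathbf 1^{\!\top}$, so every $v\in\ker(M)$ automatically satisfies $\sum_x v_x=0$. Assuming $P_X$ has full support (else restrict to $\mathrm{supp}(P_X)$), for any non-zero $v\in\ker(M)$ a sufficiently small $\epsilon>0$ ensures $P_X\pm\epsilon v\geq 0$ componentwise; then the binary bottleneck $T\in\{0,1\}$ defined by $P_T(0)=P_T(1)=1/2$, $P_{X|T=0}=P_X+\epsilon v$, and $P_{X|T=1}=P_X-\epsilon v$ preserves the $X$-marginal, attains $I(Y;T)=0$ by the calculation above, and attains $I(X;T)>0$ since the two posteriors differ. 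Conversely, any perfect-privacy $T$ supplies a non-zero element of $\ker(M)$ through its posterior-shift. Thus perfect privacy is achievable if and only if $\ker(M)\neq\{0\}$, which is precisely the linear-algebraic (in)dependence criterion on the vectors in $\A$ asserted in the statement.

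To promote achievability to $\pf$ vanishing on a full interval $[0,r_0]$, I would invoke monotonicity of $r\mapsto\pf(r)$ and $\pf(0)=0$ from Theorem~\ref{Thm:IB_Properties}: setting $r_0:=I(X;T^*)$ for the perfect-privacy $T^*$ just constructed forces $\pf(r)=0$ on $[0,r_0]$; the converse follows by attainment of the infimum defining $\pf$ (via the cardinality bound on $T$ established elsewhere in the paper, plus standard compactness of the space of channels into a finite alphabet). The main obstacle I expect is the technical bookkeeping for the full-support reduction on $P_X$ and verifying that the constructed binary $T$ has the correct $X$-marginal together with strict informativeness; once these are handled, the remaining ingredients reduce to elementary linear algebra combined with the continuity/monotonicity already established in Theorem~\ref{Thm:IB_Properties}.
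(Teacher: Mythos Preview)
The paper does not supply its own proof of this theorem; it simply attributes the result to \cite[Lemma~10]{Asoode_submitted} and \cite[Theorem~3]{Calmon_fundamental-Limit}. Your linear-algebraic argument via the kernel of the transition matrix $M$ is precisely the standard route used in those references: reduce $I(Y;T)=0$ to $M\bigl(P_{X|T}(\cdot|t)-P_X\bigr)=0$, observe that column-stochasticity forces kernel vectors to have zero coordinate-sum, and build a binary perturbation $P_X\pm\epsilon v$ from any non-zero kernel element. So there is nothing to compare methodologically; your proposal is the expected proof.

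One substantive point you should not leave implicit: your argument establishes that perfect privacy holds if and only if $\ker(M)\neq\{0\}$, i.e., if and only if the vectors in $\A$ are linearly \emph{dependent}. The theorem as printed says ``linearly independent'', which is a typo --- your conclusion is the correct one, and it is consistent with the paper's own subsequent remark that perfect privacy occurs whenever $|\X|>|\Y|$ (where the $|\X|$ vectors in $\R^{|\Y|}$ are necessarily dependent). Rather than writing ``(in)dependence'' and deferring to the statement, you should flag the discrepancy and state the correct direction.

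Two minor technicalities are worth tightening. First, the full-support reduction: if $P_X(x_0)=0$ for some $x_0$, the column $P_{Y|X}(\cdot|x_0)$ still appears in $\A$ as stated, but your perturbation construction only works on $\mathrm{supp}(P_X)$; you should say explicitly that the relevant kernel is that of $M$ restricted to the support columns (and check the equivalence survives). Second, for the converse of the interval claim you do not actually need attainment of the infimum: the single feasible $T^*$ with $I(X;T^*)=r_0$ and $I(Y;T^*)=0$ already witnesses $\pf(r)\le 0$ for every $r\le r_0$, and non-negativity gives equality; the compactness argument is only needed to pass from $\pf(r_0)=0$ back to the existence of such a $T^*$, which is straightforward once the cardinality bound $|\T|\le|\X|+1$ is in hand.
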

	 In light of this theorem, we obtain  that perfect privacy occurs if $|\X|>|\Y|$. It also follows from the theorem that for binary $X$, perfect privacy cannot occur (see Fig.~\ref{fig:perfect_Privacy}). 
	 
	Theorem~\ref{Thm:IB_Properties} enables us to derive a simple bounds for $\ib$ and $\pf$. Specifically, the facts that $\frac{\pf(r)}{r}$ is non-decreasing and $\frac{\ib(R)}{R}$ is non-increasing immediately result in the the following linear bounds.
	 \begin{theorem}[Linear lower bound] \label{Thm:Bounds_IB_PF}
	 For $r, R\in (0, H(X))$, we have 
	 \begin{equation}
	     \inf_{Q_X\in \P(\X)\atop Q_X\neq P_X}\frac{\kl(Q_Y\|P_Y)}{\kl(Q_X\|P_X)}\leq \frac{\pf(r)}{r}\leq \frac{I(X;Y)}{H(X)}\leq \frac{\ib(R)}{R} \leq \sup_{Q_X\in \P(\X)\atop Q_X\neq P_X}\frac{\kl(Q_Y\|P_Y)}{\kl(Q_X\|P_X)} \leq 1.
	 \end{equation}
	 \end{theorem}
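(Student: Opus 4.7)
The plan is to split the chain of inequalities into two pieces. The two middle inequalities follow directly from Theorem~\ref{Thm:IB_Properties}, whereas the outermost ones require a short weighted-average argument that rewrites mutual information as an expected KL divergence.

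For the middle pair, Theorem~\ref{Thm:IB_Properties} states that $r\mapsto \pf(r)/r$ is non-decreasing and $R\mapsto \ib(R)/R$ is non-increasing on $(0,H(X)]$, together with the boundary values $\pf(H(X))=\ib(H(X))=I(X;Y)$. Evaluating each monotone quotient at the endpoint $H(X)$ sandwiches it around $I(X;Y)/H(X)$, which is exactly the middle pair of inequalities.

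For the outer bounds, I would fix an arbitrary admissible $P_{T|X}$ satisfying $Y\markov X\markov T$, set $Q_X^{(t)}\coloneqq P_{X|T=t}$, and observe that the Markov property forces $P_{Y|T=t}=Q_X^{(t)}P_{Y|X}$, so each $(Q_X^{(t)},P_{Y|T=t})$ is precisely a pair of the form $(Q_X,Q_Y)$ in the theorem. Using the familiar identities
\[
I(X;T)=\sum_t P_T(t)\,\kl(Q_X^{(t)}\|P_X),\qquad I(Y;T)=\sum_t P_T(t)\,\kl(Q_Y^{(t)}\|P_Y),
\]
the ratio $I(Y;T)/I(X;T)$ becomes a convex combination, with weights proportional to $P_T(t)\kl(Q_X^{(t)}\|P_X)$, of the pointwise ratios $\kl(Q_Y^{(t)}\|P_Y)/\kl(Q_X^{(t)}\|P_X)$. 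The terms with $\kl(Q_X^{(t)}\|P_X)=0$ satisfy $Q_X^{(t)}=P_X$ and hence $Q_Y^{(t)}=P_Y$, so they contribute zero to both sums and may be discarded. This convex-combination representation immediately gives
\[
\inf_{Q_X\neq P_X}\frac{\kl(Q_Y\|P_Y)}{\kl(Q_X\|P_X)}\;\leq\;\frac{I(Y;T)}{I(X;T)}\;\leq\;\sup_{Q_X\neq P_X}\frac{\kl(Q_Y\|P_Y)}{\kl(Q_X\|P_X)}
\]
for every admissible $T$ with $I(X;T)>0$. Specializing to $T$ with $I(X;T)\leq R$ and taking a supremum delivers the upper bound on $\ib(R)/R$; specializing to $T$ with $I(X;T)\geq r$ and taking an infimum delivers the lower bound on $\pf(r)/r$. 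The final $\leq 1$ is the data-processing inequality for KL divergence applied to the channel $P_{Y|X}$, namely $\kl(Q_XP_{Y|X}\|P_XP_{Y|X})\leq \kl(Q_X\|P_X)$.

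The main bookkeeping obstacle is the degenerate case $Q_X^{(t)}=P_X$ and the need to divide by $I(X;T)$; both are routine as long as one works with $r,R>0$, which the statement assumes. No machinery beyond Theorem~\ref{Thm:IB_Properties}, the identity $I(U;V)=\E_V[\kl(P_{U|V}\|P_U)]$, and the KL data-processing inequality is required.
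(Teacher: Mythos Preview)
Your argument is correct. For the middle pair you use exactly the same ingredients as the paper (the monotonicity of $\pf(r)/r$ and $\ib(R)/R$ together with the endpoint values at $H(X)$); the paper phrases this as a chord argument coming from convexity/concavity, but the content is identical.

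For the outer pair your route is genuinely different from the paper's. The paper first invokes monotonicity to pass to the limits $\pf'(0)=\inf_T I(Y;T)/I(X;T)$ and $\ib'(0)=\sup_T I(Y;T)/I(X;T)$, and then imports the identities $\inf_T I(Y;T)/I(X;T)=\inf_{Q_X}\kl(Q_Y\|P_Y)/\kl(Q_X\|P_X)$ and $\sup_T I(Y;T)/I(X;T)=\sup_{Q_X}\kl(Q_Y\|P_Y)/\kl(Q_X\|P_X)$ from \cite{Calmon_fundamental-Limit} and \cite{anantharam}, respectively. Your weighted-average computation shows directly that $I(Y;T)/I(X;T)$ is a convex combination of the pointwise ratios $\kl(Q_Y^{(t)}\|P_Y)/\kl(Q_X^{(t)}\|P_X)$, which immediately sandwiches it between the infimum and supremum without appealing to those external results or to the limit at $0$. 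This is more elementary and self-contained; what you lose is the additional information that the outer bounds are \emph{equal} to $\pf'(0)$ and $\ib'(0)$ (which the paper records separately as Theorem~\ref{Thm:Derivative_IB_PF}), but that equality is not needed for the present statement. Your final $\leq 1$ via the KL data-processing inequality is also fine; the paper leaves that step implicit.
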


	 In light of this theorem, if $\pf(r) = r$, then $I(X; Y) = H(X)$, implying $X = g(Y)$ for a deterministic function $g$. Conversely, if $X = g(Y)$ then $\pf(r) = r$ because for all $T$ forming the Markov relation $Y\markov g(Y)\markov T$, we have $I(Y; T) = I(g(Y); T)$.
	 On the other hand, we have $\ib(R) = R$ if and only if there exists a variable $T^*$ satisfying $I(X; T^*) = I(Y; T^*)$ and thus the following double Markov relations
	 $$Y\markov X\markov T^*,\qquad \text{and}\qquad X\markov Y\markov T^*.$$
	 It can be verified (see \cite[Problem 16.25]{csiszarbook}) that this double Markov condition is equivalent to the existence of a pair of functions $f$ and $g$ such that $f(X) = g(Y)$ and $(X, Y)\markov f(X) \markov T^*$.
	 One special case of this setting, namely where $g$ is an identity function, has been recently studied in details in \cite{kolchinsky2018caveats} and will be reviewed in Section~\ref{Sec:DIB}.
	 Theorem~\ref{Thm:Bounds_IB_PF} also enables us to characterize the "worst" joint distribution $\pxy$ with respect to $\ib$ and $\pf$. As demonstrated in the following lemma, if $P_{Y|X}$ is an \textit{erasure} channel then $\frac{\pf(r)}{r} = \frac{\ib(R)}{R} = \frac{I(X; Y)}{H(X)}$.  
\begin{lemma}\label{lem:pf_BEC}
\begin{itemize}
    \item Let $\pxy$ be such that $\Y = \X\cup \{\perp\}$, $P_{Y|X}(x|x) = 1-\delta$, and $P_{Y|X}(\perp | x) = \delta$ for some $\delta>0$. Then
$$\frac{\pf(r)}{r} = \frac{\ib(R)}{R} = 1-\delta.$$
    \item Let $\pxy$ be such that $\X = \Y\cup \{\perp\}$, $P_{X|Y}(y|y) = 1-\delta$, and $P_{X|Y}(\perp | y) = \delta$ for some $\delta>0$. Then
	     $$\pf(r) = \max\{r - H(X|Y), 0\}.$$
\end{itemize}
\end{lemma}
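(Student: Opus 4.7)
The plan is to handle each part separately and to use the erasure structure to introduce an auxiliary indicator variable that trivializes one of the two mutual informations.

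For part (i) I would prove the sharper identity that $I(Y;T)=(1-\delta)\,I(X;T)$ holds for \emph{every} $P_{T|X}$ satisfying $Y\markov X\markov T$. Granting this, the convex set $\M$ in \eqref{Def:Set_M} collapses onto the line segment $\{(a,(1-\delta)a):a\in[0,H(X)]\}$, so its upper boundary $\ib$ and lower boundary $\pf$ coincide and both equal $s\mapsto (1-\delta)s$, yielding $\ib(R)/R=\pf(r)/r=1-\delta$. To establish the identity, introduce the erasure indicator $E\coloneqq \mathds{1}[Y=\perp]$. Because $P_{Y|X}(\perp\mid x)=\delta$ is independent of $x$, we have $E\sim\sBer(\delta)$ and $E\indep X$; since the Markov chain makes $T$ a randomized function of $X$ alone, $E\indep (X,T)$. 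Expanding
\[
I(Y;T)=I(Y,E;T)=I(E;T)+I(Y;T\mid E),
\]
the first term vanishes by independence, the contribution of $\{E=1\}$ to the second term is zero because $Y=\perp$ is deterministic there, and on $\{E=0\}$ we have $Y=X$ so $I(Y;T\mid E=0)=I(X;T\mid E=0)=I(X;T)$, where the last equality again uses $E\indep(X,T)$. Combining yields the identity.

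For part (ii) the lower bound $\pf(r)\geq \max\{r-H(X|Y),0\}$ is already recorded in Theorem~\ref{Thm:IB_Properties}, so only a matching achievability is needed. I would identify two explicit points in $\M$ and interpolate using convexity of $\pf$ (also from Theorem~\ref{Thm:IB_Properties}). Define $E\coloneqq \mathds{1}[X=\perp]$, a function of $X$; since $P_{X|Y}(\perp\mid y)=\delta$ for every $y$, we have $E\indep Y$, so $I(Y;E)=0$, while $I(X;E)=H(E)=h_\mathsf{b}(\delta)=H(X|Y)$, realizing the pair $(H(X|Y),0)\in\M$. Taking $T=X$ realizes $(H(X),I(X;Y))$. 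Convexity of $\pf$ together with $\pf(0)=\pf(H(X|Y))=0$ and $\pf\geq 0$ forces $\pf\equiv 0$ on $[0,H(X|Y)]$, and on $[H(X|Y),H(X)]$ convexity bounds $\pf(r)$ above by the chord between $(H(X|Y),0)$ and $(H(X),I(X;Y))=(H(X),H(X)-H(X|Y))$, whose equation is exactly $r-H(X|Y)$. Matching the lower bound gives the claimed equality on $[0,H(X)]$, and at $r=H(X)$ both sides agree with the saturation value $I(X;Y)$ from Theorem~\ref{Thm:IB_Properties}.

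The step that requires the most care is the independence $E\indep(X,T)$ in part (i): it combines the facts that $E$ is a function of $Y$ alone, that the erasure probability is $x$-independent (giving $E\indep X$), and that the bottleneck randomization generating $T$ from $X$ is independent of all other randomness. Once this is justified explicitly, both parts collapse to a one-line chain-rule computation together with convexity and monotonicity properties already supplied by Theorem~\ref{Thm:IB_Properties}.
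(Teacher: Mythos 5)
Your proposal is correct. For part (ii) you follow essentially the paper's route: the lower bound is quoted from Theorem~\ref{Thm:IB_Properties}, the perfect-privacy point at $r=H(X|Y)=h_\mathsf{b}(\delta)$ is exhibited by an explicit $T$ that is independent of $Y$ (your erasure indicator $E=1_{\{X=\perp\}}$ is a cleaner choice than the paper's channel $P_{T|X}(t|x)=\frac{1}{|\Y|}1_{\{t\neq\perp,x\neq\perp\}}$, $P_{T|X}(\perp|\perp)=1$, but both give $I(X;T)=h_\mathsf{b}(\delta)$ and $I(Y;T)=0$), and the matching upper bound is the chord/convexity argument, which is exactly the paper's improved bound $\pf(r)\leq (r-r_0)\frac{I(X;Y)}{H(X)-r_0}$ with $r_0=H(X|Y)$. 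For part (i), however, you take a genuinely different and in fact stronger route. The paper verifies the divergence identity $\kl(Q_Y\|P_Y)=(1-\delta)\kl(Q_X\|P_X)$ for the erasure channel and then squeezes $\pf(r)/r$ and $\ib(R)/R$ between the infimum and supremum of this ratio via Theorem~\ref{Thm:Bounds_IB_PF}. You instead prove the pointwise identity $I(Y;T)=(1-\delta)I(X;T)$ for \emph{every} admissible $P_{T|X}$, using the chain rule through the erasure indicator and the independence $E\indep(X,T)$ (which you correctly justify: $P_{E|X}$ is constant in $x$ and $T$ is conditionally independent of $Y$, hence of $E$, given $X$). Your argument shows that the entire set $\M$ degenerates to the segment $\{(a,(1-\delta)a)\}$, which immediately gives both boundaries and is self-contained, whereas the paper's proof is a two-line corollary of its earlier sandwich theorem but only yields the boundary statement. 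Both are valid; yours buys a sharper structural conclusion at the cost of not reusing the machinery already in place.
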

The bounds in Theorem~\ref{Thm:Bounds_IB_PF} hold for all $r$ and $R$ in the interval $[0, H(X)]$. We can, however, improve them when $r$ and $R$ are sufficiently small. Let $\pf'(0)$ and $\ib'(0)$ denote the slope of $\pf(\cdot)$ and $\ib(\cdot)$ at zero, i.e., $\pf'(0) \coloneqq \lim_{r\to 0^+}\frac{\pf(r)}{r}$ and $\ib'(0) \coloneqq \lim_{R\to 0^+}\frac{\ib(R)}{R}$.  
\begin{theorem}\label{Thm:Derivative_IB_PF}
    Given $(X, Y)\sim \pxy$, we have 
    \begin{align*}
    \inf_{Q_X\in \P(\X)\atop Q_X\neq P_X}\frac{\kl(Q_Y\|P_Y)}{\kl(Q_X\|P_X)}= \pf'(0)&\leq \min_{\substack{x\in \X:\\P_{X}(x)>0}}\frac{\kl(P_{Y|X}(\cdot|x)\|P_Y(\cdot))}{-\log P_X(x)} \\
    & \leq \max_{\substack{x\in \X:\\P_{X}(x)>0} }\frac{\kl(P_{Y|X}(\cdot|x)\|P_Y(\cdot))}{-\log P_X(x)} \leq \ib'(0) = \sup_{Q_X\in \P(\X)\atop Q_X\neq P_X}\frac{\kl(Q_Y\|P_Y)}{\kl(Q_X\|P_X)}.
    \end{align*}
\end{theorem}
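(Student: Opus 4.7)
The plan is to establish the two outer equalities, namely $\pf'(0) = \inf_{Q_X \neq P_X} \kl(Q_Y\|P_Y)/\kl(Q_X\|P_X)$ and $\ib'(0) = \sup_{Q_X \neq P_X} \kl(Q_Y\|P_Y)/\kl(Q_X\|P_X)$, and then to deduce the two middle inequalities by restricting the feasible set to point-mass distributions. Concretely, for any $x_0 \in \X$ with $P_X(x_0) > 0$, the Dirac $Q_X = \delta_{x_0}$ satisfies $\kl(Q_X\|P_X) = -\log P_X(x_0)$ and $Q_Y = Q_X P_{Y|X} = P_{Y|X}(\cdot|x_0)$, so the infimum over all $Q_X$ is bounded above by the minimum over point masses and the supremum is bounded below by the corresponding maximum.

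For the bounds $\pf'(0) \geq \inf$ and $\ib'(0) \leq \sup$, I would use the mixture decomposition
\begin{equation*}
I(X;T) = \sum_{t} P_T(t)\, \kl(P_{X|T=t} \| P_X), \qquad I(Y;T) = \sum_{t} P_T(t)\, \kl(P_{Y|T=t} \| P_Y),
\end{equation*}
where $P_{Y|T=t} = P_{X|T=t} P_{Y|X}$ by $Y \markov X \markov T$. Letting $Q_X^{(t)} := P_{X|T=t}$ and $Q_Y^{(t)} := Q_X^{(t)} P_{Y|X}$, the ratio $I(Y;T)/I(X;T)$ is then a weighted average of ratios $\kl(Q_Y^{(t)}\|P_Y)/\kl(Q_X^{(t)}\|P_X)$ (after dropping the terms for which $Q_X^{(t)} = P_X$, which contribute $0/0$ to both sums). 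The elementary mediant bound $(\sum p_i a_i)/(\sum p_i b_i) \in [\min_i a_i/b_i, \max_i a_i/b_i]$ for $b_i > 0$ then shows that this ratio lies between the infimum and supremum in question. Consequently $\pf(r)/r \geq \inf$ and $\ib(R)/R \leq \sup$ for every $r,R \in (0, H(X))$, and passing to the limits $r,R \to 0^+$ (which exist by Theorem~\ref{Thm:IB_Properties}, since $r \mapsto \pf(r)/r$ is non-decreasing and $R \mapsto \ib(R)/R$ is non-increasing) yields the required directional bounds.

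For the matching reverse bounds $\pf'(0) \leq \inf$ and $\ib'(0) \geq \sup$, I would exhibit an explicit low-information bottleneck. Fix any $Q_X \neq P_X$ with $\kl(Q_X\|P_X) < \infty$, so in particular $Q_X \ll P_X$. For $\alpha > 0$ smaller than $\min_{x:\, Q_X(x)>0} P_X(x)/Q_X(x)$, define a binary $T_\alpha$ by $P_{T_\alpha}(1) = \alpha$, $P_{X|T_\alpha = 1} = Q_X$, and $P_{X|T_\alpha = 0} = (P_X - \alpha Q_X)/(1-\alpha)$. This construction preserves the marginal $P_X$ (and hence the joint $P_{XY}$) and trivially satisfies $Y \markov X \markov T_\alpha$. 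A Taylor expansion around $\alpha = 0$ of the decomposition above yields
\begin{equation*}
I(X;T_\alpha) = \alpha\, \kl(Q_X\|P_X) + O(\alpha^2), \qquad I(Y;T_\alpha) = \alpha\, \kl(Q_Y\|P_Y) + O(\alpha^2).
\end{equation*}
Since $T_\alpha$ is feasible for both the $\pf$ and $\ib$ programs at compression level $r_\alpha := I(X;T_\alpha)$, we have $\pf(r_\alpha) \leq I(Y;T_\alpha) \leq \ib(r_\alpha)$. Dividing by $r_\alpha$ and letting $\alpha \to 0^+$ yields $\pf'(0) \leq \kl(Q_Y\|P_Y)/\kl(Q_X\|P_X) \leq \ib'(0)$, and taking the infimum (respectively supremum) over $Q_X$ completes the two equalities.

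The only delicate step I anticipate is justifying that the $O(\alpha^2)$ remainder in the expansion of $I(X;T_\alpha)$ does not overwhelm the leading order. This reduces to showing $\kl(R_\alpha\|P_X) = O(\alpha^2)$ for the mixture component $R_\alpha := (P_X - \alpha Q_X)/(1-\alpha)$, which is standard: one has $R_\alpha - P_X = \tfrac{\alpha}{1-\alpha}(P_X - Q_X) = O(\alpha)$, and the second-order Taylor expansion of $\kl(\cdot\|P_X)$ around $P_X$ is quadratic in the perturbation (its Hessian being the Fisher information, up to a factor). This local-quadratic behavior is precisely what allows the "$\alpha \to 0$" construction to pick out the derivative at the origin.
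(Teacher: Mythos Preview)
Your proof is correct and is more self-contained than the paper's. The paper establishes the two outer equalities by invoking external results (Anantharam et al.\ for $\ib'(0)$ and Calmon et al.\ for $\pf'(0)$), whereas you prove them directly: the mediant bound for one direction and the binary-$T_\alpha$ construction for the other are precisely the arguments underlying those cited results. For the middle inequalities, the paper builds a separate channel $P_{T|X}(1|x) = \delta\, 1_{\{x=x_0\}}$ and computes $I(X;T)$ and $I(Y;T)$ to first order in $\delta$; you instead observe that once the outer equalities are in hand, these inequalities follow immediately by specializing $Q_X$ to a point mass at $x_0$. The paper's channel is in fact your $T_\alpha$ with $Q_X = \delta_{x_0}$ and $\alpha = \delta P_X(x_0)$, so the two constructions coincide on that special case. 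Your route is unified and avoids external citations; the paper's route is shorter on the page because it offloads the hard equalities to prior work.
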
 

This theorem provides the exact values of $\pf'(0)$ and $\ib'(0)$ and also simple bounds for them.  Although the exact expressions for $\pf'(0)$ and $\ib'(0)$ are usually difficult to compute,  a simple plug-in estimator is proposed in \cite{Hypercontractivity_NIPS2017}  for $\ib'(0)$. This estimator can be readily adapted to estimate $\pf'(0)$. 
Theorem~\ref{Thm:Derivative_IB_PF} reveals a profound connection between $\ib$ and the strong data processing inequality (SDPI) \cite{Ahlswede_Gacs}. More precisely, thanks to the pioneering work of Anantharam et al. \cite{anantharam}, it is known that the supremum of $\frac{\kl(Q_Y\|P_Y)}{\kl(Q_X\|P_X)}$  over all $Q_X\neq P_X$ is equal the supremum of $\frac{I(Y; T)}{I(X; T)}$ over all $P_{T|X}$ satisfying $Y\markov X\markov T$ and hence $\ib'(0)$ specifies the strengthening of the data processing inequality of mutual information. This connection may open a new avenue for new theoretical results for $\ib$, especially when $X$ or $Y$ are continuous random variables. In particular, the recent \textit{non-multiplicative} SDPI results \cite{Polyanskiy, calmon2015strong} seem insightful for  this purpose.

In many practical cases, we might have $n$ i.i.d. samples  $(X_1, Y_1), \dots, (X_n, Y_n)$ of $(X,Y)\sim P_{XY}$. We now study how $\mathsf{IB}$ behaves in $n$.  
   Let $X^n\coloneqq (X_1, \dots, X_n)$ and $Y^n\coloneqq (Y_1, \dots, Y_n)$.  
   Due to the i.i.d.\ assumption, we have   $P_{X^nY^n}(x^n,y^n)=\prod_{i=1}^nP_{XY}(x_i, y_i)$. This can also be described by independently feeding $X_i$, $i\in [n]$, to channel $P_{Y|X}$ producing $Y_i$. The following theorem, demonstrated first in \cite[Theorem 2.4]{Witsenhausen_Wyner}, gives a formula for $\ib$ in terms of $n$. 
  \begin{theorem}[Additivity]\label{Thm:IB_Additivity}
  We have 
  $$\frac{1}{n}\mathsf{IB}(P_{X^nY^n}, nR)=\mathsf{IB}(P_{XY}, R).$$ 	
   \end{theorem}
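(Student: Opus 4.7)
My plan is to prove the two inequalities $\frac{1}{n}\mathsf{IB}(P_{X^nY^n},nR) \geq \mathsf{IB}(P_{XY},R)$ and $\frac{1}{n}\mathsf{IB}(P_{X^nY^n},nR) \leq \mathsf{IB}(P_{XY},R)$ separately. The lower bound ($\geq$) is immediate from a product construction: given any $P_{T|X}$ attaining (or approaching) $\mathsf{IB}(P_{XY},R)$, define $P_{T^n|X^n}=\prod_{i=1}^n P_{T_i|X_i}$ where each $P_{T_i|X_i}=P_{T|X}$. Independence across coordinates immediately gives $I(X^n;T^n)=nI(X;T)\leq nR$ and $I(Y^n;T^n)=nI(Y;T)$, and one also verifies $Y^n\markov X^n\markov T^n$ coordinate-wise. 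This feasible candidate shows that the per-letter IB is at least as large as the single-letter IB.

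For the upper bound ($\leq$) I will use a standard single-letterization argument. Let $T$ be any random variable satisfying $Y^n\markov X^n\markov T$ with $I(X^n;T)\leq nR$. Since $\{(X_i,Y_i)\}$ are i.i.d., $Y_1,\ldots,Y_n$ are independent, so
\[
I(Y^n;T)=\sum_{i=1}^n H(Y_i)-H(Y^n\mid T)\leq \sum_{i=1}^n\bigl(H(Y_i)-H(Y_i\mid T)\bigr)=\sum_{i=1}^n I(Y_i;T),
\]
using $H(Y^n\mid T)=\sum_i H(Y_i\mid T,Y^{i-1})\leq \sum_i H(Y_i\mid T)$. Likewise, independence of the $X_i$ yields $I(X_i;T)\leq I(X_i;T\mid X^{i-1})$, hence $\sum_{i=1}^n I(X_i;T)\leq I(X^n;T)\leq nR$. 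The next step is to check that the global Markov chain $Y^n\markov X^n\markov T$ descends to the per-coordinate chain $Y_i\markov X_i\markov T$: using $P_{Y_i|X^n,T}=P_{Y_i|X^n}=P_{Y_i|X_i}$ (the latter because the pairs are i.i.d.), one marginalizes over $X^{\setminus i}$ conditioned on $(X_i,T)$ to obtain $P_{Y_i|X_i,T}=P_{Y_i|X_i}$.

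With the per-coordinate Markov chain in hand, the definition of $\mathsf{IB}$ gives $I(Y_i;T)\leq \mathsf{IB}(P_{XY},R_i)$ where $R_i\coloneqq I(X_i;T)$. Combining the two inequalities,
\[
\frac{1}{n}I(Y^n;T)\leq \frac{1}{n}\sum_{i=1}^n \mathsf{IB}(P_{XY},R_i)\leq \mathsf{IB}\!\left(P_{XY},\frac{1}{n}\sum_{i=1}^n R_i\right)\leq \mathsf{IB}(P_{XY},R),
\]
where the second inequality is Jensen applied to the concave map $R\mapsto \mathsf{IB}(P_{XY},R)$ (Theorem~\ref{Thm:IB_Properties}) and the third uses monotonicity of $\mathsf{IB}$ together with $\frac{1}{n}\sum_i R_i\leq R$. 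Taking the supremum over feasible $T$ concludes the upper bound.

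The only non-routine step is the descent of the Markov chain $Y^n\markov X^n\markov T$ to $Y_i\markov X_i\markov T$; this is where the i.i.d.\ structure of $P_{X^nY^n}$ is essential, since without it the conditional independence would not factor through coordinates. Everything else is bookkeeping combined with concavity and monotonicity from Theorem~\ref{Thm:IB_Properties}.
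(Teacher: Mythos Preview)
Your single-letterization of the upper bound contains a sign error that breaks the argument. You write
\[
I(Y^n;T)=\sum_{i=1}^n H(Y_i)-H(Y^n\mid T)\leq \sum_{i=1}^n\bigl(H(Y_i)-H(Y_i\mid T)\bigr),
\]
and justify it via $H(Y^n\mid T)=\sum_i H(Y_i\mid T,Y^{i-1})\leq \sum_i H(Y_i\mid T)$. But this last inequality points the \emph{wrong way} for what you need: plugging it in gives $I(Y^n;T)\geq \sum_i I(Y_i;T)$, not $\leq$. Equivalently, since $Y_i\indep Y^{i-1}$, one has $I(Y_i;T\mid Y^{i-1})=I(Y_i;T,Y^{i-1})\geq I(Y_i;T)$, so the chain rule yields $I(Y^n;T)=\sum_i I(Y_i;T\mid Y^{i-1})\geq \sum_i I(Y_i;T)$. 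A concrete counterexample: take $X_i=Y_i$ i.i.d.\ $\sBer(\tfrac12)$ and $T=X_1\oplus X_2$; then $Y^2\markov X^2\markov T$, each $I(Y_i;T)=0$, yet $I(Y^2;T)=1$. Thus the bound $I(Y^n;T)\leq \sum_i I(Y_i;T)$ is simply false, and the rest of your chain (concavity, monotonicity) never gets off the ground.

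The paper's proof avoids this by \emph{enlarging} the per-coordinate auxiliary to $U_k\coloneqq (X^{k-1},T)$ rather than using $T$ alone. The chain rule on the $X$ side then gives an \emph{equality} $I(X^n;T)=\sum_k I(X_k;U_k)$, so setting $R_k=I(X_k;U_k)$ yields $\frac{1}{n}\sum_k R_k=R$ exactly. On the $Y$ side, adding $X^{k-1}$ to the conditioning only decreases $H(Y_k\mid Y^{k-1},T)$, which---combined with $H(Y_k\mid Y^{k-1},X^{k-1},T)=H(Y_k\mid X^{k-1},T)$ from the i.i.d.\ and Markov structure---gives the correct direction $I(Y^n;T)\leq \sum_k I(Y_k;U_k)$. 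The Markov chain $Y_k\markov X_k\markov U_k$ follows as you argued, and concavity finishes the job. Your lower-bound (product construction) and Markov-descent steps are fine; the fix is solely to replace $T$ by $(X^{k-1},T)$ in the upper-bound decomposition.
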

This theorem demonstrates that an optimal channel $P_{T^n|X^n}$ for i.i.d.\ samples $(X^n, Y^n)\sim \pxy$ is obtained by the Kronecker product of an optimal channel $P_{T|X}$ for $(X, Y)\sim \pxy$. This, however, may not hold in general for $\pf$, that is, we might have $\pf(P_{X^nY^n}, nr)< n \pf(\pxy, r)$, see \cite[Proposition 1]{Calmon_fundamental-Limit} for an example. 

\subsection{Gaussian $\ib$ and $\pf$}
In this section, we turn our attention to a special, yet important, case where $X = Y + \sigma N^\mathsf{G}$, where $\sigma>0$  and $N^\mathsf{G}\sim \N(0, 1)$ is independent of $Y$.
This setting subsumes the popular case of jointly Gaussian $(X, Y)$ whose information bottleneck functional was computed in \cite{Bottleneck_Gaussian} for the vector case (i.e., $(X,Y)$ are jointly Gaussian random vectors).  
\begin{lemma}\label{Lem:IB_UB_Courtade}
    Let $\{Y_i\}_{i=1}^n$ be $n$ i.i.d.\ copies of $Y\sim P_{Y}$ and $X_i = Y_i + \sigma N_i^\mathsf{G}$ where $\{N^\mathsf{G}_i\}$ are i.i.d samples of $\N(0, 1)$ independent of $Y$. Then, we have 
    $$\frac{1}{n}\ib(P_{X^nY^n}, nR)\leq H(X) - \frac{1}{2}\log\left[2\pi e\sigma^2 + e^{2(H(Y)-R)}\right].$$
\end{lemma}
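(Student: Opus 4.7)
The plan is to reduce to a single-letter bound via additivity and then apply a conditional entropy power inequality.

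First, Theorem~\ref{Thm:IB_Additivity} yields $\tfrac{1}{n}\ib(P_{X^nY^n},nR)=\ib(P_{XY},R)$, so it suffices to establish the single-letter inequality $\ib(P_{XY},R)\le H(X)-\tfrac{1}{2}\log[2\pi e\sigma^2+e^{2(H(Y)-R)}]$. Fix any $T$ satisfying $Y\markov X\markov T$ and $I(X;T)\le R$; the goal is to upper bound $I(Y;T)$ by the claimed expression.

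The crux of the argument is Courtade's strengthened entropy power inequality~\cite{EPI_Courtade}. Since $T$ is produced from $X$ by a (random) channel alone, the Markov chain $(Y,\sigma N^\sG)\markov X\markov T$ holds automatically---precisely the hypothesis of Courtade's refinement, even though $\sigma N^\sG$ is not independent of $T$. Applied to the decomposition $X=Y+\sigma N^\sG$, and then rewritten using $H(X|Y)=\tfrac{1}{2}\log(2\pi e\sigma^2)$ together with the Markov chain to convert conditional entropies into $I(X;T)$ and $I(Y;T)$, his inequality takes the clean form
\[
e^{2(H(X)-I(Y;T))}\;\geq\;e^{2(H(Y)-I(X;T))}+2\pi e\sigma^2. \qquad (\star)
\]
As a sanity check I would verify $(\star)$ directly in the jointly Gaussian case by computing conditional variances: this shows $(\star)$ holds with equality there and recovers the classical Gaussian IB formula from~\cite{Bottleneck_Gaussian}.

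Combining $(\star)$ with the constraint $I(X;T)\le R$---which gives $e^{2(H(Y)-I(X;T))}\ge e^{2(H(Y)-R)}$---yields $e^{2(H(X)-I(Y;T))}\ge 2\pi e\sigma^2+e^{2(H(Y)-R)}$. Taking logarithms and solving for $I(Y;T)$ produces the desired upper bound, and taking the supremum over admissible $T$ finishes the argument.

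The main obstacle is justifying $(\star)$: the classical conditional EPI would require $\sigma N^\sG\indep T$, which fails here because $T$ depends on $N^\sG$ through $X$. Courtade's theorem circumvents this by requiring only the weaker Markov condition $(Y,\sigma N^\sG)\markov X\markov T$, which is automatic in our setup; the subtlest part is then to translate his strengthened inequality into the clean form $(\star)$ via careful bookkeeping of conditional entropies.
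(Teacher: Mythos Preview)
Your proposal is correct and follows essentially the same approach as the paper: both reduce to the single-letter case via additivity (Theorem~\ref{Thm:IB_Additivity}) and then invoke Courtade's strengthened entropy power inequality~\cite{EPI_Courtade} to obtain the key bound $e^{2(H(X)-I(Y;T))}\geq e^{2(H(Y)-I(X;T))}+2\pi e\sigma^2$, from which the result follows by rearranging. Your write-up is in fact more detailed than the paper's, which simply cites the inequality and leaves the remaining steps to the reader.
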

It is worth noting that this result was concurrently proved in \cite{Zaidi_GaussianCase}.
The main technical tool in the proof of this lemma is a strong version of the entropy power inequality \cite[Theorem 2]{EPI_Courtade} which holds even if $X_i$, $Y_i$, and $N_i$ are random vectors (as opposed to scalar). Thus, one can readily generalize Lemma~\ref{Lem:IB_UB_Courtade} to the vector case. Note that the upper bound established in this lemma holds \textit{without any assumptions on $P_{T|X}$}. This upper bound provides a significantly simpler proof for the well-known fact that for the jointly Gaussian $(X, Y)$, the optimal channel $P_{T|X}$ is Gaussian. This result was first proved in \cite{GaussianIB} and used in \cite{Bottleneck_Gaussian} to compute an expression of $\ib$ for the Gaussian case. 

\begin{corollary}\label{Cor:Gaussian_IB}
If $(X, Y)$ are jointly Gaussian with correlation coefficient $\rho$, then we have 
\begin{equation}\label{eq:IB_Gaussian}
    \ib(R) = \frac{1}{2}\log\frac{1}{1-\rho^2+\rho^2 e^{-2R}}.
\end{equation}
Moreover, the optimal channel $P_{T|X}$ is given by $P_{T|X}(\cdot|x) = \N(0, \tilde\sigma^2)$ for $\tilde\sigma^2 = \sigma_Y^2\frac{e^{-2R}}{\rho^2(1-e^{-2R})}$ where $\sigma_Y^2$ is the variance of $Y$.
\end{corollary}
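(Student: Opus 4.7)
The plan is to sandwich $\ib(R)$ between the upper bound given by Lemma~\ref{Lem:IB_UB_Courtade} and a matching lower bound obtained by exhibiting an explicit Gaussian test channel $P_{T|X}$. Both halves reduce to short Gaussian computations, so the main work is just casting the jointly Gaussian pair into the additive-noise form required by the lemma.

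First I would reduce to the hypothesis of Lemma~\ref{Lem:IB_UB_Courtade}. WLOG $(X,Y)$ is zero-mean; the conditional representation then gives $X = \rho\frac{\sigma_X}{\sigma_Y} Y + \sigma_X\sqrt{1-\rho^2}\, N^{\sG}$ with $N^{\sG}\sim\N(0,1)$ independent of $Y$. Introducing the bijective reparameterization $Y' \coloneqq \rho\frac{\sigma_X}{\sigma_Y} Y$ and $\sigma\coloneqq \sigma_X\sqrt{1-\rho^2}$ yields $X = Y' + \sigma N^{\sG}$, and since $Y\leftrightarrow Y'$ is a bijection, $\ib(P_{XY},R)=\ib(P_{XY'},R)$. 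Applying Lemma~\ref{Lem:IB_UB_Courtade} with $n=1$ gives
$$\ib(R)\leq H(X) - \tfrac{1}{2}\log\!\bigl[2\pi e\sigma^2 + e^{2(H(Y')-R)}\bigr].$$
Plugging in $H(X)=\tfrac{1}{2}\log(2\pi e\sigma_X^2)$, $H(Y')=\tfrac{1}{2}\log(2\pi e\rho^2\sigma_X^2)$, and $\sigma^2=\sigma_X^2(1-\rho^2)$ collapses the right-hand side to exactly $\tfrac{1}{2}\log\frac{1}{1-\rho^2+\rho^2 e^{-2R}}$.

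For the matching lower bound I would exhibit $T = X + Z$ with $Z\sim\N(0,\tilde\sigma^2)$ independent of $(X,Y)$, which trivially satisfies $Y\markov X\markov T$. A direct Gaussian computation gives $I(X;T)=\tfrac{1}{2}\log(1+\sigma_X^2/\tilde\sigma^2)$; setting this equal to $R$ yields $\tilde\sigma^2=\sigma_X^2/(e^{2R}-1)=\sigma_Y^2 e^{-2R}/(\rho^2(1-e^{-2R}))$, the claimed noise variance. Since $X=Y'+\sigma N^{\sG}$, we also have $T = Y' + (\sigma N^{\sG}+Z)$, so $T$ is a Gaussian channel of $Y'$ (equivalently $Y$) with noise variance $\sigma^2+\tilde\sigma^2$, and hence
$$I(Y;T)=I(Y';T)=\tfrac{1}{2}\log\frac{\sigma_X^2+\tilde\sigma^2}{\sigma^2+\tilde\sigma^2}.$$
Substituting the solved $\tilde\sigma^2$, multiplying numerator and denominator by $(e^{2R}-1)/\sigma_X^2$, and using $1-\rho^2=\sigma^2/\sigma_X^2$ reduces this to $\tfrac{1}{2}\log\frac{1}{1-\rho^2+\rho^2 e^{-2R}}$, matching the upper bound.

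There is no real obstacle here: the only subtlety is the bijective rescaling $Y\to Y'$ needed to place the jointly Gaussian pair in the additive-noise form of Lemma~\ref{Lem:IB_UB_Courtade}, after which both the upper bound and the achievability collapse to the same closed form by routine algebra. As a side remark, the statement of the optimal channel should be read as $P_{T|X}(\cdot|x)=\N(x,\tilde\sigma^2)$ (i.e.\ $T=X+Z$ with $Z\sim\N(0,\tilde\sigma^2)$ independent of $X$); taking $P_{T|X}(\cdot|x)=\N(0,\tilde\sigma^2)$ literally would make $T$ independent of $X$ and trivialize both $I(X;T)$ and $I(Y;T)$.
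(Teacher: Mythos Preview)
Your proposal is correct and follows essentially the same route as the paper: the upper bound comes from Lemma~\ref{Lem:IB_UB_Courtade}, the lower bound from an explicit additive Gaussian test channel $T=X+\tilde\sigma M^{\sG}$, and the two coincide. The only cosmetic difference is that you make the rescaling explicit via $Y'=\rho\frac{\sigma_X}{\sigma_Y}Y$, whereas the paper writes $X=Y+\sigma N^{\sG}$ with $\sigma=\sigma_Y\sqrt{1-\rho^2}/\rho$, implicitly fixing a scaling of $X$; your remark on reading $P_{T|X}(\cdot|x)=\N(x,\tilde\sigma^2)$ is also apt.
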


In Lemma~\ref{Lem:IB_UB_Courtade}, we assumed that $X$ is a Gaussian perturbation of $Y$. However, in some practical scenarios, we might have $Y$ as a Gaussian perturbation of $X$. For instance, let $X$ represent an image and $Y$ be a feature of the image that can be perfectly obtained from a noisy observation of $X$. Then, the goal is to compress the image with a given compression rate while retaining maximal information about the feature. The following lemma,  which is an immediate consequence of \cite[Theorem 1]{calmon2015strong}, gives an upper bound for $\ib$ in this case.
\begin{lemma}\label{lemma:IB_Non_Linear_SDPI}
Let $X^n$ be $n$ i.i.d. copies of a random variable $X$ satisfying $\E[X^2] \leq 1$ and $Y_i$ be the result of passing $X_i$, $i\in [n]$, through a Gaussian channel $Y = X + \sigma N^\mathsf{G}$, where $\sigma> 0$ and $N^\mathsf{G}\sim \N(0, 1)$ is independent of $X$. Then, we have
\begin{equation}\label{Eq:IB_UB}
    \frac{1}{n}\ib(P_{X^nY^n}, nR) \leq  R-\Psi(R, \sigma),
\end{equation}
where 
\begin{equation}\label{Eq:PSI}
    \Psi(R, \sigma)\coloneqq \max_{x\in [0, \frac{1}{2}]}2\mathsf{Q}\left(\sqrt{\frac{1}{x\sigma^2}}\right)\left(R-h_\mathsf{b}(x)-\frac{x}{2}\log\left(1+\frac{1}{x\sigma^2}\right)\right),
\end{equation}
$\mathsf{Q}(t)\coloneqq \int_{t}^\infty\frac{1}{\sqrt{2\pi}}e^{-\frac{t^2}{2}}\textnormal{d}t$ is the Gaussian complimentary CDF and  $h_\mathsf{b}(a)\coloneqq -a\log(a)-(1-a)\log(1-a)$ for $a\in (0,1)$ is the binary entropy function.  
Moreover, we have   
\begin{equation}\label{eq:IB_Order}
    \frac{1}{n}\ib(P_{X^nY^n}, nR) \leq   R - e^{-\frac{1}{R\sigma^2}\log\frac{1}{R} + \Theta\left(\log\frac{1}{R}\right)}.
\end{equation}
\end{lemma}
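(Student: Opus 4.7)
The strategy has three steps: reduce to the single-letter case via additivity, invoke the non-linear SDPI of \cite{calmon2015strong} to obtain \eqref{Eq:IB_UB}, and then extract the asymptotic bound \eqref{eq:IB_Order} by evaluating $\Psi(R,\sigma)$ at a suitably chosen $x$.

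\textbf{Step 1 (Tensorization).} By the additivity property of the information bottleneck (Theorem~\ref{Thm:IB_Additivity}), we have $\tfrac{1}{n}\ib(P_{X^nY^n}, nR) = \ib(P_{XY}, R)$. Since $Y_i = X_i + \sigma N_i^\sG$ is memoryless and each $X_i$ has $\E[X_i^2]\le 1$, it suffices to prove both \eqref{Eq:IB_UB} and \eqref{eq:IB_Order} for $n=1$.

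\textbf{Step 2 (Non-linear SDPI).} Fix any $P_{T|X}$ with $Y \markov X \markov T$ and $I(X;T)\le R$. The main tool is the non-linear strong data processing inequality of \cite[Theorem 1]{calmon2015strong}, which, specialized to the Gaussian channel $Y = X + \sigma N^\sG$ under the second-moment constraint $\E[X^2]\le 1$, yields
\begin{equation*}
I(Y;T)\ \le\ I(X;T)\ -\ \Psi\bigl(I(X;T),\sigma\bigr),
\end{equation*}
with $\Psi(\cdot,\sigma)$ precisely the function in \eqref{Eq:PSI}. One then observes that $r\mapsto r-\Psi(r,\sigma)$ is non-decreasing on $[0,H(X)]$ (this follows because each inner affine function $r\mapsto 2\mathsf{Q}(1/\sqrt{x\sigma^2})\bigl[r-h_\mathsf{b}(x)-\tfrac{x}{2}\log(1+\tfrac{1}{x\sigma^2})\bigr]$ has slope $2\mathsf{Q}(1/\sqrt{x\sigma^2})\in[0,1]$, so their pointwise maximum is $1$-Lipschitz and non-decreasing). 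Hence, using $I(X;T)\le R$,
\begin{equation*}
I(Y;T)\ \le\ R-\Psi(R,\sigma).
\end{equation*}
Taking the supremum over admissible $T$ proves \eqref{Eq:IB_UB}.

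\textbf{Step 3 (Asymptotic evaluation).} To obtain \eqref{eq:IB_Order}, it remains to show $\Psi(R,\sigma)\ge \exp\!\bigl(-\tfrac{1}{R\sigma^2}\log\tfrac{1}{R}+\Theta(\log\tfrac{1}{R})\bigr)$ as $R\to 0^+$. The plan is to evaluate the objective in \eqref{Eq:PSI} at $x_\star = R/\log\tfrac{1}{R}$. For small $R$ one has, by Taylor expansion, $h_\mathsf{b}(x_\star)\sim x_\star \log(1/x_\star)$ and $\tfrac{x_\star}{2}\log\!\bigl(1+\tfrac{1}{x_\star\sigma^2}\bigr)\sim \tfrac{x_\star}{2}\log\tfrac{1}{x_\star\sigma^2}$, both of order $R\cdot\tfrac{\log\log(1/R)}{\log(1/R)}=o(R)$, so the bracketed term in \eqref{Eq:PSI} is $R(1-o(1))$. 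For the Q-function, the Mills-ratio asymptotics give $\mathsf{Q}(t)=\exp\!\bigl(-\tfrac{t^2}{2}-\log t+O(1)\bigr)$ as $t\to\infty$, so with $t=1/\sqrt{x_\star\sigma^2}$ we get $\mathsf{Q}(1/\sqrt{x_\star\sigma^2})=\exp\!\bigl(-\tfrac{1}{2x_\star\sigma^2}+\Theta(\log\tfrac{1}{R})\bigr)=\exp\!\bigl(-\tfrac{1}{2R\sigma^2}\log\tfrac{1}{R}+\Theta(\log\tfrac{1}{R})\bigr)$. Multiplying by the bracket and the factor of $2$ absorbs into the $\Theta(\log\tfrac{1}{R})$ term, yielding the claimed bound (the constant $1/2$ is absorbed into the lower-order $\Theta$ term, matching \eqref{eq:IB_Order}).

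\textbf{Main obstacle.} The substantive input is Step 2: the bound $I(Y;T)\le I(X;T)-\Psi(I(X;T),\sigma)$ itself, which we import as a black box from \cite{calmon2015strong}. The only delicate point in-house is verifying that $r\mapsto r-\Psi(r,\sigma)$ is non-decreasing so that the single-letter SDPI inequality transfers to the constrained supremum defining $\ib(R)$. Step 3 is a routine but careful asymptotic computation where one must balance the exponentially small $\mathsf{Q}$-factor against the near-linear bracket, and track logarithmic corrections so that the exponent matches \eqref{eq:IB_Order}.
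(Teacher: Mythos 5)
Your Steps 1 and 2 are sound and match the paper's (very terse) treatment: the paper simply declares the lemma an immediate consequence of \cite[Theorem 1]{calmon2015strong}, and your reduction via Theorem~\ref{Thm:IB_Additivity} plus the monotonicity of $r\mapsto r-\Psi(r,\sigma)$ (each inner affine function has slope $2\mathsf{Q}(\cdot)\le 1$, so the max increases by at most $r'-r$) is exactly the right way to make "immediate" precise.

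Step 3, however, contains a genuine computational error. With your choice $x_\star = R/\log\tfrac{1}{R}$ you claim $h_\mathsf{b}(x_\star)=o(R)$, but in fact
$x_\star\log\tfrac{1}{x_\star} = \tfrac{R}{\log(1/R)}\bigl[\log\tfrac{1}{R}+\log\log\tfrac{1}{R}\bigr] = R\bigl(1+o(1)\bigr)$:
you kept only the $R\,\tfrac{\log\log(1/R)}{\log(1/R)}$ correction and dropped the dominant term $R$. Likewise $\tfrac{x_\star}{2}\log\bigl(1+\tfrac{1}{x_\star\sigma^2}\bigr)=\tfrac{R}{2}(1+o(1))$. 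Hence the bracket in \eqref{Eq:PSI} evaluates to $R - \tfrac{3R}{2}(1+o(1)) < 0$, so your evaluation point yields a negative value of the objective and gives no lower bound on $\Psi(R,\sigma)$ at all. The fix is to take $x_\star = \tfrac{R}{2\log(1/R)}$: then $h_\mathsf{b}(x_\star)+\tfrac{x_\star}{2}\log(1+\tfrac{1}{x_\star\sigma^2}) = \tfrac{3R}{4}(1+o(1))$, the bracket is $\tfrac{R}{4}(1+o(1))>0$, and $\tfrac{1}{2x_\star\sigma^2}=\tfrac{1}{R\sigma^2}\log\tfrac{1}{R}$ reproduces exactly the exponent in \eqref{eq:IB_Order}, with the factor $\tfrac{R}{4}$ and the Mills-ratio corrections absorbed into $\Theta(\log\tfrac{1}{R})$. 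Relatedly, your closing parenthetical — that "the constant $1/2$ is absorbed into the lower-order $\Theta$ term" — is false: replacing $\tfrac{1}{R\sigma^2}\log\tfrac{1}{R}$ by $\tfrac{1}{2R\sigma^2}\log\tfrac{1}{R}$ changes the exponent by a quantity that is far larger than $\Theta(\log\tfrac{1}{R})$; it is the choice of $x_\star$, not a sloppy absorption, that must produce the correct leading constant.
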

Note that that  Lemma~\ref{lemma:IB_Non_Linear_SDPI} holds for any arbitrary $X$ (provided that $\E[X^2]\leq 1$) and hence \eqref{Eq:IB_UB} bounds information bottleneck functionals for a wide family of $P_{XY}$. 
However, the bound is loose in general for large values of $R$. For instance, if $(X, Y)$ are jointly Gaussian (implying $Y = X + \sigma N^\mathsf{G}$ for some $\sigma>0$), then the right-hand side of \eqref{Eq:IB_UB} does not reduce to \eqref{eq:IB_Gaussian}.
To show this, we numerically compute the upper bound \eqref{Eq:IB_UB} and compare it with the Gaussian information bottleneck \eqref{eq:IB_Gaussian} in Fig.~\ref{Fig:IBGaussian}.
\begin{figure}
    \centering
    \includegraphics[height =5cm, width=7cm]{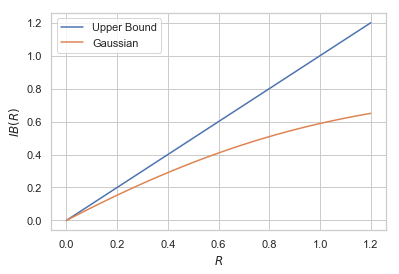}
    \caption{Comparison of \eqref{eq:IB_Gaussian}, the exact value of $\ib$ for jointly Gaussian $X$ and $Y$ (i.e., $Y = X + \sigma N^\mathsf{G}$ with $X$ and $N^\mathsf{G}$ being both standard Gaussian $\N(0, 1)$), with the general upper bound \eqref{Eq:IB_UB} for $\sigma^2 = 0.5$. It is worth noting that while the Gaussian $\ib$ converges to $I(X; Y)\approx 0.8$, the upper bound diverges.   }
    \label{Fig:IBGaussian}
\end{figure}

The privacy funnel functional is much less studied even for the simple case of jointly Gaussian. Solving the optimization in $\pf$ over $P_{T|X}$ without any assumptions is a difficult challenge. A natural assumption to make is that $P_{T|X}(\cdot|x)$ is Gaussian for each $x\in \X$. This leads to the following variant of $\pf$ $$\pf^\mathsf{G}(r)\coloneqq \inf_{\substack{\sigma\geq 0,\\I(X; T_\sigma)\geq r}} I(Y; T_\sigma),$$ 
where
$$T_\sigma \coloneqq X + \sigma N^\mathsf{G},$$
and 
$N^\mathsf{G}\sim \N(0, 1)$ is independent of $X$.
This formulation is tractable and can be computed in closed form for jointly Gaussian $(X,Y)$ as described in the following example.
\begin{example}\label{example:PF_Gaussian}
Let $X$ and $Y$ be jointly Gaussian with correlation coefficient $\rho$. First note that since mutual information is invariant to scaling, we may assume without loss of generality that both $X$ and $Y$ are zero mean and unit variance and hence we can write 
$X = \rho Y + \sqrt{1-\rho^2} M^\mathsf{G}$ where $M^\mathsf{G}\sim \N(0,1)$ is independent of $Y$. Consequently, we have 
\begin{equation}\label{IXT_Gaussian}
I(X; T_\sigma) = \frac{1}{2}\log\left(1+\frac{1}{\sigma^2}\right),
\end{equation}
and 
\begin{equation}\label{IYT_Gaussian}
    I(Y; T_\sigma) = \frac{1}{2}\log\left(1+\frac{\rho^2}{1-\rho^2+\sigma^2}\right).
\end{equation}
In order to ensure $I(X; T_\sigma)\geq r$, we must have $\sigma \leq  \left(e^{2r}-1\right)^{-\frac{1}{2}}$. Plugging this choice of $\sigma$ into \eqref{IYT_Gaussian}, we obtain 
\begin{equation}\label{Eq:PF_Gaussian}
    \pf^\mathsf{G}(r) = \frac{1}{2}\log\left(\frac{1}{1-\rho^2\left(1-e^{-2r}\right)}\right).
\end{equation}
\end{example}
This example indicates that for jointly Gaussian $(X,Y)$, we have $\pf^\mathsf{G}(r)=0$ if and only if  $r=0$ (thus perfect privacy does not occur) and the constraint $I(X; T_\sigma)= r$ is satisfied by a unique $\sigma$. These two properties in fact hold for all continuous variables $X$ and $Y$ with finite second moments as demonstrated in  Lemma~\ref{Lemma:StrictCon_IYT} in Appendix~\ref{Appendix_ProofSecIB_PF}. We use these properties to derive a second-order approximation of $\pf^\sG(r)$ when $r$ is sufficiently small. For the following theorem, we use $\var(U)$ to denote the variance of the random variable $U$ and $\var(U|V)\coloneqq \E[(U-\E[U|V])^2|V]$. We use $\sigma^2_X = \var(X)$ for short.

\begin{theorem}\label{Thm:PF_Approximation}
   For any pair of continuous random variables $(X,Y)$ with finite second moments, we have as $r\to 0$
   $$\pf^\mathsf{G}(r) = \eta(X,Y) r + \Delta(X,Y) r^2 + o(r^2),$$
   where 
   $\eta(X, Y) \coloneqq \frac{\var(\E[X|Y])}{\sigma_X^2}$ and 
   $$\Delta(X, Y)\coloneqq \frac{2}{\sigma^4_X}\left[\E[\var^2(X|Y)] - \sigma_X^2\E[\var(X|Y)]\right].$$
\end{theorem}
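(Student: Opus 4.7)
The first step is to reduce the optimization to a single-parameter curve. By Lemma~\ref{Lemma:StrictCon_IYT}, $\sigma\mapsto I(X;T_\sigma)$ is strictly decreasing; applying the data-processing inequality to the Markov chain $Y\to X\to T_{\sigma'}\to T_\sigma$ (which holds for $\sigma'<\sigma$ because $T_\sigma$ equals $T_{\sigma'}$ plus independent Gaussian noise of variance $\sigma^2-\sigma'^2$) shows $\sigma\mapsto I(Y;T_\sigma)$ is also strictly decreasing. Hence the infimum defining $\pf^\sG(r)$ is attained at the unique $\sigma(r)$ satisfying $I(X;T_{\sigma(r)})=r$, and $r\to 0^+$ corresponds to $\sigma(r)\to\infty$. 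I reparametrize by the signal-to-noise ratio $\tau:=1/\sigma^2\to 0^+$ and write $Z_\tau:=\sqrt{\tau}\,X+N$, so that by scale invariance of mutual information $I(X;T_\sigma)=I(X;Z_\tau)$ and $I(Y;T_\sigma)=I(Y;Z_\tau)$.

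The second step is the decomposition $I(Y;Z_\tau)=I(X;Z_\tau)-I(X;Z_\tau|Y)$, which follows from the Markov relation $Y\to X\to Z_\tau$ (giving $I(Y;Z_\tau|X)=0$). Both summands admit the I--MMSE representation
\[
I(X;Z_\tau)=\tfrac{1}{2}\int_0^\tau \mmse(X,s)\,ds,\qquad I(X;Z_\tau|Y)=\tfrac{1}{2}\int_0^\tau \E_Y\!\bigl[\mmse(X|Y,s)\bigr]\,ds,
\]
where the conditional MMSE at level $y$ is the MMSE of estimating $X'\sim P_{X|Y=y}$ from $\sqrt{s}X'+N$ (a legitimate object because $Z_\tau|Y{=}y$ is equal in law to $\sqrt{\tau}X'+N$). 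A Hermite expansion of the Gaussian likelihood ratio $\phi(z-\sqrt{s}x)/\phi(z)=\sum_k\tfrac{(\sqrt{s}x)^k}{k!}H_k(z)$ inserted in Bayes' rule for $\E[X|Z_s]$ produces the universal low-SNR expansion $\mmse(X,s)=\sigma_X^2-s\,\sigma_X^4+O(s^2)$; applied pointwise in $y$ under $P_{X|Y=y}$ it gives $\mmse(X|Y=y,s)=\var(X|Y=y)-s\,\var(X|Y=y)^2+O(s^2)$.

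Integrating in $s$, taking expectation over $Y$, subtracting, and using $\sigma_X^2=\E[\var(X|Y)]+\var(\E[X|Y])$ leads to
\[
I(X;Z_\tau)=\tfrac{\sigma_X^2\tau}{2}-\tfrac{\sigma_X^4\tau^2}{4}+O(\tau^3),\qquad I(Y;Z_\tau)=\tfrac{\var(\E[X|Y])\,\tau}{2}+\tfrac{\E[\var^2(X|Y)]-\sigma_X^4}{4}\,\tau^2+O(\tau^3).
\]
Series reversion of the first relation yields $\tau(r)=\tfrac{2r}{\sigma_X^2}+\tfrac{2r^2}{\sigma_X^2}+O(r^3)$; substituting into the second and collecting powers of $r$ produces the Taylor expansion of $\pf^\sG(r)$, with the linear coefficient simplifying to $\eta(X,Y)=\var(\E[X|Y])/\sigma_X^2$ and the quadratic coefficient taking the stated form of $\Delta(X,Y)$.

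The main obstacle is the rigorous justification of the two MMSE expansions to $O(s^2)$, and especially the interchange of the $s\to 0$ asymptotics with the expectation over $Y$ when passing from $\mmse(X|Y=y,s)$ to $\E_Y[\mmse(X|Y,s)]$. The bound $\mmse(X|Y=y,s)\le\var(X|Y=y)$, which is $P_Y$-integrable under the finite-second-moment hypothesis, controls the leading term by dominated convergence; securing the second-order term and the $o(r^2)$ remainder after the $\tau\mapsto r$ inversion requires an integrable envelope on $\var^2(X|Y)$ (morally a finite fourth moment condition), which is where the uniformity of the Hermite estimates enters.
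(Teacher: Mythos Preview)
Your approach is correct and shares the paper's two key ingredients: the Markov identity $I(Y;T_\sigma)=I(X;T_\sigma)-I(X;T_\sigma\mid Y)$ and the I--MMSE relationship. The execution differs slightly. The paper works in the variable $\sigma^2$, applies the differential I--MMSE formula together with the MMSE second-derivative identity $\frac{d}{d(\sigma^2)}\mmse(X|T_\sigma)=\sigma^{-4}\E[\var^2(X|T_\sigma)]$ (cited from Guo--Shamai--Verd\'u and Guo--Wu), and then uses implicit differentiation of $I(X;T_{\sigma_r})=r$ to obtain closed forms for $\frac{d}{dr}\pf^{\sG}(r)$ and $\frac{d^2}{dr^2}\pf^{\sG}(r)$, which it evaluates at $r=0$ (i.e., $\sigma\to\infty$). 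You instead parametrize by the SNR $\tau=1/\sigma^2$, use the integral I--MMSE form with the low-SNR expansion $\mmse(X,s)=\sigma_X^2-s\,\sigma_X^4+O(s^2)$ derived via a Hermite expansion, and then perform series reversion of $r=I(X;Z_\tau)$. These are equivalent computations of the same Taylor coefficients; your route is somewhat more self-contained, while the paper's is shorter once the MMSE derivative formula is taken as given. Your closing remark that controlling the second-order remainder uniformly in $Y$ tacitly requires $\E[\var^2(X|Y)]<\infty$ is a regularity point the paper's proof does not address.
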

It is worth mentioning that the quantity $\eta(X, Y)$ was first defined by R\'enyi \cite{Renyi-dependence-measure} as an asymmetric measure of correlation between $X$ and $Y$. In fact, it can be shown that
$\eta(X, Y) = \sup_{f} \rho^2(X, f(Y)),$
where supremum is taken over all measurable functions $f$ and $\rho(\cdot, \cdot)$ denotes the correlation coefficient. 
As a simple illustration of Theorem~\ref{Thm:PF_Approximation}, consider jointly Gaussian $X$ and $Y$ with correlation coefficient $\rho$ for which $\pf^\sG$ was computed in Example~\ref{example:PF_Gaussian}. In this case, it can be easily verified that $\eta(X, Y) = \rho^2$ and  
$\Delta(X, Y)= -2\sigma_X^2\rho^2(1-\rho^2)$. Hence, for jointly Gaussian $(X, Y)$ with correlation coefficient $\rho$ and unit variance, we have 
$\pf^\sG(r) = \rho^2 r - 2\rho^2(1-\rho^2) r^2 + o(r^2)$. In Fig.~\ref{fig:PF_Gaussian}, we compare the approximation given in Theorem~\ref{Thm:PF_Approximation} for this particular case. 
\begin{figure}
    \centering
    \includegraphics[height =5cm, width=7cm]{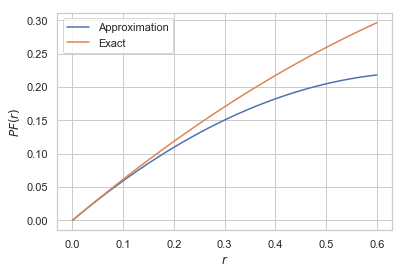}
    \caption{Second-order approximation of $\pf^\sG$ according to Theorem~\ref{Thm:PF_Approximation}  for jointly Gaussian $X$ and $Y$ with correlation coefficient $\rho=0.8$. For this particular case, the exact expression of $\pf^\sG$ is computed in \eqref{Eq:PF_Gaussian}. }
    \label{fig:PF_Gaussian}
\end{figure}

\subsection{Evaluation of $\ib$ and $\pf$}\label{Sec:Evaluation}
The constrained optimization problems in the definitions of $\ib$ and $\pf$ are usually challenging to solve numerically due to the non-linearity in the constraints. In practice, however, both $\ib$ and $\pf$ are often approximated by their corresponding Lagrangian optimizations
\begin{equation}\label{eq:Largrangian_IB}
    \L_{\ib}(\beta)\coloneqq \sup_{P_{T|X}} I(Y; T) - \beta I(X; T) = H(Y)-\beta H(X) - \inf_{P_{T|X}} \left[H(Y|T) - \beta H(X|T)\right],
\end{equation}
and 
\begin{equation}\label{eq:Largrangian_PF}
    \L_{\pf}(\beta)\coloneqq \inf_{P_{T|X}} I(Y; T) - \beta I(X; T) = H(Y)-\beta H(X) - \sup_{P_{T|X}} \left[H(Y|T) - \beta H(X|T)\right],
\end{equation}
where  $\beta\in \R_+$ is the Lagrangian multiplier that controls the tradeoff between compression and informativeness in for $\ib$ and the privacy and informativeness in $\pf$. Notice that for the computation of $\L_\ib$, we can assume, without loss of generality, that $\beta\in [0,1]$ since otherwise the maximizer of 
\eqref{eq:Largrangian_IB} is trivial. 
It is worth noting that $\L_\ib(\beta)$ and $\L_\pf(\beta)$ in fact correspond to  lines of slope $\beta$ supporting $\M$ from above and below, thereby providing a new representation of $\M$.

Let $(X', Y')$ be a pair of random variables with $X'\sim Q_X$ for some $Q_X\in \P(\X)$ and $Y'$ is the output of $P_{Y|X}$ when the input is $X'$ (i.e., $Y'\sim Q_XP_{Y|X}$). Define
$$F_\beta(Q_X)\coloneqq H(Y') - \beta H(X').$$
This function, in general, is neither convex nor concave in $Q_X$. For instance, $F(0)$ is concave and $F(1)$ is convex in $P_X$. The 
lower convex envelope (resp. upper concave envelope) of $F_\beta(Q_X)$ is defined as the largest (resp. smallest) convex (resp. concave) smaller (larger) than $F_\beta(Q_X)$. Let $\conv[F_\beta(Q_X)]$ and $\conc[F_\beta(Q_X)]$ denote the lower convex and upper concave envelopes of $F_\beta(Q_X)$, respectively.  If $F_\beta(Q_X)$ is convex at $P_X$, that is  $\conv[F_\beta(Q_X)]\big|_{P_X} = F_\beta(P_X)$, then $F_\beta(Q_X)$ remains convex at $P_X$ for all $\beta'\geq \beta$ because 
\begin{align*}
    \conv[F_{\beta'}(Q_X)] & = \conv[F_{\beta}(Q_X) - (\beta'-\beta)H(X')] \\
    & \geq  \conv[F_{\beta}(Q_X)] + \conv[- (\beta'-\beta)H(X')]\\
    & = \conv[F_{\beta}(Q_X)] - (\beta'-\beta)H(X'),
\end{align*}
where the last equality follows from the fact that $-(\beta'-\beta)H(X)$ is convex.
Hence, at $P_X$ we have 
$$ \conv[F_{\beta'}(Q_X)]\big|_{P_X} \geq \conv[F_{\beta}(Q_X)]\big|_{P_X} - (\beta'-\beta)H(X) = F_{\beta}(P_X) -  (\beta'-\beta)H(X) = F_{\beta'}(P_X).$$
Analogously, if $F_\beta(Q_X)$ is concave at $P_X$, that is  $\conc[F_\beta(Q_X)]\big|_{P_X} = F_\beta(P_X)$, then $F_\beta(Q_X)$ remains concave at $P_X$ for all $\beta'\leq \beta$.

Notice that, according to 
\eqref{eq:Largrangian_IB} and \eqref{eq:Largrangian_PF}, we can write  
\begin{equation}\label{eq:Largrangian_IB2}
    \L_\ib(\beta) = H(Y) - \beta H(X) -\conv[F_\beta(Q_X)]\big|_{P_X}, 
\end{equation}
and 
\begin{equation}\label{eq:Largrangian_PF2}
    \L_\pf(\beta) =  H(Y) - \beta H(X) -\conc[F_\beta(Q_X)]\big|_{P_X}. 
\end{equation}
In light of the above arguments, we can write $$\L_\ib(\beta) = 0,$$
for all $\beta> \beta_\ib$ where  $\beta_\ib$ is the smallest $\beta$ such that $F_\beta(P_X)$ touches $\conv[F_\beta(Q_X)]$. Similarly, 
$$\L_\pf(\beta) = 0,$$
for all $\beta< \beta_\pf$ where  $\beta_\pf$ is the largest $\beta$ such that $F_\beta(P_X)$ touches $\conc[F_\beta(Q_X)]$.
In the following theorem, we show that $\beta_\ib$ and $\beta_\pf$ are given by the values of $\ib'(0)$ and $\pf'(0)$, respectively, given in  Theorem~\ref{Thm:Derivative_IB_PF}. A similar formulae $\beta_\ib$ and $\beta_\pf$ were given in \cite{Learability_2019}.

\begin{proposition}\label{Prop:beta_ib}
We have, 
$$\beta_\ib = \sup_{Q_X\neq P_X}\frac{\kl(Q_Y\|P_Y)}{\kl(Q_X\|P_X)},$$
and 
$$\beta_\pf = \inf_{Q_X\neq P_X}\frac{\kl(Q_Y\|P_Y)}{\kl(Q_X\|P_X)}.$$
\end{proposition}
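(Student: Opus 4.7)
The plan is to identify $\L_\ib(\beta)$ and $\L_\pf(\beta)$ as Legendre--type transforms of the scalar functions $R\mapsto\ib(R)$ and $r\mapsto\pf(r)$, and then read off $\beta_\ib$ and $\beta_\pf$ as the one-sided slopes $\ib'(0^+)$ and $\pf'(0^+)$. This reduces Proposition~\ref{Prop:beta_ib} to a statement already established in Theorem~\ref{Thm:Derivative_IB_PF}, which identifies those slopes with the $\sup$/$\inf$ of $\kl(Q_Y\|P_Y)/\kl(Q_X\|P_X)$. The real content of the proposition therefore sits in Theorem~\ref{Thm:Derivative_IB_PF}; what remains is a short convex-duality argument.

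The first step is to rewrite the Lagrangians as one-variable optimizations. By the last item of Theorem~\ref{Thm:IB_Properties}, the inequality constraints defining $\ib(R)$ and $\pf(r)$ can be replaced by equalities, so if I partition the unconstrained optimization in \eqref{eq:Largrangian_IB} (resp.~\eqref{eq:Largrangian_PF}) according to the level set $I(X;T)=R$ (resp.~$I(X;T)=r$), then
\begin{equation*}
\L_\ib(\beta)\;=\;\sup_{R\in[0,H(X)]}\bigl[\ib(R)-\beta R\bigr],\qquad \L_\pf(\beta)\;=\;\inf_{r\in[0,H(X)]}\bigl[\pf(r)-\beta r\bigr].
\end{equation*}
Continuity of $\ib$ and $\pf$ on $[0,H(X)]$ (Theorem~\ref{Thm:IB_Properties}) makes these scalar optimizations well-posed.

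The second step reads off $\beta_\ib$ and $\beta_\pf$ from the shapes of $\ib$ and $\pf$. By Theorem~\ref{Thm:IB_Properties}, $R\mapsto\ib(R)$ is concave with $\ib(0)=0$ and $R\mapsto\ib(R)/R$ is non-increasing; hence $\ib'(0^+)=\sup_{R>0}\ib(R)/R$, and $\ib(R)-\beta R\leq 0$ for every $R\in[0,H(X)]$ iff $\beta\geq \ib'(0^+)$. So the supremum above equals $0$ (attained at $R=0$) exactly when $\beta\geq\ib'(0^+)$, and is strictly positive otherwise, giving $\beta_\ib=\ib'(0^+)$. Symmetrically, convexity of $\pf$ with $\pf(0)=0$ and monotonicity of $r\mapsto \pf(r)/r$ yield $\pf'(0^+)=\inf_{r>0}\pf(r)/r$, and $\pf(r)-\beta r\geq 0$ for all $r$ iff $\beta\leq \pf'(0^+)$, so $\beta_\pf=\pf'(0^+)$. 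Finally I invoke Theorem~\ref{Thm:Derivative_IB_PF}, which states precisely that $\ib'(0^+)=\sup_{Q_X\neq P_X}\kl(Q_Y\|P_Y)/\kl(Q_X\|P_X)$ and $\pf'(0^+)=\inf_{Q_X\neq P_X}\kl(Q_Y\|P_Y)/\kl(Q_X\|P_X)$, yielding both claims. There is no genuine obstacle here: the only care needed is in checking that the one-sided slopes $\ib'(0^+),\pf'(0^+)$ exist as elements of $[0,\infty]$, which follows from the monotonicity of the ratios $\ib(R)/R$ and $\pf(r)/r$ guaranteed by Theorem~\ref{Thm:IB_Properties}.
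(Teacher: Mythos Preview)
Your proof is correct and follows the same approach the paper indicates: the text immediately preceding the proposition states that $\beta_\ib$ and $\beta_\pf$ are given by $\ib'(0)$ and $\pf'(0)$, which are then identified with the KL-ratio expressions via Theorem~\ref{Thm:Derivative_IB_PF}. The paper does not spell out the convex-duality step linking the vanishing of $\L_\ib(\beta)$ and $\L_\pf(\beta)$ to those one-sided slopes; your rewriting of the Lagrangians as $\sup_R[\ib(R)-\beta R]$ and $\inf_r[\pf(r)-\beta r]$ and the subsequent monotonicity-of-ratio argument supply exactly that missing detail.
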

 Kim et al. \cite{Hypercontractivity_NIPS2017} have recently proposed an efficient algorithm to  estimate $\beta_\ib$ from samples of $\pxy$ involving a simple optimization problem.  This algorithm can be readily adapted for estimating $\beta_\pf$. 
Proposition~\ref{Prop:beta_ib} implies that in optimizing the Lagrangians \eqref{eq:Largrangian_IB2} and \eqref{eq:Largrangian_PF2}, we can restrict the Lagrange multiplier $\beta$, that is
\begin{equation}\label{eq:Largrangian_IB3}
    \L_\ib(\beta) = H(Y) - \beta H(X) -\conv[F_\beta(Q_X)]\big|_{P_X}, \qquad \text{for}\qquad \beta \in [0, \beta_\ib],
\end{equation}
and 
\begin{equation}\label{eq:Largrangian_PF3}
    \L_\pf(\beta) =  H(Y) - \beta H(X) -\conc[F_\beta(Q_X)]\big|_{P_X}, \qquad \text{for}\qquad \beta \in [\beta_\pf, \infty). 
\end{equation}
\begin{remark}
As demonstrated by Kolchinsky et al. \cite{kolchinsky2018caveats}, the boundary points $0$ and $\beta_\ib$ are required for the computation of $\L_\ib(\beta)$. In fact,  when $Y$ is a deterministic function of $X$, 
then \emph{only} $\beta = 0$ and $\beta = \beta_\ib$ are required to compute the $\ib$ and other values of $\beta$ are vacuous. The same argument can also be used to justify the inclusion of $\beta_\pf$ in computing $\L_\pf(\beta)$.  Note also that since $F_\beta(Q_X)$ becomes convex for $\beta>\beta_\ib$, computing $\conc[F_\beta(Q_X)]$ becomes trivial for such values of $\beta$.  
\end{remark}

\begin{remark}
Observe that the lower convex envelope of any function $f$ can be obtained by taking Legendre-Fenchel transformation (aka.\ convex conjugate) twice. Hence, one can use the existing linear-time algorithms for approximating Legendre-Fenchel transformation (e.g., \cite{Legendre_transformation_alg, Legendre_transformation_alg2}) for approximating  
$\conv[F_\beta(Q_X)]$.   
\end{remark}

Once $\L_\ib(\beta)$ and $\L_\pf(\beta)$ are computed, we can derive $\ib$ and $\pf$ via standard results in optimization (see \cite[Section IV]{Witsenhausen_Wyner} for more details): 
\begin{equation}\label{eq:Lagrangian_converse1}
    \ib(R) = \inf_{\beta\in [0, \beta_\ib]}\beta R + \L_\ib(\beta),
\end{equation}
and 
\begin{equation}\label{eq:Lagrangian_converse2}
    \pf(r) = \sup_{\beta\in [\beta_\pf, \infty]}\beta r + \L_\pf(\beta).
\end{equation}
Following the convex analysis approach outlined by Witsenhausen and Wyner \cite{Witsenhausen_Wyner}, $\ib$ and $\pf$ can be directly computed from $\L_\ib(\beta)$ and $\L_\pf(\beta)$ by observing the following.  
Suppose for some $\beta$,  $\conv[F_\beta(Q_X)]$ (resp. $\conc[F_\beta(Q_X)]$) at $P_X$ is obtained by a convex combination of points $F_\beta(Q^{i})$, $i\in [k]$ for some $Q^1, \dots, Q^k$ in $\P(\X)$, integer $k\geq 2$, and weights $\lambda_i\geq 0$ (with $\sum_{i}\lambda_i=1$).  Then $\sum_{i}\lambda_i Q^i = P_X$, and $T^*$ with properties $P_{T^*}(i) = \lambda_i$ and  $P_{X|T^* = i} = Q^i$  attains the minimum (resp. maximum) of $H(Y|T) - \beta H(X|T)$. 
Hence, $(I(X;T^*), I(Y;T^*))$ is a point on the upper (resp. lower) boundary of $\M$; implying that $\ib(R) = I(Y; T^*)$ for $R = I(X; T^*)$ (resp. $\pf(r) = I(Y; T^*)$ for $r = I(X; T^*)$). 
If for some $\beta$, $\conv[F_\beta(Q_X)]$ at $P_X$ coincides with $F_\beta[P_X]$, then this corresponds to $\L_\ib(\beta) = 0$. The same holds for $\conv[F_\beta(Q_X)]$. Thus, all the information about the functional $\ib$ (resp. $\pf$) is contained in the subset of the domain of $\conv[F_\beta(Q_X)]$ (resp. $\conc[F_\beta(Q_X)]$) over which it differs from $F_\beta(Q_X)$. 
We will revisit and generalize this approach later in Section~\ref{Sec:Family}.

We can now instantiate this for the binary symmetric case. 
Suppose $X$ and $Y$ are binary variables and $P_{Y|X}$ is binary symmetric channel with crossover probability $\delta$, denoted by $\bsc(\delta)$ and defined as 
	\begin{equation}\label{BSC}
	    \bsc(\delta)=\begin{bmatrix}
	1-\delta&\delta\\
	\delta & 1-\delta
	\end{bmatrix},
	\end{equation}
	for some $\delta\geq 0$. To describe the result in a compact fashion, we introduce the following notation: we let $h_\mathsf{b}:[0,1]\to [0,1]$ denote the binary entropy function, i.e.,  $h_\mathsf{b}(p)=-p\log p-(1-p)\log(1-p)$. Since this function is strictly increasing $[0,\frac{1}{2}]$, its inverse exists and is denoted by $h^{-1}_\mathsf{b}:[0,1]\to [0,\frac{1}{2}]$. Also, $a*b\coloneqq a(1-b)+b(1-a)$ for $a,b\in [0,1]$.

\begin{lemma}[Mr. and Mrs. Gerber's Lemma] \label{lemma: Gerber}
    For $X\sim \sBer(p)$ for $p\leq \frac{1}{2}$ and $P_{Y|X} = \bsc(\delta)$ for $\delta\geq 0$, we have 
    	\begin{equation}\label{eq:MRsGL}
    	    \mathsf{IB}(R)=h_\mathsf{b}(p*\delta) - h_\mathsf{b}\left(\delta* h_\mathsf{b}^{-1}\big(h_\mathsf{b}(p) -R\big)\right),
    	\end{equation}
    	and 
    	\begin{equation}\label{eq:MRGL}
\pf(r) = h_\mathsf{b}(p*\delta) - \alpha h_\mathsf{b}\left ( \delta * \frac{p}{z} \right ) - \bar{\alpha}h_\mathsf{b}\left ( \delta \right ),
\end{equation}
where $r = h_\mathsf{b}(p) - \alpha h_\mathsf{b}\left ( \frac{p}{z} \right )$, $z = \max\left ( \alpha, 2p \right )$, and $\alpha\in [0,1]$.
\end{lemma}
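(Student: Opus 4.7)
The plan is to apply the Witsenhausen--Wyner convex analytic framework of Section~\ref{Sec:Evaluation}. For binary $X$, every admissible $Q_X$ is parametrized by a scalar $q=Q_X(1)\in[0,1]$, so that $Q_Y(1)=q*\delta$ and
\eq{F_\beta(Q_X)=h_\mathsf{b}(q*\delta)-\beta h_\mathsf{b}(q).}
This function is symmetric under $q\leftrightarrow 1-q$, with $F_\beta(0)=F_\beta(1)=h_\mathsf{b}(\delta)$ and $F_\beta(1/2)=1-\beta$. Equivalently, for any admissible $T$ with conditionals $P_{X|T=t}=\sBer(q_t)$, the $\ib$ and $\pf$ problems become the supremum (respectively infimum) of $\sum_t P_T(t)h_\mathsf{b}(q_t*\delta)$ over all finitely supported $\{(P_T(t),q_t)\}$ satisfying the marginal constraint $\sum_t P_T(t)q_t=p$ and the entropy constraint $\sum_t P_T(t) h_\mathsf{b}(q_t)=h_\mathsf{b}(p)-R$ (respectively $h_\mathsf{b}(p)-r$); the task is then to identify the extremal support of $T$.

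For Mrs.\ Gerber \eqref{eq:MRsGL}, I would invoke the classical Wyner--Ziv fact that $\phi_\delta(u)\coloneqq h_\mathsf{b}(\delta*h_\mathsf{b}^{-1}(u))$ is convex and non-decreasing on $[0,1]$. The converse is then a one-line Jensen step: $\sum_t P_T(t)h_\mathsf{b}(q_t*\delta)=\sum_t P_T(t)\phi_\delta(h_\mathsf{b}(q_t))\geq\phi_\delta(H(X|T))\geq\phi_\delta(h_\mathsf{b}(p)-R)$ when $I(X;T)\leq R$, which gives $\ib(R)\leq h_\mathsf{b}(p*\delta)-\phi_\delta(h_\mathsf{b}(p)-R)$. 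For achievability I would take $T$ binary with $q_0=\tau$, $q_1=1-\tau$, where $\tau\coloneqq h_\mathsf{b}^{-1}(h_\mathsf{b}(p)-R)\in[0,p]$, and with $\mu_0=(1-\tau-p)/(1-2\tau)$, $\mu_1=(p-\tau)/(1-2\tau)$ chosen to meet the marginal. The involutive symmetry of $h_\mathsf{b}(\cdot)$ and $h_\mathsf{b}(\cdot*\delta)$ under $q\leftrightarrow 1-q$ then gives $H(X|T)=h_\mathsf{b}(\tau)$ and $H(Y|T)=h_\mathsf{b}(\tau*\delta)$, saturating the converse.

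For Mr.\ Gerber \eqref{eq:MRGL}, the task is to evaluate $\conc[F_\beta](p)$ explicitly. A direct curvature computation yields $F''_\beta(q)\propto \beta/(q(1-q))-(1-2\delta)^2/((q*\delta)(1-q*\delta))$, which is positive near $q\in\{0,1\}$ for every $\beta>0$ but negative on a symmetric neighborhood of $q=1/2$ whenever $\beta<(1-2\delta)^2$. For such $\beta$, the upper concave envelope is piecewise: affine on $[0,q^*(\beta)]$ (the tangent from $(0,h_\mathsf{b}(\delta))$ to $F_\beta$), equal to $F_\beta$ on $[q^*,1-q^*]$, and mirrored on $[1-q^*,1]$. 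Parametrizing $\alpha\coloneqq p/q^*\in[2p,1]$, the extremal $T$ for $p<q^*$ is binary with $q_0=0$, $q_1=p/\alpha$, $P_T(1)=\alpha$; the marginal is then automatic, $I(X;T)=h_\mathsf{b}(p)-\alpha h_\mathsf{b}(p/\alpha)$, and the resulting $I(Y;T)$ reproduces \eqref{eq:MRGL} with $z=\alpha$. When $\alpha<2p$, the binary choice becomes infeasible ($p/\alpha>1/2$), and the envelope geometry forces a ternary $T$ supported on $\{0,1/2,1\}$ with $P_T(1/2)=\alpha$; the marginal and entropy constraints pin down $\mu_0,\mu_2$ uniquely, and since $h_\mathsf{b}(\delta*1/2)=1=h_\mathsf{b}(p/z)$ with $z=2p$, the same formula holds. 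Unifying via $z=\max(\alpha,2p)$ then covers both regimes.

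The main obstacle is the case analysis for the $\pf$ concave envelope: verifying \emph{globally} (not merely locally) that the tangent-line construction is the true upper concave envelope of $F_\beta$, identifying $\alpha=2p$ as exactly the transition value where the extremal support changes from two to three atoms, and checking strict monotonicity of $r\mapsto\alpha(r)$ on each regime so that $\alpha$ genuinely parametrizes the $\pf$ curve. Once convexity of $\phi_\delta$ is in hand, the Mrs.\ Gerber part reduces to a single Jensen inequality and an explicit construction.
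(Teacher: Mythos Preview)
Your proposal is correct and follows essentially the same route as the paper. For \eqref{eq:MRsGL} the paper simply cites Wyner--Ziv, and your Jensen argument via the convexity of $\phi_\delta(u)=h_\mathsf{b}(\delta*h_\mathsf{b}^{-1}(u))$ is precisely that classical proof. For \eqref{eq:MRGL} the paper carries out the same Witsenhausen--Wyner envelope analysis of $F_\beta(q)=h_\mathsf{b}(q*\delta)-\beta h_\mathsf{b}(q)$ that you outline: it distinguishes the three regimes $F_\beta(\tfrac12)\lessgtr F_\beta(0)$, obtaining in the last regime a binary $T^*$ with $P_{X|T^*=0}=\sBer(0)$, $P_{X|T^*=1}=\sBer(p/\alpha)$, $P_{T^*}=\sBer(\alpha)$ for $\alpha\in[2p,1]$, and at the critical $\beta$ (where $F_\beta(\tfrac12)=F_\beta(0)$) a ternary $T^*$ supported on $\{0,\tfrac12,1\}$ with weights $(1-p-\tfrac{\alpha}{2},\,p-\tfrac{\alpha}{2},\,\alpha)$ for $\alpha\in[0,2p]$---exactly your two cases unified by $z=\max(\alpha,2p)$. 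The only organizational difference is that the paper indexes the cases by $\beta$ and reads off $T^*$, whereas you parametrize directly by $\alpha$ (equivalently by the tangency point $q^*$); the resulting $T^*$ and the final formula are identical.
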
	
The result in \eqref{eq:MRsGL} was proved by Wyner and Ziv \cite{Gerber} and is widely known as \textit{Mrs. Gerber's Lemma} in information theory. Due to the similarity, we refer to \eqref{eq:MRGL} as \textit{Mr. Gerber's Lemma}. 
As described above, to prove  \eqref{eq:MRsGL} and \eqref{eq:MRGL} it suffices to derive the 
convex and concave envelopes of the mapping $F_\beta:[0,1]\to \R$ given by 
\eqn{F_beta_MRGL}{F_\beta(q)\coloneqq F_\beta(Q_X) = h_\mathsf{b}(q*\delta) - \beta h_\mathsf{b}(q),}
where $q*\delta\coloneqq q\bar\delta + \delta\bar q$ is the output distribution of $\bsc(\delta)$ when the input distribution is $\sBer(q)$ for some $q\in (0,1)$. It can be verified that $\beta_\ib \leq  (1-2\delta)^2$. This function is depicted in Fig.~\ref{fig:Gerber} depending of the values of $\beta\leq (1-2\delta)^2$.
 \begin{figure}[t]
	\centering
	\begin{subfigure}[t]{0.32\textwidth}
		\includegraphics[height=4cm, width=1\columnwidth]{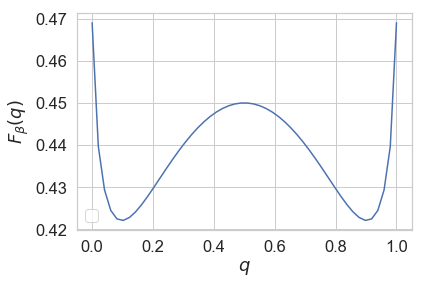}
		\caption{$\beta = 0.55$}
		\label{fig:gerber_first}
		\end{subfigure}
		\begin{subfigure}[t]{0.32\textwidth}
		\includegraphics[height=4cm, width=1\columnwidth]{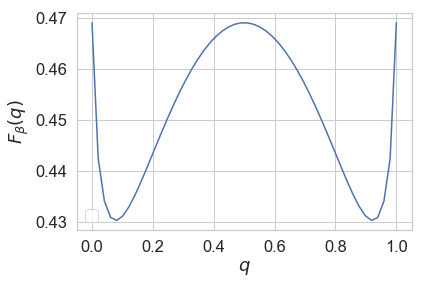}
		\caption{$\beta = 0.53$}
		\label{fig:gerber_end}
		\end{subfigure}
		\begin{subfigure}[t]{0.32\textwidth}
		\includegraphics[height=4cm, width=1\columnwidth]{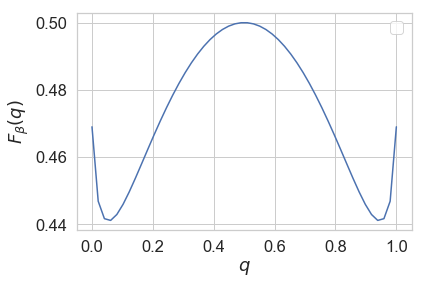}
		\caption{$\beta = 0.49$}
		\label{fig:gerber_mid}
		\end{subfigure}
		\caption{ The mapping $q\mapsto F_\beta(q)  = H(Y')-\beta H(X')$ where $X'\sim \sBer(q)$ and $Y'$ is the result of passing $X'$ through $\bsc(0.1)$, see \eqref{F_beta_MRGL}.}   
\label{fig:Gerber}
\end{figure}

	

\subsection{Operational Meaning of $\ib$ and $\pf$}\label{sec:Operational}
In this section, we illustrate several information-theoretic settings which shed light on the operational interpretation of both $\ib$ and $\pf$. 
The operational interpretation of $\ib$ has recently been extensively studied in information-theoretic settings in \cite{Bottleneck_Polyanskiy, Bottleneck_Shamai}. In particular, it was shown that $\ib$ specifies the rate-distortion region of noisy source coding problem \cite{Noisy_SourceCoding_Dobrushin, Witsenhusen_Indirect} under the logarithmic loss as the distortion measure and also the rate region of the lossless source coding with side information at the decoder \cite{Wyner_SourceCoding}. Here, we state the former setting (as it will be useful for our subsequent analysis of cardinality bound) and also provide a new information-theoretic setting in which $\ib$ appears as the solution.  Then, we describe another setting, the so-called \textit{dependence dilution}, whose achievable rate region has an extreme point specified by $\pf$.  This in fact delineate an important difference between $\ib$ and $\pf$: while $\ib$ describes the entire rate-region of an information-theoretic setup, $\pf$ specifies only a corner point of a rate region. 
Other information-theoretic settings related to $\ib$ and $\pf$ include   
CEO problem \cite{Courtade_CEO} and source coding for the Gray-Wyner network \cite{Cheuk_LI_GrayWyner}.

\subsubsection{Noisy Source Coding}
Suppose Alice has access only to a noisy version $X$ of a source of interest $Y$.  She wishes to transmit a rate-constrained description from her observation (i.e., $X$) to Bob such that he can recover $Y$ with small average distortion. 
More precisely, let $(X^n, Y^n)$ be  $n$ i.i.d. samples of $(X,Y)\sim P_{XY}$. Alice encodes her observation $X^n$ through an encoder $\phi:\X^n\to \{1, \dots, K_n\}$ and sends $\phi(X^n)$ to Bob. Upon receiving $\phi(X^n)$, Bob reconstructs a "soft" estimate of $Y^n$ via a decoder $\psi:\{1, \dots, K_n\}\to \widehat \Y^n$ where $\widehat \Y = \P(\Y)$. That is, the reproduction sequence $\hat y^n$ consists of $n$ probability measures on $\Y$. For any source and reproduction sequences $y^n$ and $\hat y^n$, respectively, the distortion is defined as 
$$d(y^n, \hat y^n) \coloneqq  \frac{1}{n}\sum_{i=1}^n d(y_i, \hat y_i),$$
where 
\begin{equation}\label{Distortion_Loss}
    d(y, \hat y) \coloneqq \log\frac{1}{\hat y(y)}.
\end{equation}
We say that a pair of rate-distortion $(\mathsf R, \mathsf D)$ is achievable if there exists a pair $(\phi, \psi)$ of encoder and decoder such that 
\begin{equation}\label{eq:Constraint_Noisy}
    \limsup_{n\to \infty} \E[d(Y^n, \psi(\phi(X^n)))]\leq \mathsf D, \qquad \text{and}\qquad \limsup_{n\to \infty}\frac{1}{n}\log K_n\leq \mathsf R.
\end{equation}
The \textit{noisy} rate-distortion function $\mathsf R^{\mathsf{noisy}}(\mathsf D)$ for a given $\mathsf D\geq 0$, is defined as the minimum rate $\mathsf R$ such that $(\mathsf R, \mathsf D)$ is an achievable rate-distortion pair.
This problem arises naturally in many data analytic problems.  Some examples include feature selection of a high-dimensional dataset, clustering, and matrix completion.  
This problem was first studied by Dobrushin and Tsybakov \cite{Noisy_SourceCoding_Dobrushin}, who showed that $\mathsf R^{\mathsf{noisy}}(\mathsf D)$ is analogous to the classical rate-distortion function   
\begin{align}
    \mathsf R^{\mathsf{noisy}}(\mathsf D) &= \inf_{\substack{P_{\hat Y|X}: \E[d(Y, \hat Y)]\leq \mathsf D,\\ Y\markov X\markov \hat Y}}I(X; \hat Y).\label{eq:rate_distortion_not_tilde} 
\end{align}
It can be easily verified that 
$\E[d(Y, \hat Y)] = H(Y|\hat Y)$ and hence (after relabeling $\hat Y$ as $T$)
\eqn{}{\mathsf R^{\mathsf{noisy}}(\mathsf D) = \inf_{\substack{P_{T|X}: I(Y; T)\geq R ,\\ Y\markov X\markov T}}I(X; T), }
where $R =  H(Y)-\mathsf D$, which is equal to $\widetilde\ib$ defined in \eqref{eq:Def_IB}.
For more details in connection between noisy source coding and $\ib$, the reader is referred to \cite{Bottleneck_Shamai, Bottleneck_Polyanskiy, Courtade_CEO, Collaborative_IB}.  
Notice that one can study an essentially identical problem where the distortion constraint \eqref{eq:Constraint_Noisy} is replaced by 
$$\lim_{n\to \infty} \frac{1}{n}I(Y^n; \psi(\phi(X^n)))\geq R, \qquad \text{and}\qquad \limsup_{n\to \infty}\frac{1}{n}\log K_n\leq \mathsf R.$$
This problem is addressed in \cite{IB_operational} for discrete alphabets $\X$ and $\Y$ and extended recently in \cite{IB_General} for any general alphabets.

\subsubsection{Test Against Independence with Communication Constraint}
As mentioned earlier, the connection between $\ib$ and noisy source coding, described above, was known and studied in \cite{Bottleneck_Shamai, Bottleneck_Polyanskiy}. Here, we provide a new information-theoretic setting which provides yet another operational meaning for $\ib$. 
Given $n$ i.i.d.\ samples $(X_1, Y_1), \dots, (X_n, Y_n)$ from joint distribution $Q$, we wish to test whether $X_i$ are independent of $Y_i$, that is, $Q$ is a product distribution. This task is formulated by the following hypothesis test:
\begin{align}\label{Hypothesis}
    \begin{aligned}
    H_0:& ~~Q=P_{XY},\\
    H_1:&~~Q=P_XP_Y,
    \end{aligned}
\end{align}
for a given joint distribution $P_{XY}$ with marginals $P_X$ and $P_Y$. Ahlswede and Csisz\'ar \cite{Hypothesis_Testing_Ahslwede} investigated this problem under a communication constraint: While $Y$ observations (i.e., $Y_1, \dots, Y_n$) are available, the $X$ observations need to be compressed at rate $R$, that is, instead of $X^n$, only $\phi(X^n)$ is present where $\phi:\X^n\to \{1, \dots, K_n\}$ satisfies 
$$\frac{1}{n}\log K_n\leq R.$$
For the type I error probability not exceeding a fixed $\eps\in (0,1)$, Ahlswede and Csisz\'ar \cite{Hypothesis_Testing_Ahslwede} derived the smallest possible type 2 error probability, defined as 
$$\beta_R(n, \eps) = \min_{\phi:\X^n\to [K]\atop \frac{1}{n}\log K_n\leq R} \min_{A\subset [K_n]\times \Y^n}\Big\{(P_{\phi(X^n)}\times P_{Y^n})(A):~~ P_{\phi(X^n)\times Y^n}(A)\geq 1-\eps\Big\}.$$
The following gives the asymptotic expression of $\beta_R(n, \eps)$ for every $\eps\in (0,1)$. For the proof, refer to \cite[Theorem 3]{Hypothesis_Testing_Ahslwede}.
\begin{theorem}[\cite{Hypothesis_Testing_Ahslwede}]
	For every $R\geq 0$ and $\eps\in (0,1)$, we have 
	\eq{\lim_{n\to \infty}-\frac{1}{n}\log \beta_R(n, \eps) = \ib(R).}
\end{theorem}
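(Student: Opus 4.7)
The plan is to prove the two standard halves of a Stein-type theorem: an achievability bound that exhibits a scheme attaining exponent $\mathsf{IB}(R)$, and a converse showing that no scheme does better, and then to upgrade the converse to be uniform in $\varepsilon \in (0,1)$ (the so-called strong converse).

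For achievability, I would fix $\delta > 0$ and a channel $P_{T|X}$ with $I(X;T) \leq R - \delta$ and $I(Y;T) \geq \mathsf{IB}(R) - \delta$ (which exists by definition of $\mathsf{IB}$ and its continuity in Theorem~\ref{Thm:IB_Properties}). I would then invoke a standard random-coding construction: generate a codebook $\{T^n(1), \ldots, T^n(K_n)\}$ i.i.d.\ from $P_T^{\otimes n}$ with $K_n = \lceil 2^{n R}\rceil$, define the encoder $\phi(x^n)$ to be the index $m$ such that $(x^n, T^n(m))$ is jointly $P_{XT}$-typical (picking arbitrarily if none exists), and take the acceptance region $A = \{(m, y^n) : (T^n(m), y^n) \text{ is jointly } P_{TY}\text{-typical}\}$. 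Under $H_0$, by the covering/joint typicality lemma the encoder succeeds and $(T^n(\phi(X^n)), Y^n)$ is jointly typical with probability $\to 1$, so the type I error vanishes. Under $H_1$, the codeword $T^n(\phi(X^n))$ is independent of $Y^n \sim P_Y^{\otimes n}$, so the probability that they are jointly $P_{TY}$-typical is at most $2^{-n(I(Y;T) - o(1))}$; a union bound over $K_n = 2^{nR}$ indices and over which index is output does not ruin the exponent after choosing $\delta$ small.

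For the (weak) converse, write $W = \phi(X^n)$ and use the data-processing lower bound on binary hypothesis testing: if the test has type I error $\leq \varepsilon$ and type II error $\beta$, then
\begin{equation*}
(1-\varepsilon)\log\tfrac{1}{\beta} - h_{\mathsf b}(\varepsilon) \;\leq\; D\bigl(P_{WY^n}\,\|\,P_W \otimes P_{Y^n}\bigr) \;=\; I(W; Y^n).
\end{equation*}
Since $Y^n \markov X^n \markov W$ with $(X^n, Y^n)$ i.i.d.\ and $I(X^n; W) \leq H(W) \leq \log K_n \leq nR$, the additivity property in Theorem~\ref{Thm:IB_Additivity} gives $I(W; Y^n) \leq n\,\mathsf{IB}(R)$. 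This already proves the weak converse $\limsup_n -\tfrac{1}{n}\log\beta_R(n,\varepsilon) \leq \mathsf{IB}(R)/(1-\varepsilon)$, and combined with achievability it yields the theorem for $\varepsilon \to 0$.

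The main obstacle is the \emph{strong converse}: removing the $1/(1-\varepsilon)$ factor so that the limit equals $\mathsf{IB}(R)$ for every $\varepsilon \in (0,1)$. The standard route, due to Ahlswede--Csisz\'ar, is a blowing-up argument. Given any test with type I error $\leq \varepsilon$, one enlarges the acceptance region $A \subset [K_n] \times \Y^n$ in the $Y^n$-coordinate by a Hamming ball of radius $\sqrt{n}\,\ell_n$ with $\ell_n \to \infty$ slowly; by the blowing-up lemma the enlarged region $\tilde A$ has $P_{WY^n}(\tilde A) \to 1$, while by the standard type-counting bound its product measure grows by only a sub-exponential factor $2^{o(n)}$. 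Applying the weak-converse inequality above to the \emph{modified} test (whose type I error tends to $0$) then yields
\begin{equation*}
-\tfrac{1}{n}\log\beta_R(n,\varepsilon) \;\leq\; \tfrac{1}{n} I(W; Y^n) + o(1) \;\leq\; \mathsf{IB}(R) + o(1),
\end{equation*}
which together with achievability closes the gap. The delicate point I expect to dwell on is the blowing-up step: one must verify that the blown-up region's product-measure inflation is only $2^{o(n)}$ (via a type-class counting estimate) and that the single-letterization through Theorem~\ref{Thm:IB_Additivity} still applies after modification, since the modified ``decoder'' no longer corresponds to a simple Markov chain in $Y^n$. Once this is handled, the matching upper and lower bounds establish the claimed identity.
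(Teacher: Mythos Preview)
The paper does not give its own proof of this theorem; it simply refers the reader to \cite[Theorem~3]{Hypothesis_Testing_Ahslwede}. Your sketch is precisely the Ahlswede--Csisz\'ar argument from that reference: typicality-based achievability, a weak converse via data processing and the single-letterization in Theorem~\ref{Thm:IB_Additivity}, and a strong converse via the blowing-up lemma. So there is nothing to compare against in the paper proper, and your outline is faithful to the cited source.

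One small correction in the achievability step: under $H_1$ no union bound over the $K_n$ codewords is needed (and indeed such a bound would cost you a factor $2^{nR}$ that you cannot afford, since typically $R\geq I(Y;T)$). The encoder outputs a \emph{single} index, and under $H_1$ the sequence $Y^n\sim P_Y^{\otimes n}$ is independent of $(X^n,\phi(X^n))$; hence the type~II error is bounded directly by $\max_{m}\Pr\bigl[(T^n(m),Y^n)\text{ jointly typical}\bigr]\leq 2^{-n(I(Y;T)-o(1))}$ (up to the negligible event that the selected codeword is atypical). With that fix, the rest of your plan is correct.
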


In light of this theorem, $\ib(R)$ specifies the exponential rate at which the type II error probability of the hypothesis test \eqref{Hypothesis} decays as the number of samples increases.

\subsubsection{Dependence Dilution} 
Inspired by the problems of information amplification \cite{Cover_State_Amplification} and state masking \cite{Merhav_state_masking}, Asoodeh et al. \cite{Asoode_submitted} proposed the dependence dilution setup as follows. 
Consider a source sequences $X^n$ of $n$ i.i.d. copies of $X\sim P_X$. Alice observes the source $X^n$ and wishes to encode it via
the encoder
$$f_n:\X^n\to \{1,2,\dots, 2^{nR}\},$$
for some $R>0$. The goal is to ensure that any user observing $f_n(X^n)$ can construct a list, of fixed size, of sequences in $\X^n$ that contains likely candidates of the actual sequence $X^n$ while revealing negligible information about a correlated source $Y^n$. To formulate this goal, consider the decoder   
$$g_n:\{1,2,\dots, 2^{nR}\}\to 2^{\X^n},$$ where $2^{\X^n}$ denotes the power set of $\X^n$. A \emph{dependence dilution triple} $(R, \Gamma, \Delta)\in\R^3_+$ is said to be achievable if, for any $\delta>0$, there exists a pair of encoder and decoder $(f_n, g_n)$ such that for sufficiently large $n$
\begin{equation}\label{list_decoder}
    \Pr\left(X^n\notin g_n(J)\right)<\delta,
\end{equation}
having fixed size $|g_n(J)|=2^{n(H(X)-\Gamma)},$
where $J = f_n(X^n)$ and simultaneously  
\begin{equation}\label{privacy_constraint_list_decoder}
    \frac{1}{n}I(Y^n; J)\leq \Delta+\delta.
\end{equation}
Notice that without side information $J$, the decoder can only construct a list of size $2^{nH(X)}$ which contains $X^n$ with probability close to one. However, after $J$ is observed and the list $g_n(J)$ is formed, the decoder's list size can be reduced to $2^{n(H(X)-\Gamma)}$ and thus reducing the uncertainty about $X^n$ by $ n\Gamma\in [0, nH(X)]$. This observation can be formalized to show (see \cite{Cover_State_Amplification} for details) that the constraint \eqref{list_decoder} is equivalent to  
\begin{equation}\label{amplification_equivalence}
    \frac{1}{n}I(X^n; J)\geq \Gamma-\delta,
\end{equation}
which lower bounds the amount of information $J$ carries about $X^n$. Built on this equivalent formulation, Asoodeh et al.  \cite[Corollary 15]{Asoode_submitted} derived a necessary condition for the achievable dependence dilution triple.
\begin{theorem}[\cite{Asoode_submitted}]\label{Theorem_correlation_dilution}
Any achievable dependence dilution triple $(R, \Gamma, \Delta)$ satisfies
 \begin{equation*}
  \begin{cases}
        R &\geq \Gamma\\
        \Gamma &\leq I(X;T)\\
        \Delta &\geq I(Y;T)-I(X;T)+\Gamma,
        \end{cases}
 \end{equation*}
for some  auxiliary random variable $T$ satisfying $Y\markov X\markov T$ and taking $|\T|\leq |\X|+1$ values.
\end{theorem}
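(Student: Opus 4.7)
The plan is a standard single-letterization converse. Fix an achievable scheme $(f_n,g_n)$ and write $J = f_n(X^n)$. The rate constraint together with \eqref{amplification_equivalence} and \eqref{privacy_constraint_list_decoder} yields the three $n$-letter inequalities
\[
 nR \ge H(J), \qquad I(X^n;J) \ge n(\Gamma-\delta), \qquad I(Y^n;J) \le n(\Delta+\delta),
\]
and the goal is to single-letterize all three with one common auxiliary. Following the classical recipe from state-amplification/state-masking, I would take
\[
 T_i \coloneqq (J, X^{i-1}, Y_{i+1}^n), \qquad T \coloneqq (T_Q,Q), \quad X \coloneqq X_Q, \quad Y \coloneqq Y_Q,
\]
with $Q$ uniform on $[n]$ and independent of $(X^n,Y^n)$. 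The i.i.d.\ product structure of $P_{X^nY^n}$ makes $(X,Y)\sim\pxy$ and delivers the Markov chain $Y\markov X\markov T$ for free, because conditionally on $X_i$ the coordinate $Y_i$ is independent of all other $X_j,Y_j$.

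The first inequality is immediate: $nR \ge H(J) \ge I(X^n;J) \ge n(\Gamma-\delta)$, then $\delta\downarrow 0$. For the second, the chain-rule identity $I(X_i;J,X^{i-1},Y_{i+1}^n) = I(X_i;J\mid X^{i-1}) + I(X_i; Y_{i+1}^n \mid J,X^{i-1})$ (which uses $X_i\indep(X^{i-1},Y_{i+1}^n)$) gives
\[
 n\,I(X;T) = \sum_i I(X_i;T_i) \ge \sum_i I(X_i;J\mid X^{i-1}) = I(X^n;J) \ge n(\Gamma-\delta).
\]
The third bound is the delicate one. Performing the analogous chain-rule expansion on the $Y$-side and subtracting the two sums, the residual cross terms are $\sum_i I(X_i;Y_{i+1}^n\mid J,X^{i-1})$ and $\sum_i I(Y_i;X^{i-1}\mid J,Y_{i+1}^n)$; these coincide exactly by the Csisz\'ar sum identity applied with conditioning variable $J$, so they annihilate and leave
\[
 n\bigl[I(X;T) - I(Y;T)\bigr] \;=\; I(X^n;J) - I(Y^n;J) \;\ge\; n(\Gamma-\Delta-2\delta),
\]
which rearranges to $\Delta \ge I(Y;T)-I(X;T)+\Gamma$ after $\delta\downarrow 0$.

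The final step is the cardinality bound $|\T|\le|\X|+1$, which I would extract from the Fenchel--Eggleston strengthening of Carath\'eodory's theorem: the auxiliary $T$ enters the three bounds only through the $|\X|-1$ free coordinates of the fixed marginal $P_X$ together with $H(X\mid T)$ and $H(Y\mid T)$ -- a total of $|\X|+1$ linear functionals of $t\mapsto P_{X\mid T=t}$ -- so one can replace $T$ by a $T'$ on at most $|\X|+1$ letters while preserving all three quantities, and the Markov chain is inherited because $P_{Y\mid X}$ is untouched. The main obstacle is the third inequality; not any individual manipulation, but the need to single-letterize $I(X^n;J)$ and $I(Y^n;J)$ \emph{simultaneously}. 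The mixed choice $T_i=(J,X^{i-1},Y_{i+1}^n)$ is exactly what puts the two leftover cross terms into the form matched by Csisz\'ar's identity, so that the difference $I(X;T)-I(Y;T)$ telescopes cleanly into $\tfrac{1}{n}[I(X^n;J)-I(Y^n;J)]$; a naive choice such as $T_i=(J,X^{i-1})$ would handle the amplification side but leave an uncontrolled residue on the privacy side.
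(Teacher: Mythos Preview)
The paper does not supply its own proof of this theorem; it is quoted verbatim from \cite{Asoode_submitted} (stated there as Corollary~15) and used only as background for the discussion of $\pf$. There is therefore no in-paper argument to compare against.

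That said, your converse is correct and is exactly the standard approach one would expect in the cited reference: the mixed auxiliary $T_i=(J,X^{i-1},Y_{i+1}^n)$ together with the Csisz\'ar sum identity is precisely the device introduced for the state-amplification/state-masking problems on which dependence dilution is modeled, and your identification of the third inequality as the one that forces this particular choice is on point. Two small remarks. First, when you write ``then $\delta\downarrow 0$'' you are implicitly invoking a compactness/diagonal argument: the auxiliary $T$ depends on $(n,\delta)$, so to obtain a \emph{single} $T$ satisfying all three bounds simultaneously one first applies the cardinality reduction to $|\T|\le|\X|+1$ and then extracts a convergent subsequence of $P_{T|X}$; continuity of mutual information on finite alphabets finishes the job. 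This is routine but worth one sentence. Second, in your Carath\'eodory step the two ``extra'' functionals $H(X|T)$ and $H(Y|T)$ are indeed enough, since preserving them (together with the marginal $P_X$) preserves both $I(X;T)$ and $I(Y;T)$ individually and hence also their difference.
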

According to this theorem, $\pf(\Gamma)$ specifies the best privacy performance  of the dependence dilution setup for the maximum amplification rate $\Gamma$.
While this informs the operational interpretation of $\pf$,  Theorem~\ref{Theorem_correlation_dilution} only provides an outer bound for the set of achievable dependence dilution triple $(R, \Gamma, \Delta)$. It is, however, not clear that $\pf$ characterizes the rate region of an information-theoretic setup.

The fact that $\ib$ fully characterizes the rate-region of an source coding setup has an important consequence: the cardinality of the auxiliary random variable $T$ in $\ib$ can be improved to $|\X|$ instead of $|\X|+1$.
\subsection{Cardinality Bound}\label{Sec:Cardinality_IB}
Recall that in the definition of $\ib$ in \eqref{eq:Def_IB}, no assumption was imposed on the auxiliary random variable $T$. A straightforward application of  Carath\'eodory-Fenchel-Eggleston theorem\footnote{This is a strengthening of the original Carath\'eodory theorem when the underlying space is connected, see e.g., \cite[Section III]{Witsenhausen_Convexity} or \cite[Lemma 15.4]{csiszarbook}.} reveals that $\ib$ is attained for $T$ taking values in a set $\T$ with cardinality $|\T|\leq |\X| + 1$. Here, we improve this bound and show that cardinality bound to $|\T|\leq |\X|$. 

\begin{theorem}\label{Thm:cardinality_IB}
For any joint distribution $P_{XY}$ and $R\in (0, H(X)]$, information bottleneck $\ib(R)$ is achieved by $T$ taking at most $|\X|$ values.  
\end{theorem}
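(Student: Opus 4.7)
The plan is to combine the Lagrangian formulation of $\ib(R)$ developed in Section~\ref{Sec:Evaluation} with a Fenchel--Eggleston support-reduction argument carried out in $\R^{|\X|}$ and inspired by the noisy source coding reformulation from Section~\ref{sec:Operational}. The key insight is that recasting $\widetilde\ib$ as a noisy rate--distortion problem turns the rate $I(X;T)$ into the \emph{objective} rather than a second constraint to be preserved, so that the underlying support lemma needs to match only one scalar functional beyond the marginal $P_X$, shaving off the atom that the standard bound $|\T|\leq|\X|+1$ spends on preserving $I(X;T)$.

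First I would work with the Lagrangian. From~\eqref{eq:Largrangian_IB3},
\[
\L_\ib(\beta) = H(Y)-\beta H(X)-\conv[F_\beta(Q_X)]\big|_{P_X},\qquad \beta\in[0,\beta_\ib],
\]
which coincides with the value at $P_X$ of the upper concave envelope of $\psi_\beta(Q)\coloneqq D(Q P_{Y|X}\|P_Y)-\beta D(Q\|P_X)$ over $\P(\X)$. The graph $\{(Q,\psi_\beta(Q)):Q\in\P(\X)\}$ is the continuous image of the connected simplex $\P(\X)$, hence a connected subset of $\R^{|\X|}$ (using $|\X|-1$ affine coordinates for $Q$ and one coordinate for $\psi_\beta$). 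By Eggleston's refinement of Carath\'eodory's theorem, the concave envelope at $P_X$ is realized by a convex combination of at most $|\X|$ points of this graph, which produces a maximizer $T^\star_\beta$ of $\L_\ib(\beta)$ with $|\T|\leq|\X|$.

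Next I would transfer this bound to $\ib(R)$ via Lagrangian duality. Concavity of $R\mapsto\ib(R)$ (Theorem~\ref{Thm:IB_Properties}) together with~\eqref{eq:Lagrangian_converse1} gives $\ib(R)=\inf_{\beta\in[0,\beta_\ib]}[\beta R+\L_\ib(\beta)]$, so the infimum is attained at some $\beta^\star$ and a maximizer of $\L_\ib(\beta^\star)$ simultaneously achieves $I(X;T)=R$ and $I(Y;T)=\ib(R)$. At a point where $\ib$ is strictly concave or merely kinked, the Fenchel--Eggleston maximizer produced above already has the correct value of $I(X;T)$. When $R$ lies in a linear segment of $\ib$ with slope $\beta^\star$, the family of Fenchel--Eggleston maximizers of $\L_\ib(\beta^\star)$ is continuously parameterized by the choice of at most $|\X|$ support points on the level set of $\psi_{\beta^\star}$ supporting the envelope, and the induced values of $I(X;T)$ vary continuously and sweep out the whole segment, so for each $R$ in the segment some such maximizer realizes $I(X;T)=R$ with support of size at most $|\X|$.

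The main obstacle is precisely this linear-segment case: a direct application of Fenchel--Eggleston to the joint graph $\{(Q,D(Q\|P_X),D(QP_{Y|X}\|P_Y)):Q\in\P(\X)\}$ in $\R^{|\X|+1}$ yields only the standard bound $|\T|\leq|\X|+1$, and the improvement to $|\T|\leq|\X|$ hinges on the asymmetry exposed by the noisy source coding interpretation of Section~\ref{sec:Operational}, where $I(X;T)$ appears only as the objective and the support lemma therefore lives in $\R^{|\X|}$, leaving the continuity argument described above just enough room to close the gap.
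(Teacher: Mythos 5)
Your overall architecture parallels the paper's proof --- first control the cardinality at exposed points of the boundary via a support lemma in $\R^{|\X|}$, then extend to the remaining boundary points --- but the decisive second step is asserted rather than proved, and that is exactly where the difficulty of improving $|\X|+1$ to $|\X|$ lives. Your first step is fine: Fenchel--Eggleston applied to the graph of $Q\mapsto F_{\beta}(Q)$ in $\R^{|\X|}$ does produce, for each $\beta$, an optimizer of the Lagrangian supported on at most $|\X|$ points, and when the supporting line of slope $\beta^\star$ touches $\M$ at a single point this settles the claim for that $R$. The gap is the flat-face case. Your claim that the $|\X|$-atom optimal decompositions ``vary continuously and sweep out the whole segment'' needs two facts you do not establish: (i) that the two endpoints of each maximal linear segment of $R\mapsto\ib(R)$ are themselves attained by decompositions with at most $|\X|$ atoms, and (ii) that the set of $|\X|$-atom optimal decompositions is connected, so that an intermediate-value argument applies to the affine functional $\mu\mapsto H(X)-\sum_t\lambda_t H(Q^t)$. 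Neither is automatic: the obvious way to pin down the value of $I(X;T)$ during support reduction is to add $H(Q)$ as an extra coordinate, which puts the graph in $\R^{|\X|+1}$ and returns precisely the bound $|\T|\leq|\X|+1$ you are trying to beat.

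The paper closes this gap differently: it invokes Jana's cardinality lemma (Theorem~\ref{thm:Jana}, \cite{CardinalityBound}), which states that every \emph{extreme point} of the noisy source coding rate--distortion region $\A$ under logarithmic loss is attained with $|\T|\leq|\X|$, and then argues that every remaining boundary point of the convex region $\A$ is reachable by a timesharing/mixing construction that does not increase the alphabet size. So the noisy source coding reformulation is not merely motivation in the paper --- it is the vehicle that supplies both the extreme-point cardinality bound and the convexity structure used to fill in the flat faces. To keep your Lagrangian route you would need to actually prove (i) and (ii) above, or replace the sweeping argument with an explicit extreme-point lemma of Jana's type.
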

The proof of this theorem hinges on the operational characterization of $\ib$ as the lower boundary of the rate-distortion region of noisy source coding problem discussed in Section~\ref{sec:Operational}. Specifically, we first show that the \textit{extreme} points of this region is achieved by $T$ taking $|\X|$ values. We then make use of a property of the  noisy source coding problem (namely, \textit{time-sharing}) to argue that \textit{all} points of this region (including the boundary points) can be attained by such $T$.   
It must be mentioned that this result was already claimed by Harremo\"es and Tishby in \cite{harremoes2007information} without  proof. 

In many practical scenarios, feature $X$ has a large alphabet. Hence, the bound $|\T|\leq |\X|$, albeit optimal, still can make the information bottleneck function computationally intractable over large alphabets. However, label $Y$ usually has a  significantly smaller alphabet. While it is in general impossible to have a cardinality bound for $T$ in terms of $|\Y|$, one can consider \textit{approximating} $\ib$ assuming $T$ takes $N$ values. The following result, recently proved by Hirche and Winter \cite{Hirche_IB_cardinality}, is in this spirit.   
\begin{theorem}[\cite{Hirche_IB_cardinality}]
For any $(X,Y)\sim P_{XY}$, we have 
$$\ib(R, N)\leq \ib(R)\leq \ib(R, N) + \delta(N),$$
where $\delta(N) = 4N^{-\frac{1}{|\Y|}}\left[\log\frac{|\Y|}{4} + \frac{1}{|\Y|}\log N\right]$ and $\ib(R, N)$ denotes the information bottleneck functional \eqref{eq:Def_IB} with the additional constraint that $|\T|\leq N$.
\end{theorem}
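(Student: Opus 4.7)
The lower bound $\ib(R,N)\le \ib(R)$ is immediate: the feasible set of channels $P_{T\mid X}$ with $|\T|\le N$ is contained in the feasible set defining $\ib(R)$, so restricting the supremum can only decrease it. The content is in the upper bound, whose proof should take the following shape. Take a (near-)optimizer $T^*$ of $\ib(R)$; by Theorem~\ref{Thm:cardinality_IB} we may assume $|\T^*|\le |\X|$, but $|\X|$ may well exceed $N$, so the task is to collapse $T^*$ to a variable $T'$ with $|\T'|\le N$ while (i) preserving the constraint $I(X;T')\le R$ and (ii) losing at most $\delta(N)$ in $I(Y;T')$.

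The plan is to quantize the space $\P(\Y)$ of posteriors. Choose a partition $\{C_1,\dots,C_N\}$ of $\P(\Y)$ such that each cell has total-variation diameter at most some $\eps=\eps(N)$; such a partition exists with $\eps \asymp N^{-1/|\Y|}$, e.g.\ by gridding the enclosing hypercube $[0,1]^{|\Y|}$ into $N$ boxes. Define $T'=q(T^*)$, where $q$ sends $t$ to the index $i$ of the unique cell containing the posterior $P_{Y\mid T^*=t}$. Since $Y\markov X\markov T^*\markov T'$ forms a Markov chain, the data-processing inequality gives $I(X;T')\le I(X;T^*)\le R$, so $T'$ is admissible in the definition of $\ib(R,N)$. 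Hence $\ib(R,N)\ge I(Y;T')$ and the problem reduces to controlling $I(Y;T^*)-I(Y;T')=H(Y\mid T')-H(Y\mid T^*)\ge 0$.

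To bound the gap, write, for each cell $i$ with $P(T'=i)>0$, the cell-conditional mixture $\bar P_i \coloneqq \E[P_{Y\mid T^*}\mid T'=i]$, so that $H(Y\mid T'=i)=H(\bar P_i)$ while $H(Y\mid T^*,T'=i)=\E[H(P_{Y\mid T^*})\mid T'=i]$. For every $t\in q^{-1}(i)$ the posterior $P_{Y\mid T^*=t}$ lies within TV distance $\eps$ of $\bar P_i$, so the Audenaert/Fannes continuity of Shannon entropy yields $|H(P_{Y\mid T^*=t})-H(\bar P_i)|\le \eps\log(|\Y|-1)+h_\mathsf{b}(\eps)$. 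Averaging over $t$ within cell $i$ and then over cells, we obtain
\begin{equation*}
I(Y;T^*)-I(Y;T')\;\le\;\eps\log(|\Y|-1)+h_\mathsf{b}(\eps).
\end{equation*}
Substituting $\eps\asymp N^{-1/|\Y|}$ and using $h_\mathsf{b}(\eps)\le \eps\log(e/\eps)$ for small $\eps$ gives exactly the $\delta(N)=4N^{-1/|\Y|}\bigl[\log(|\Y|/4)+|\Y|^{-1}\log N\bigr]$ behaviour in the theorem.

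The main obstacle is matching the constants in $\delta(N)$ precisely rather than just the rate. Getting the exponent $1/|\Y|$ requires covering the ambient cube $[0,1]^{|\Y|}$ rather than the $(|\Y|-1)$-dimensional simplex directly, and getting the factor $4$ and the $\log(|\Y|/4)$ term needs a careful choice of grid side-length combined with the sharpest available continuity bound for entropy (Audenaert's inequality, which improves on Fannes' and is tight in the relevant regime). Everything else -- admissibility of $T'$, the reduction of the gap to a single TV-to-entropy continuity estimate, and the choice of $\eps$ -- is essentially forced by the structure above.
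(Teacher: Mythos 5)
The paper does not prove this theorem: it is quoted verbatim from Hirche and Winter \cite{Hirche_IB_cardinality}, so there is no in-paper argument to compare against. Your sketch is the right (and essentially the standard) route to it: the lower bound by inclusion of feasible sets, and the upper bound by taking an optimizer $T^*$, quantizing the posteriors $P_{Y\mid T^*=t}$ into $N$ cells of small total-variation diameter, passing to $T'=q(T^*)$ (which stays admissible by the data-processing inequality applied to $Y\markov X\markov T^*\markov T'$), and bounding $H(Y\mid T')-H(Y\mid T^*)$ cell by cell via a continuity-of-entropy estimate. Note that $\delta(N)$ is exactly $\eps\log(|\Y|/\eps)$ with $\eps=4N^{-1/|\Y|}$, i.e.\ the bound $|H(P)-H(Q)|\le\|P-Q\|_1\log\frac{|\Y|}{\|P-Q\|_1}$ applied with $\ell_1$-radius $\eps$, so your Fannes--Audenaert step is the correct mechanism and the remaining work is only constant-chasing, as you say. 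One small correction: covering the $(|\Y|-1)$-dimensional simplex directly would give the \emph{better} exponent $N^{-1/(|\Y|-1)}$; the theorem's weaker exponent $1/|\Y|$ is what the cruder cube covering yields, so achieving the stated rate is not an obstacle --- any partition achieving $\eps\le 4N^{-1/|\Y|}$ suffices. With that understood, I see no gap in your argument beyond the explicitly flagged bookkeeping of constants.
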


Recall that, unlike $\pf$,  the graph of $\ib$ characterizes the rate region of a Shannon-theoretic coding problem (as illustrated in Section~\ref{sec:Operational}), and hence any boundary points can be constructed via  time-sharing of extreme points of the rate region. This lack of  operational characterization of $\pf$ translates into a worse cardinality bound than that of $\ib$. In fact, for $\pf$ the cardinality bound $|\T|\leq |\X|+1$ cannot be improved in general. To demonstrate this,  we numerically solve the optimization in $\pf$ assuming that $|\T| = |\X|$ when both $X$ and $Y$ are binary. As illustrated in Fig.~\ref{fig:PF_NonConvexity}, this optimization does not lead to a convex function, and hence, cannot be equal to $\pf$.
\begin{figure}
    \centering
    \includegraphics[height =5cm, width=7cm]{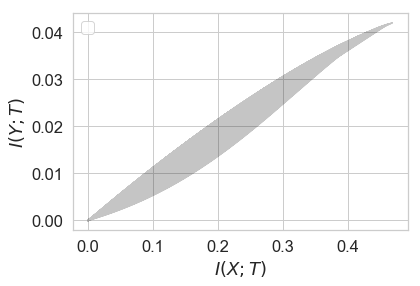}
    \caption{The set $\{(I(X; T), I(Y; T))\}$ with $P_{X} = \sBer(0.9)$, $P_{Y|X=0} = [0.9, 0.1]$, $P_{Y|X=1} = [0.85, 0.15]$, and $T$ restricted to be binary. While the upper boundary of this set is concave, the lower boundary is not convex. This implies that, unlike $\ib$, $\pf(r)$ cannot be attained by binary variables $T$. }
    \label{fig:PF_NonConvexity}
\end{figure}

\subsection{Deterministic Information Bottleneck}\label{Sec:DIB}

As mentioned earlier, $\ib$ formalizes an information-theoretic approach to clustering high-dimensional feature $X$ into cluster labels $T$ that preserve as much information about the label $Y$ as possible. The clustering label is assigned by the \textit{soft} operator $P_{T|X}$ that solves the $\ib$ formulation \eqref{eq:Def_IB} according to the rule: $X = x$ is likely assigned label $T = t$ if $\kl(P_{Y|x}\|P_{Y|t})$ is small where $P_{Y|t} = \sum_x P_{Y|x}P_{X|t}$. That is, clustering is assigned based on the similarity of conditional distributions. As in many practical scenarios, a \textit{hard} clustering operator is preferred, Strouse and Schwab \cite{strouse2017dib} suggested the following variant of $\ib$, termed as \textit{deterministic information bottleneck} $\mathsf{dIB}$
\begin{equation}\label{eq:Def_dIB}
	\mathsf{dIB}(P_{XY}, R)\coloneqq \sup_{\substack{f:\X\to \T,\\ H(f(X))\leq R}} I(Y;f(X)),
	\end{equation}
	where the maximization is taken over all deterministic functions $f$ whose range is a finite set $\T$. Similarly, one can define 
\begin{equation}\label{eq:Def_dPF}
	\mathsf{dPF}(P_{XY}, r)\coloneqq \inf_{\substack{f:\X\to \T,\\ H(f(X))\geq r}} I(Y;f(X)).
	\end{equation}	
	One way to ensure that $H(f(X))\leq R$ for a deterministic function $f$ is to restrict the cardinality of the range of $f$: if $f:\X\to [e^R]$ then $H(f(X))$ is necessarily smaller than $R$. Using this insight, we derive a lower for $\mathsf{dIB}(P_{XY}, R)$ in the following lemma.  
	\begin{lemma}\label{lemma:LB_DIB}
For any given $\pxy$, we have 
$$	\mathsf{dIB}(P_{XY}, R)\geq \frac{e^R-1}{|\X|}I(X; Y),$$
and 
$$\mathsf{dPF}(P_{XY}, r)\leq \frac{e^r-1}{|\X|} I(X; Y) + \Pr(X\geq e^r)\log\frac{1}{\Pr(X\geq e^r)}.$$
	\end{lemma}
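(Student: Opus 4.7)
The plan is to exhibit explicit deterministic quantizers $f:\X\to\T$ for each inequality, exploiting the elementary bound $H(f(X))\leq \log |\mathrm{range}(f)|$ together with the probabilistic method.

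For the $\mathsf{dIB}$ lower bound, set $M=\lfloor e^R\rfloor$ and sample $S\subseteq\X$ uniformly among subsets of size $M-1$. Define $f_S(x)=x$ for $x\in S$ and $f_S(x)=\ast$ otherwise; since $|\mathrm{range}(f_S)|\leq M$, we have $H(f_S(X))\leq \log M\leq R$, so each $f_S$ is feasible. Expanding the mutual information as
\begin{equation*}
    I(Y;f_S(X))=\sum_{x\in S}P_X(x)\,\kl\bigl(P_{Y|X=x}\,\big\|\,P_Y\bigr)+\Pr(X\notin S)\,\kl\bigl(P_{Y|X\notin S}\,\big\|\,P_Y\bigr),
\end{equation*}
I would drop the nonnegative second term and take expectation over $S$ using $\Pr[x\in S]=(M-1)/|\X|$ to get $\mathbb{E}_S[I(Y;f_S(X))]\geq \tfrac{M-1}{|\X|}\,I(X;Y)$. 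By the probabilistic method some realization of $S$ attains this bound, yielding the claim after absorbing the floor into the stated $(e^R-1)/|\X|$ factor.

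For the $\mathsf{dPF}$ upper bound, I would instead fix $f$ deterministically using a top-probability ordering: sort $\X$ so that $P_X(1)\geq P_X(2)\geq\cdots$, set $k=\lceil e^r\rceil$, and take $f(x)=x$ for $x<k$ and $f(x)=\ast$ otherwise. A majorization argument on the distribution induced on $\{1,\ldots,k-1,\ast\}$ shows $H(f(X))\geq r$, so $f$ is admissible for $\mathsf{dPF}(r)$. Decompose
\begin{equation*}
    I(Y;f(X))=\sum_{x<k}P_X(x)\,\kl\bigl(P_{Y|X=x}\,\|\,P_Y\bigr)+\Pr(X\geq k)\,\kl\bigl(P_{Y|X\geq k}\,\|\,P_Y\bigr),
\end{equation*}
upper-bound the head by its symmetric average $\tfrac{k-1}{|\X|}\,I(X;Y)$ (produced by averaging the same decomposition over uniform random size-$(k-1)$ subsets of $\X$), and bound the tail term via
\begin{equation*}
    \Pr(X\geq k)\,\kl(P_{Y|X\geq k}\|P_Y)\leq I(\mathbf{1}_{\{X\geq k\}};Y)\leq h_{\mathsf b}\bigl(\Pr(X\geq k)\bigr)\leq \Pr(X\geq k)\log\frac{1}{\Pr(X\geq k)},
\end{equation*}
with $p=\Pr(X\geq k)=\Pr(X\geq e^r)$. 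Summing the two pieces yields the stated inequality.

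The main obstacle is the part-2 reconciliation of opposite-direction constraints: the averaging that produces the $(e^r-1)/|\X|$ factor conflicts with the lower-bound entropy constraint $H(f(X))\geq r$, which fails for a generic random $S$. My plan handles this by committing to the canonical ordered choice $S=\{1,\ldots,k-1\}$, proving the entropy constraint by majorization, and then bounding the head sum against its symmetric average; justifying this comparison cleanly (rather than via a handwave) is the most delicate step. Tracking the integer rounding between $e^R$, $e^r$ and their ceilings or floors is a minor but pervasive nuisance throughout.
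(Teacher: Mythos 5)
Your first part is sound and is essentially the paper's argument in probabilistic-method form: the paper sorts $\X$ in decreasing order of $P_X(x)\kl(P_{Y|X=x}\|P_Y)$ and keeps the top $M-1$ atoms, which is exactly the derandomized version of your claim that some size-$(M-1)$ subset achieves at least the average $\frac{M-1}{|\X|}I(X;Y)$. (Both arguments share the same rounding looseness: with $M=\lfloor e^R\rfloor$ one only obtains $(\lfloor e^R\rfloor-1)/|\X|$, which can be strictly smaller than $(e^R-1)/|\X|$; the paper simply sets $M=e^R$.)

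The second part has three genuine gaps. First, the head bound fails for your construction: for a \emph{fixed} subset $S=\{1,\dots,k-1\}$, the sum $\sum_{x\in S}P_X(x)\kl(P_{Y|X=x}\|P_Y)$ is not dominated by its average over uniformly random size-$(k-1)$ subsets --- a particular subset can exceed the average, and your top-probability subset is a bad choice precisely when the high-probability atoms are also the ones with large $P_X(x)\kl(P_{Y|X=x}\|P_Y)$ (e.g.\ when $\kl(P_{Y|X=x}\|P_Y)$ is constant in $x$ and $P_X$ is non-uniform). The paper instead sorts $\X$ in \emph{ascending} order of $P_X(x)\kl(P_{Y|X=x}\|P_Y)$ and keeps the $k-1$ \emph{smallest} summands, whose total is automatically at most the fraction $\frac{k-1}{|\X|}$ of $I(X;Y)$. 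Second, your feasibility claim $H(f(X))\geq r$ is false: merging the tail only destroys entropy, and if $P_X$ is concentrated (say $P_X(1)=\tfrac12$ with the remaining mass spread over many atoms), the induced distribution on $\{1,\dots,k-1,\ast\}$ has entropy well below $r$ even though $H(X)\geq r$; no majorization argument rescues this. (To be fair, the paper's own proof silently omits this feasibility check.) Third, the final step of your tail bound is reversed: $h_\mathsf{b}(p)\geq p\log\frac1p$, not $\leq$, so the chain through $I(\mathbf{1}_{\{X\geq k\}};Y)\leq h_\mathsf{b}(p)$ cannot yield $p\log\frac1p$. The correct route is the paper's: since $P_Y(y)\geq \Pr(X\geq k)\,P_{Y|X\geq k}(y)$ pointwise (equivalently, by the log-sum inequality), one has $\kl(P_{Y|X\geq k}\|P_Y)\leq \log\frac{1}{\Pr(X\geq k)}$ directly, without passing through the binary entropy.
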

	Note that both $R$ and $r$ are smaller than $H(X)$ and thus the multiplicative factors of $I(X; Y)$ in the lemma  are smaller than one. 
In light of this lemma, we can obtain  $$\frac{e^R-1}{|\X|}I(X; Y) \leq \ib(R)\leq I(X; Y),$$ and 	
$$\pf(r)\leq \frac{e^r-1}{|\X|}I(X; Y) + \Pr(X\geq e^r)\log\frac{1}{\Pr(X\geq e^r)}.$$
In most of practical setups, $|\X|$ might be very large, making the above lower bound for $\ib$ vacuous. In the following lemma, we partially address this issue by deriving a bound independent of $\X$ when $Y$ is binary.

\begin{lemma}\label{lemma:LB_IB_binary}
   Let $\pxy$ be a joint distribution of arbitrary $X$ and binary $Y\sim \sBer(q)$ for some $q\in (0,1)$. Then, for any $R\geq \log 5$ we have 
   $$\mathsf{dIB}(P_{XY}, R)\geq I(X; Y)-2\alpha h_\mathsf{b}\Big(\frac{I(X; Y)}{2\alpha(e^R-4)}\Big),$$
   where $\alpha = \max\{\log\frac{1}{q}, \log\frac{1}{1-q}\}$.
\end{lemma}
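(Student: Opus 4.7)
The plan is to exhibit an explicit deterministic map $f:\mathcal{X}\to[M]$ with $M\leq \lfloor e^R\rfloor$, so that $H(f(X))\leq \log M\leq R$, and to bound the resulting gap $I(X;Y)-I(f(X);Y)$. Because $Y$ is binary, the posterior $p(x)\coloneqq P_{Y|X}(1\mid x)\in[0,1]$ is a scalar sufficient statistic, so $I(X;Y)=I(p(X);Y)$ and $H(Y\mid X)=\E[h_\mathsf{b}(p(X))]$. I would start from the identity
\[
  I(X;Y)-I(f(X);Y)=H(Y\mid f(X))-H(Y\mid X)=\sum_{i=1}^{M}P_{f(X)}(i)\bigl[h_\mathsf{b}(q_i)-\E[h_\mathsf{b}(p(X))\mid f(X)=i]\bigr],
\]
where $q_i\coloneqq\E[p(X)\mid f(X)=i]$, and observe that each bracket is non-negative by concavity of $h_\mathsf{b}$. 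Thus the task reduces to designing a scalar quantizer on $p(X)$ whose per-bin entropy deficit is small.

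I would then choose a quantizer that reserves four of the $M$ bins for the tail regions $\{p(x)\approx 0\}$ and $\{p(x)\approx 1\}$, where $h_\mathsf{b}$ is not Lipschitz, and partitions a safe middle subinterval of $[0,1]$ into the remaining $M-4$ equal-length pieces; the hypothesis $R\geq \log 5$ is used precisely to guarantee $M\geq 5$ so that $M-4\geq 1$. Let $f(x)=i$ when $p(x)$ falls in the $i$-th piece of this partition. On the interior bins, concavity of $h_\mathsf{b}$ together with the oscillation bound $|p(x)-q_i|\leq 1/(M-4)$ controls the per-bin deficit by a Jensen-type estimate. On each of the four endpoint bins the posterior average $q_i$ is itself within $O(1/M)$ of $0$ or $1$, so $h_\mathsf{b}(q_i)$ is small and can be absorbed into the prefactor $2\alpha$ using the fact that $h_\mathsf{b}(q)\leq \log 2\leq \alpha$ and $H(Y)\leq \alpha=\max\{\log(1/q),\log(1/\bar q)\}$.

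The final step, which I expect to be the main obstacle, is aggregating the per-bin deficits into the exact closed form $2\alpha\cdot h_\mathsf{b}\!\bigl(I(X;Y)/(2\alpha(e^R-4))\bigr)$. The idea is to apply Jensen's inequality ``in reverse'' to the concave function $h_\mathsf{b}$: the weighted average $\sum_i P_{f(X)}(i)\,h_\mathsf{b}(\cdot)$ is upper-bounded by $h_\mathsf{b}$ of the weighted average, after renormalizing by the entropy range $2\alpha$ and by the effective number of interior bins $M-4=\lfloor e^R\rfloor-4$. The appearance of $I(X;Y)$ in the argument of $h_\mathsf{b}$ comes from the identity $I(X;Y)=H(Y)-H(Y\mid X)$, which ties the total quantization budget to $I(X;Y)$ itself through the entropy of $Y$. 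The delicate parts of this aggregation are (i) getting the normalizing constants $2\alpha$ and $e^R-4$ to appear exactly as claimed through a careful Jensen step on the concave function $h_\mathsf{b}$, and (ii) ensuring that the contribution of the four endpoint bins is subsumed into the same single $h_\mathsf{b}(\cdot)$ expression rather than appearing as an additive error term; these two difficulties together dictate the specific form, and in particular the constant $4$, in the denominator $e^R-4$.
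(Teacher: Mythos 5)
Your setup is sound and matches the idea underlying the result the paper actually invokes: since $Y$ is binary, the posterior $p(X)=P_{Y|X}(1\mid X)$ is a scalar sufficient statistic, the information loss decomposes into per-bin Jensen gaps of $h_\mathsf{b}$, and the four extra cells are reserved for the neighborhoods of $0$ and $1$ where $h_\mathsf{b}$ is not Lipschitz. This is precisely the quantizer construction of \cite[Theorem 3]{Polyanskiy_Distilling}, which the paper cites as a black box rather than re-deriving.

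However, your proposal stops exactly where the proof has to happen. The paper does not obtain the closed form by a ``reverse Jensen'' aggregation of per-bin deficits. It quotes the cited theorem in the form: for every $\eta\in(0,1)$ there exists $f:\X\to[M]$ with $I(Y;f(X))\ge \eta\, I(X;Y)$ and $M\le 4+\frac{4}{(1-\eta)\log 2}\log\frac{2\alpha}{(1-\eta)I(X;Y)}$, i.e.\ with the number of cells expressed as a function of the multiplicative loss $1-\eta$. The entire content of the paper's proof is then the algebraic \emph{inversion} of this relation: using the elementary bound $h_\mathsf{b}^{-1}(x)\le \frac{x\log 2}{\log(1/x)}$ together with $I(X;Y)\le\alpha$, one rewrites the displayed bound as $h_\mathsf{b}^{-1}\bigl(\tfrac{(1-\eta)I(X;Y)}{2\alpha}\bigr)\le \tfrac{I(X;Y)}{2\alpha(M-4)}$, applies $h_\mathsf{b}$ to both sides (legitimate once $M\ge 5$ forces the right-hand side to lie in $[0,\tfrac12]$ --- this, not merely $M-4\ge 1$, is what $R\ge\log 5$ buys), and solves for $\eta$ to get $\eta\ge 1-\frac{2\alpha}{I(X;Y)}h_\mathsf{b}\bigl(\frac{I(X;Y)}{2\alpha(M-4)}\bigr)$. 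Your plan replaces this inversion with a hoped-for direct Jensen step whose normalizations $2\alpha$ and $M-4$ you explicitly acknowledge you cannot yet produce; as written there is no argument that the per-bin deficits sum to the single term $2\alpha\, h_\mathsf{b}\bigl(\frac{I(X;Y)}{2\alpha(e^R-4)}\bigr)$ rather than to an additive combination of a middle-bin term and a tail term. Until that aggregation is carried out --- or the existence statement with $M$ as an explicit function of $1-\eta$ is established and inverted as above --- the lemma is not proved.
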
	




\section{Family of Bottleneck Problems}
\label{Sec:Family}

In this section, we introduce a family of bottleneck problems by extending $\ib$ and $\pf$ to a large family of statistical measures. Similar to $\ib$ and $\pf$, these bottleneck problems are defined in terms of boundaries of a two-dimensional convex set induced by a joint distribution $\pxy$. Recall that $R\mapsto \ib(\pxy, R)$ and $r\mapsto \pf(\pxy, r)$ are the upper and lower boundary of the set $\M$ defined in \eqref{Def:Set_M} and expressed here again for convenience
\begin{equation}\label{Def:Set_M2}
    \M = \big\{(I(X; T), I(Y; T)):Y\markov X\markov T, (X, Y)\sim \pxy\big\}.
\end{equation}
%
Since $\pxy$ is given, $H(X)$ and $H(Y)$ are fixed. Thus, in characterizing $\M$ it is sufficient to consider only $H(X|T)$ and $H(Y|T)$. To generalize $\ib$ and $\pf$, we must therefore generalize $H(X|T)$ and $H(Y|T)$. 

Given a joint distribution $\pxy$ and two non-negative real-valued functions $\Phi:\P(\X)\to \R^+$ and $\Psi:\P(\Y)\to \R^+$, we define 
\eqn{Def:Phi}{\Phi(X|T)\coloneqq \E\left[\Phi(P_{X|T})\right] = \sum_{t\in \T}P_T(t)\Phi(P_{X|T=t}),}
and
\eqn{Def:Psi}{\Psi(Y|T)\coloneqq \E\left[\Psi(P_{Y|T})\right] = \sum_{t\in \T}P_T(t)\Psi(P_{Y|T=t}).} 
When $X\sim P_X$ and $Y\sim P_Y$, we interchangeably write $\Phi(X)$ for $\Phi(P_X)$ and $\Phi(Y)$ for $\Psi(P_Y)$. 

These definitions provide natural generalizations for Shannon's entropy and mutual information. Moreover, as we discuss later in Sections~\ref{Sec:Geussing} and \ref{Sec:Arimoto}, it also can be specialized to represent a large family of popular information-theoretic and statistical measures.
Examples include information and estimation theoretic quantities such as Arimoto's conditional entropy of order $\alpha$ for $\Phi(Q_X) = ||Q_X||_\alpha$, probability of correctly guessing for  $\Phi(Q_X) = ||Q_X||_\infty$, maximal correlation for binary case, and $f$-information for $\Phi(Q_X)$ given by $f$-divergence.    
We are able to generate a family of bottleneck problems using different instantiations of  $\Phi(X|T)$ and $\Psi(Y|T)$  in place of mutual information in $\ib$ and $\pf$. As we argue later, these problems better capture the essence of "informativeness" and "privacy"; thus providing analytical and interpretable guarantees similar in spirit to $\ib$ and $\pf$. 

Computing these bottleneck problems in general boils down to the following optimization problems
\eqn{Def:U}{\sU_{\Phi,\Psi}(\zeta)\coloneqq  \sup_{\substack{P_{T|X}:Y\markov X\markov T
\\ \Phi(X|T)\leq \zeta }} \Psi(Y|T),}
and 
\eqn{Def:L}{\sL_{\Phi,\Psi}(\zeta)\coloneqq \inf_{\substack{P_{T|X}:Y\markov X\markov T
			\\ \Phi(X|T) \geq \zeta }} \Psi(Y|T).}	
Consider the set
\begin{equation}\label{Def:Set_M2}
    \M_{\Phi, \Psi} \coloneqq \big\{(\Phi(X|T), \Psi(Y|T)):Y\markov X\markov T, (X, Y)\sim \pxy\big\}.
\end{equation}
Note that if both $\Phi$ and $\Psi$ are continuous (with respect to the total variation distance), then $\M_{\Phi, \Psi}$ is compact. Moreover, it can be easily verified that $\M_{\Phi, \Psi} $ is convex.
Hence, its upper and lower boundaries are well-defined and are characterized by the graphs of  $\sU_{\Phi,\Psi}$ and $\sL_{\Phi,\Psi}$, respectively. 
As mentioned earlier, these functional are instrumental for computing the general bottleneck problem later. Hence, before we delve into the examples of bottleneck problems, we extend the approach given in Section~\ref{Sec:Evaluation} to compute $\sU_{\Phi,\Psi}$ and $\sL_{\Phi,\Psi}$. 

\subsection{Evaluation of $\sU_{\Phi,\Psi}$ and $\sL_{\Phi,\Psi}$}\label{Sec:Evaluate_Family}
Analogous to Section~\ref{Sec:Evaluation}, we first introduce the Lagrangians of $\sU_{\Phi,\Psi}$ and $\sL_{\Phi,\Psi}$ as 
\eqn{}{\L^\sU_{\Phi, \Psi}(\beta) \coloneqq \sup_{P_{T|X}} \Psi(Y|T) - \beta \Phi(X|T),}
and 
\eqn{}{\L^\sL_{\Phi, \Psi}(\beta) \coloneqq \inf_{P_{T|X}} \Psi(Y|T) - \beta \Phi(X|T),}
where $\beta\geq 0$ is the Lagrange multiplier, respectively.  Let $(X', Y')$ be a pair of random variable with $X'\sim Q_X$ and $Y'$ is the result of passing $X'$ through the channel $P_{Y|X}$.  Letting
\eqn{}{F^{\Phi, \Psi}_\beta(Q_X) \coloneqq \Psi(Y') -\beta  \Phi(X'),}
we obtain that 
\eqn{}{\L^\sU_{\Phi, \Psi}(\beta) =  \conc[F^{\Phi, \Psi}_\beta(Q_X)]\big|_{P_X} \qquad \text{and}\qquad \L^\sL_{\Phi, \Psi}(\beta) =  \conv[F^{\Phi, \Psi}_\beta(Q_X)]\big|_{P_X},}
recalling that $\conc$ and $\conv$ are the upper concave and lower convex envelop operators.  Once we compute $\L^\sU_{\Phi, \Psi}$ and $\L^\sL_{\Phi, \Psi}$ for all $\beta\geq 0$, we can use the standard results in optimizations theory (similar to \eqref{eq:Lagrangian_converse1} and \eqref{eq:Lagrangian_converse2}) to recover $\sU_{\Phi,\Psi}$ and $\sL_{\Phi,\Psi}$. 
However, we can instead extend the approach Witsenhausen and Wyner \cite{Witsenhausen_Wyner} described in Section~\ref{Sec:Evaluation}. Suppose for some $\beta$,  $\conc[F^{\Phi, \Psi}_\beta(Q_X)]$ (resp. $\conv[F^{\Phi, \Psi}_\beta(Q_X)]$) at $P_X$ is obtained by a convex combination of points $F^{\Phi, \Psi}_\beta(Q^{i})$, $i\in [k]$ for some $Q^1, \dots, Q^k$ in $\P(\X)$, integer $k\geq 2$, and weights $\lambda_i\geq 0$ (with $\sum_{i}\lambda_i=1$).  Then $\sum_{i}\lambda_i Q^i = P_X$, and $T^*$ with properties $P_{T^*}(i) = \lambda_i$ and  $P_{X|T^* = i} = Q^i$  attains the maximum (resp. minimum) of $\Psi(Y|T) - \beta \Phi(X|T)$, implying that $(\Phi(X|T^*), \Psi(Y|T^*))$ is a point on the upper (resp. lower) boundary of $M_{\Phi, \Psi}$. Consequently, such $T^*$ satisfies $\sU_{\Phi, \Psi}(\zeta) = \Psi(Y|T^*)$ for $\zeta = \Phi(X|T^*)$ (resp. $\sL_{\Phi, \Psi}(\zeta) = \Psi(Y|T^*)$ for $\zeta = \Phi(X|T^*)$). The algorithm to compute $\sU_{\Phi, \Psi}$ and $\sL_{\Phi, \Psi}$ is then summarized in the following three steps:
\begin{itemize}
    \item Construct the functional $F^{\Phi, \Psi}_\beta(Q_X) \coloneqq \Psi(Y') -\beta  \Phi(X')$ for $X'\sim Q_X$ and $Y'\sim Q_X P_{Y|X}$ and all $Q_X\in \P(\X)$ and $\beta\geq 0$.
    \item Compute $\conc[F^{\Phi, \Psi}(Q_X)]\big|_{P_X}$ and $\conv[F^{\Phi, \Psi}(Q_X)]\big|_{P_X}$ evaluated at $P_X$. 
    \item If for distributions $Q^1, \dots , Q^k$ in $\P(X)$ for some $k\geq 1$, we have $\conc[F^{\Phi, \Psi}(Q_X)]\big|_{P_X} = \sum^k_{i=1}\lambda_iF^{\Phi, \Psi}(Q^i)$ or $\conv[F^{\Phi, \Psi}(Q_X)]\big|_{P_X} = \sum^k_{i=1}\lambda_iF^{\Phi, \Psi}(Q^i)$ for some $\lambda_i\geq 0$ satisfying $\sum_{i=1}^k\lambda_i=1$, then then $P_{X|T=i} = Q_i$, $i\in [k]$ and $P_T(i) = \lambda_i$ give the optimal $T^*$ in $\sU_{\Phi, \Psi}$ and  $\sL_{\Phi, \Psi}$, respectively.  
\end{itemize}

We will apply this approach to analytically compute $\sU_{\Phi, \Psi}$ and $\sL_{\Phi, \Psi}$ (and the corresponding bottleneck problems) for binary cases in the following sections.

\subsection{Guessing Bottleneck Problems}\label{Sec:Geussing}
Let $\pxy$ be given with marginals $P_X$ and $P_Y$ and the corresponding channel $P_{Y|X}$. Let also $Q_X\in \P(\X)$ be an arbitrary distribution on $\X$ and $Q_Y = Q_X P_{Y|X}$ be the output distribution of $P_{Y|X}$ when fed with $Q_X$.
Any channel $P_{T|X}$, together with the Markov structure $Y\markov X\markov T$, generates unique $P_{X|T}$ and $P_{Y|T}$. 
We need the following basic definition from statistics.
\begin{definition}
Let $U$ be a discrete and $V$ be an arbitrary random variables supported on $\U$ and $\V$ with $|\U|<\infty$, respectively. Then $\cp(U)$ the probability of correctly guessing $U$ and $\cp(U|V)$ the probability of correctly guessing
$U$ given $V$ are given by
\eq{\cp(U) \coloneqq \max_{u\in \U}P_U(u),}
and 
\eq{\cp(U|V) \coloneqq \max_{g}\Pr(U = g(V)) =  \E\left[\max_{u\in \U}P_{U|V}(u|V)\right].}
Moreover, the multiplicative gain of the observation $V$ in guessing $U$ is defined\footnote{The reason for $\infty$ in the notation becomes clear later.} 
$$I_\infty(U; V) \coloneqq \log\frac{\cp(U|V)}{\cp(U)}.$$
\end{definition}
As the names suggest, $\cp(U|V)$ and $\cp(U)$ characterize the optimal efficiency of guessing $U$ with or without the observation $V$, respectively. Intuitively, $I_\infty(U; V)$ quantifies how useful the observation $V$ is in estimating $U$: \textit{If it is small, then it means it is nearly as hard for an adversary  observing $V$ to guess $U$ as it is without $V$}.  This observation motivates the use of $I_\infty(Y;T)$ as a measure of privacy in lieu of $I(Y;T)$ in $\pf$.

It is worth noting that $I_\infty(U; V)$ is not symmetric in general, i.e., $I_\infty(U; V)\neq I_\infty(V; U)$. Since observing $T$ can only improve, we have $\cp(Y|T)\geq \cp(Y)$; thus $I_\infty(Y; T)\geq 0$. However, $I_\infty(Y; T) = 0$ does not necessarily imply independent of $Y$ and $T$; instead, it means $T$ is useless in estimating $Y$. As an example, consider 
$Y\sim\sBer(p)$ and $P_{T|Y=0}=\sBer(\delta)$ and $P_{T|Y=1} = \sBer(\eta)$ with $\delta, \eta\leq \frac{1}{2}< p$. Then $\cp(Y) = p$ and 
\eq{\cp(Y|T) = \max\{\bar\delta \bar p, \eta p\} + \bar \eta p.}
Thus, if $\bar\delta \bar p\leq \eta p$, then $\cp(Y|T) = \cp(Y)$. This then implies that  $I_\infty(Y;T) = 0$ whereas $Y$ and $T$ are clearly dependent; i.e, $I(Y; T)>0$. While in general $I(Y; T)$ and $I_\infty(Y; T)$ are not related, it can be shown that $I(Y; T)\leq I_\infty(Y; T)$ if $Y$ is uniform (see \cite[Proposition 1]{Asoodeh_TIT19}).  Hence, only with this uniformity assumption, $I_\infty(Y; T)$ implies the independence.

Consider $\Psi({Q_X})=-\sum_{x\in \X}{Q_X}(x)\log\left({Q_X}(x)\right)$ and $\Psi({Q_Y})=\|{Q_Y}\|_\infty$.
Clearly, we have $\Phi(X|T) = H(X|T)$. Note that 
\eqn{}{\Psi(Y|T) = \sum_{t\in \T}P_T(t)||P_{Y|T=t}||_\infty = \cp(Y|T),}
thus both measures $H(X|T)$ and $\cp(Y|T)$ are special cases of the models described in the previous section.  In particular, we can define the corresponding $\sU_{\Phi, \Psi}$ and $\sL_{\Phi, \Psi}$. We will see later that $I(X;T)$ and $\cp(Y|T)$ correspond to Arimoto's mutual information of orders $1$ and $\infty$, respectively. Define 
\eqn{}{\ib^{(\infty, 1)}(R) \coloneqq \sup_{\substack{P_{T|X}:Y\markov X\markov T \\ I(X;T)\leq R}} I_\infty(Y; T).}
This bottleneck functional formulated an interpretable guarantee: 

\vspace{0.2cm}
\centerline{\textit{$\ib^{(\infty, 1)}(R)$ characterizes the best error probability in recovering $Y$}}
\centerline{\textit{among all $R$-bit summaries of $X$}}   
\vspace{0.3cm}
\noindent Recall that the functional $\pf(r)$ aims at extracting maximum information of $X$ while protecting privacy with respect to $Y$. Measuring the privacy in terms of $\cp(Y|T)$, this objective can be better formulated by    
\eqn{}{\pf^{(\infty, 1)}(r) \coloneqq \inf_{\substack{P_{T|X}:Y\markov X\markov T \\ I(X; T)\geq r}} I_\infty(Y; T),}
with the interpretable privacy guarantee:

\vspace{0.2cm}
\centerline{\textit{$\pf^{(\infty, 1)}(r)$ characterizes the smallest probability of revealing private feature $Y$ }}
\centerline{\textit{among all representations of $X$ preserving at least $r$ bits information of $X$}}
\vspace{0.3cm}

Notice that the variable $T$ in the formulations of $\ib^{(\infty, 1)}$ and $\pf^{(\infty, 1)}$ takes values in a set $\T$ of arbitrary cardinality. However, a straightforward application of the 
 Carath\'eodory-Fenchel-Eggleston theorem (see e.g., \cite[Lemma 15.4]{csiszarbook}) reveals that the cardinality of $\T$ can be restricted to $|\X|+1$ without loss of generality. In the following lemma, we prove more basic properties of $\ib^{(\infty, 1)}$ and $\pf^{(\infty, 1)}$.
\begin{lemma}\label{Lemma:Properties_Infty}
    For any $\pxy$ with $Y$ supported on a finite set $\Y$, we have 
    \begin{itemize}
			\item $\ib^{(\infty, 1)}(0)=\pf^{(\infty, 1)}(0)=0$.
			\item $\ib^{(\infty, 1)}(R)= I_\infty(X;Y)$ for any $R\geq H(X)$ and $\pf^{(\infty, 1)}(r) = I_\infty(X; Y)$ for $r\geq H(X)$.
			\item $R\mapsto \exp(\ib^{(\infty, 1)}(R))$ is strictly increasing and concave on the range $(0, I_\infty(X;Y))$.
			\item $r\mapsto \exp(\pf^{(\infty, 1)}(r))$ is strictly increasing, and convex on the range $(0, I_\infty(X;Y))$.
		\end{itemize}
    \end{lemma}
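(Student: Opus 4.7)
The plan is to treat the four bullets in order, combining elementary facts about $\cp(\cdot)$ with a single time-sharing construction. For item 1, $I(X;T)\leq 0$ forces $T\indep X$, and combined with $Y\markov X\markov T$ this gives $T\indep Y$, so $\cp(Y|T)=\cp(Y)$ and $I_\infty(Y;T)=0$; for $\pf^{(\infty,1)}(0)$ the constraint is vacuous, and taking $T$ constant gives $I_\infty(Y;T)=0$, also a lower bound since $\cp(Y|T)\geq\cp(Y)$ always. For item 2, choosing $T=X$ yields $\ib^{(\infty,1)}(H(X))\geq I_\infty(X;Y)$. The reverse inequality rests on the pointwise bound
\[
\max_y P_{Y|T}(y|t) = \max_y \sum_x P_{Y|X}(y|x)P_{X|T}(x|t) \leq \sum_x P_{X|T}(x|t)\max_y P_{Y|X}(y|x),
\]
which, averaged over $T$, gives $\cp(Y|T)\leq \cp(Y|X)$ for every $T$ with $Y\markov X\markov T$. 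For $\pf^{(\infty,1)}(r)$ at $r\geq H(X)$, the constraint $I(X;T)\geq H(X)$ forces $X=g(T)$ almost surely, whence $P_{Y|T=t}=P_{Y|X=g(t)}$ and $\cp(Y|T)=\cp(Y|X)$ exactly.

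Items 3 and 4 both rest on a time-sharing construction: given channels $P^{(1)}_{T|X},P^{(2)}_{T|X}$ with disjoint output alphabets $\T_1,\T_2$ and $\lambda\in[0,1]$, introduce $U\sim\sBer(\lambda)$ independent of $X$ and define $T=T_U$. Disjointness makes $U$ a deterministic function of $T$, and $U\indep X$ gives
\[
I(X;T)=\lambda I(X;T_1)+(1-\lambda)I(X;T_2),\qquad \cp(Y|T)=\lambda \cp(Y|T_1)+(1-\lambda)\cp(Y|T_2).
\]
For $\ib^{(\infty,1)}$, feasibility of each $T_i$ (i.e., $I(X;T_i)\leq R_i$) implies feasibility of $T$ at the rate $\lambda R_1+(1-\lambda)R_2$, which yields concavity of $R\mapsto \cp(Y)\exp(\ib^{(\infty,1)}(R))$, equivalently of $\exp(\ib^{(\infty,1)}(\cdot))$. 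The reversed constraint $I(X;T_i)\geq r_i$ is also preserved under the convex combination, and the same argument proves convexity of $\exp(\pf^{(\infty,1)}(\cdot))$. Strict monotonicity on the asserted range then follows by a standard convex-analysis argument: a concave function on $[0,H(X)]$ equal to $0$ at the left endpoint and to $I_\infty(X;Y)>0$ at the right endpoint cannot be flat on any subinterval on which its value lies strictly below $I_\infty(X;Y)$ without contradicting concavity together with its right-endpoint value; the convex case is symmetric.

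In my view the only genuinely delicate ingredient is the additivity identity $\cp(Y|T)=\lambda\cp(Y|T_1)+(1-\lambda)\cp(Y|T_2)$. It holds as an equality precisely because $\T_1$ and $\T_2$ are chosen disjoint so that $U$ is a function of $T$; any overlap would weaken it to the strict inequality $\cp(Y|T)\geq \lambda\cp(Y|T_1)+(1-\lambda)\cp(Y|T_2)$ (the expected max dominates a convex combination of expected maxes on finer $\sigma$-algebras), at which point both the concavity of $\exp(\ib^{(\infty,1)})$ and the convexity of $\exp(\pf^{(\infty,1)})$ could fail. Everything else reduces to routine properties of $\cp$, the data processing inequality, and the definition of $I_\infty$.
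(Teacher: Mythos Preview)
Your argument is correct and is essentially the paper's approach made explicit: the paper defers to the proof of Theorem~\ref{Thm:IB_Properties}, where concavity/convexity come from the convexity of the set $\M$ (here $\M_{\Phi,\Psi}$ with $\Phi(Q_X)=H(Q_X)$ and $\Psi(Q_Y)=\|Q_Y\|_\infty$), and your time-sharing construction with disjoint output alphabets is exactly how that convexity is verified.

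Two small corrections that do not affect the proof itself. First, in your strict-monotonicity paragraph the concave function $R\mapsto\exp(\ib^{(\infty,1)}(R))$ equals $1$ at the left endpoint and $\exp(I_\infty(Y;X))$ at the right, not $0$ and $I_\infty(X;Y)$; the argument is unchanged. Second, in your closing remark the inequality is reversed: if $\T_1$ and $\T_2$ overlap so that $U$ is no longer a function of $T$, then $(T,U)$ is a refinement of $T$ and $\cp(Y|T)\le\cp(Y|T,U)=\lambda\cp(Y|T_1)+(1-\lambda)\cp(Y|T_2)$. This would indeed break the concavity argument for $\exp(\ib^{(\infty,1)})$, but it would still yield the convexity of $\exp(\pf^{(\infty,1)})$, contrary to what you wrote.
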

   The proof follows the same lines as Theorem~\ref{Thm:IB_Properties} and hence omitted.  Lemma \ref{Lemma:Properties_Infty} in particular implies that inequalities $I(X; T)\leq R$ and $I(X; T)\geq r$ in the definition of $\ib^{(\infty, 1)}$ and $\pf^{(\infty, 1)}$ can be replaced by $I(X; T)= R$ and $I(X; T)= r$, respectively. 
   It can be verified that $I^\infty$ satisfies the data-processing inequality, i.e., $I^\infty(Y; T)\leq I^\infty(Y; X)$ for the Markov chain $Y\markov X\markov T$. Hence, both $\ib^{(\infty, 1)}$ and $\pf^{(\infty, 1)}$ must be smaller than $I_\infty(Y;X)$. 
   The properties listed in Lemma~\ref{Lemma:Properties_Infty} enable us to derive a slightly tighter upper bound for $\pf^{(\infty, 1)}$ as demonstrated in the following. 
   \begin{lemma}\label{Lemma:Bound_Infty}
      For any $\pxy$ with $Y$ supported on a finite set $\Y$, we have 
      \eq{\pf^{(\infty, 1)}(r)\leq \log\left[1+\frac{r}{H(X)}\Big(e^{I_\infty(Y;X)}-1\Big)\right],}
and       
      \eq{\log\left[1+\frac{R}{H(X)}\Big(e^{I_\infty(Y;X)}-1\Big)\right]\leq \ib^{(\infty, 1)}(R)\leq I_{\infty}(Y;X).}
   \end{lemma}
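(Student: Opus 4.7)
The plan is to derive all three inequalities from the structural properties already established in Lemma~\ref{Lemma:Properties_Infty}, together with the data processing inequality for $I_\infty$ that was noted in the paragraph preceding the lemma.

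First, I would dispatch the upper bound $\ib^{(\infty,1)}(R)\leq I_\infty(Y;X)$ immediately: any $T$ appearing in the supremum satisfies the Markov chain $Y\markov X\markov T$, so by the data processing inequality for $I_\infty$ (noted before the lemma), $I_\infty(Y;T)\leq I_\infty(Y;X)$, and taking the sup preserves the inequality.

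Next, I would exploit the concavity/convexity statements of Lemma~\ref{Lemma:Properties_Infty} on the exponentiated functionals. Set
\[
\phi(R)\coloneqq \exp(\ib^{(\infty,1)}(R)),\qquad \psi(r)\coloneqq \exp(\pf^{(\infty,1)}(r)).
\]
By the first two bullets of Lemma~\ref{Lemma:Properties_Infty}, $\phi(0)=\psi(0)=1$ and $\phi(H(X))=\psi(H(X))=e^{I_\infty(Y;X)}$. By the third and fourth bullets, $\phi$ is concave and $\psi$ is convex on $[0,H(X)]$. Writing $R = (1-\tfrac{R}{H(X)})\cdot 0 + \tfrac{R}{H(X)}\cdot H(X)$, concavity of $\phi$ yields
\[
\phi(R)\;\geq\; \Bigl(1-\tfrac{R}{H(X)}\Bigr)\phi(0) + \tfrac{R}{H(X)}\phi(H(X)) \;=\; 1+\tfrac{R}{H(X)}\bigl(e^{I_\infty(Y;X)}-1\bigr),
\]
and taking $\log$ of both sides produces the chord lower bound on $\ib^{(\infty,1)}(R)$. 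Analogously, convexity of $\psi$ reverses the inequality and, after taking logarithms, produces the chord upper bound on $\pf^{(\infty,1)}(r)$.

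There is really no hard step here — all the analytic content is already packaged in Lemma~\ref{Lemma:Properties_Infty} and in the DPI for $I_\infty$. The only thing worth being careful about is that the concavity and convexity in Lemma~\ref{Lemma:Properties_Infty} are stated on the open range $(0, I_\infty(X;Y))$; extending them to the closed interval $[0,H(X)]$ to justify the chord inequalities at the endpoints uses the continuity (monotonicity + boundedness) of $\phi$ and $\psi$, which is immediate from the remaining bullets of the same lemma. Once that small verification is in place, the three inequalities follow in a single line each.
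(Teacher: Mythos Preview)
Your proposal is correct and essentially mirrors the paper's argument: the paper likewise invokes the convexity of $r\mapsto e^{\pf^{(\infty,1)}(r)}$ and concavity of $R\mapsto e^{\ib^{(\infty,1)}(R)}$ from Lemma~\ref{Lemma:Properties_Infty} together with the endpoint values $(0,1)$ and $(H(X),e^{I_\infty(Y;X)})$ to obtain the chord bounds, with the $\ib$ upper bound coming from the DPI noted before the lemma. The only extra content in the paper's proof is an explicit erasure-channel construction ($P_{T_\delta|X}(x|x)=\bar\delta$, $P_{T_\delta|X}(\perp|x)=\delta$, $\delta=1-R/H(X)$) showing that the chord lower bound on $\ib^{(\infty,1)}(R)$ is actually attained; you may wish to add this as a one-line remark.
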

   The proof of this lemma (and any other results in this section) is given in Appendix~\ref{Appendix_ProofSecFamily}.
   This lemma shows that 
   the gap between $I_\infty(Y; X)$ and $\ib^{(\infty,1)}(R)$ when $R$ is sufficiently close to $H(X)$ behaves like 
   \eq{I_\infty(Y; X) - \ib^{(\infty,1)}(R)\leq I_\infty(Y; X)-\log\left[1+\frac{R}{H(X)}\Big(e^{I_\infty(Y;X)}-1\Big)\right]\approx \Big(1-e^{-I_\infty(Y;X)}\Big)\Big(1-\frac{R}{H(X)}\Big).}
Thus, $\ib^{(\infty,1)}(R)$ approaches $I_\infty(Y; X)$ as $R\to H(X)$ at least linearly.

In the following theorem, we apply the technique delineated in Section~\ref{Sec:Evaluate_Family} to derive closed form expressions for $\ib^{(\infty,1)}$ and $\pf^{(\infty,1)}$ for the binary symmetric case, thereby establishing similar results as Mr and Mrs. Gerber's Lemma.
\begin{theorem}\label{Thm:Gerber_Infty}
    For $X\sim \sBer(p)$ and $P_{Y|X} = \bsc(\delta)$ with $p, \delta\leq \frac{1}{2}$, we have 
    \eqn{Eq:PF_infty}{\pf^{(\infty, 1)}(r) = \log\left[\frac{\bar\delta - (h_\mathsf{b}(p) - r)(\frac{1}{2}-\delta)}{1-\delta*p}\right],}
    and
    \eqn{Eq:IB_infty}{\ib^{(\infty, 1)}(R) = \log\left[\frac{1-\delta*h^{-1}_\mathsf{b}(h_\mathsf{b}(p)-R)}{1-\delta*p}\right],}
    where $\bar\delta = 1-\delta$.
\end{theorem}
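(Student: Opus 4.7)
Following Section~\ref{Sec:Evaluate_Family} with $\Phi(Q) = H(Q)$ and $\Psi(Q) = \|Q\|_\infty$, both \eqref{Eq:IB_infty} and \eqref{Eq:PF_infty} reduce to evaluating, at $p$, the upper concave (resp.\ lower convex) envelope of the Lagrangian functional
\eq{F_\beta(q) \coloneqq \max(q*\delta,\, 1-q*\delta) + \beta h_\mathsf{b}(q), \qquad q \in [0,1],\, \beta \geq 0,}
where I have used that $p, \delta \leq \tfrac12$ forces $q*\delta \leq \tfrac12$ iff $q \leq \tfrac12$. The argument rests on two structural facts: (i) $F_\beta(q) = F_\beta(1-q)$, inherited from $(1-q)*\delta = 1 - q*\delta$ and $h_\mathsf{b}(q) = h_\mathsf{b}(1-q)$; and (ii) on each of $[0, \tfrac12]$ and $[\tfrac12, 1]$, $F_\beta$ is a linear function plus $\beta h_\mathsf{b}$ and hence strictly concave, while at $q = \tfrac12$ it has a convex kink of slope jump $2(1-2\delta) > 0$.

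For \eqref{Eq:IB_infty}: strict concavity of $F_\beta$ on $[0, \tfrac12]$ combined with $F_\beta'(0^+) = +\infty$ and $F_\beta'(\tfrac12^-) = -(1-2\delta) < 0$ singles out a unique interior maximizer $q^* = q^*(\beta) \in (0, \tfrac12)$ with $h_\mathsf{b}'(q^*) = (1-2\delta)/\beta$; by symmetry, $1-q^*$ maximizes on $[\tfrac12, 1]$. Thus $\conc[F_\beta]$ coincides with $F_\beta$ on $[0, q^*] \cup [1-q^*, 1]$ and equals the horizontal chord at height $F_\beta(q^*)$ on $[q^*, 1-q^*]$. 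For $p \in [q^*, \tfrac12]$ this chord is attained by a binary $T^*$ with $P_{X|T^*=0} = \sBer(q^*)$, $P_{X|T^*=1} = \sBer(1-q^*)$, and mixing weight $\lambda = (1-q^*-p)/(1-2q^*)$ chosen to recover $\sBer(p)$. Using $h_\mathsf{b}(q^*) = h_\mathsf{b}(1-q^*)$ and $(1-q^*)*\delta = 1 - q^**\delta$, a direct computation gives $H(X|T^*) = h_\mathsf{b}(q^*)$ and $\cp(Y|T^*) = 1 - q^**\delta$, both independent of $\lambda$. Solving $I(X;T^*) = h_\mathsf{b}(p) - h_\mathsf{b}(q^*) = R$ for $q^* = h_\mathsf{b}^{-1}(h_\mathsf{b}(p)-R)$ and normalizing $\cp(Y|T^*)$ by $\cp(Y) = 1 - p*\delta$ yields \eqref{Eq:IB_infty}.

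For \eqref{Eq:PF_infty}: strict concavity of $F_\beta$ on each half-interval forces $\conv[F_\beta]$ there to coincide with the endpoint chord; the two chords meet at $(\tfrac12, \tfrac12 + \beta h_\mathsf{b}(\tfrac12))$, and a slope comparison shows the resulting V-shape is itself convex (thus equal to $\conv[F_\beta]$ on $[0,1]$) iff $\beta \leq (1-2\delta)/(2h_\mathsf{b}(\tfrac12))$; in the complementary regime $\conv[F_\beta]$ degenerates to the single horizontal chord at height $1-\delta$. Reading off the attaining $T^*$ in each regime identifies two extremal strategies: (a) binary $T^*$ with $P_{X|T^*=0} = \sBer(0)$, $P_{X|T^*=1} = \sBer(\tfrac12)$ and $P_{T^*}(1) = 2p$, giving $H(X|T^*) = 2p\, h_\mathsf{b}(\tfrac12)$ and $\cp(Y|T^*) = (1-2p)(1-\delta) + 2p\cdot\tfrac12 = 1 - p*\delta = \cp(Y)$ (so $I_\infty(Y;T^*) = 0$); and (b) $T^* = X$, giving $(H(X|T^*), \cp(Y|T^*)) = (0,\, 1-\delta)$. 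Time-sharing (a) and (b) with weight $\alpha$ (a three-valued auxiliary) parametrizes the lower boundary of $\{(I(X;T), \cp(Y|T))\}$ for $r \in [h_\mathsf{b}(p) - 2p\, h_\mathsf{b}(\tfrac12),\, h_\mathsf{b}(p)]$ via $I(X;T) = h_\mathsf{b}(p) - 2p\, h_\mathsf{b}(\tfrac12)\alpha$ and $\cp(Y|T) = (1-\delta) - \alpha p(1-2\delta)$. Eliminating $\alpha$, adopting base-$2$ logarithms so that $h_\mathsf{b}(\tfrac12) = 1$ and $(1-2\delta)/(2h_\mathsf{b}(\tfrac12)) = \tfrac12 - \delta$, and dividing $\cp(Y|T)$ by $\cp(Y)$ delivers \eqref{Eq:PF_infty} on this range; below the threshold $r < h_\mathsf{b}(p) - 2p$, one has $\pf^{(\infty,1)}(r) = 0$.

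The principal obstacle is the envelope analysis of $F_\beta$, in particular the slope comparison at $q = \tfrac12$ that isolates the two regimes of $\conv[F_\beta]$ and pins down the binary attaining strategies; this is what ensures no higher-cardinality auxiliary can outperform the two-point (or time-shared three-point) constructions. Thereafter the formulas reduce to bookkeeping, with the symmetry $F_\beta(q) = F_\beta(1-q)$ substantially simplifying both the envelope description and the computation of $(H(X|T^*), \cp(Y|T^*))$.
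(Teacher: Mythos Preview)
Your proof is correct and follows essentially the same route as the paper's: both arguments reduce to computing $\conc[F_\beta]$ and $\conv[F_\beta]$ for $F_\beta(q)=\max(q*\delta,1-q*\delta)+\beta h_\mathsf{b}(q)$, exploit the symmetry $F_\beta(q)=F_\beta(1-q)$ and the piecewise strict concavity on $[0,\tfrac12]$ and $[\tfrac12,1]$, and then read off the optimal $T^*$. For $\ib^{(\infty,1)}$ your construction with $P_{X|T^*=0}=\sBer(q^*)$, $P_{X|T^*=1}=\sBer(1-q^*)$ is identical to the paper's.

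The only presentational difference is in the $\pf^{(\infty,1)}$ part. The paper parametrizes the lower boundary directly by a single ternary auxiliary supported on $\{\sBer(0),\sBer(1),\sBer(\tfrac12)\}$ with weights $(1-p-\tfrac{\alpha}{2},\,p-\tfrac{\alpha}{2},\,\alpha)$, $\alpha\in[0,2p]$, arising at the critical $\beta$ where $F_\beta(0)=F_\beta(\tfrac12)$. You instead identify the two endpoints (strategy~(a) at $\beta$ small, strategy~(b) at $\beta$ large) and interpolate by time-sharing; after merging the shared $\sBer(0)$ atom this is exactly the paper's ternary family reparametrized. Both routes are equally valid. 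Your version has the virtue of making explicit that \eqref{Eq:PF_infty} is only literally correct on $r\in[h_\mathsf{b}(p)-2p,\,h_\mathsf{b}(p)]$ (base~2) and that $\pf^{(\infty,1)}(r)=0$ below, a point the paper leaves implicit.
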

As described in Section~\ref{Sec:Evaluate_Family}, to compute 
$\ib^{(\infty, 1)}$ and $\pf^{(\infty, 1)}$ it suffices to derive the convex and concave envelopes of the mapping  $F_\beta^{(\infty, 1)}(q)\coloneqq \cp(Y')+\beta H(X')$ where $X'\sim \sBer(q)$ and $Y'$ is the result of passing $X'$ through $\bsc(\delta)$, i.e., $Y'\sim \sBer(\delta*q)$. In this case, $\cp(Y') = \max\{\delta*q, 1-\delta*q\}$ and $F_\beta^{(\infty, 1)}$ can be expressed as  
 \eqn{}{q\mapsto F_\beta^{(\infty, 1)}(q) = \max\{\delta*q, 1-\delta*q\}+\beta h_\mathsf{b}(q).}
This function is depicted in Fig.~\ref{fig:PF_infty}.  
 \begin{figure}[t]
	\centering
	\begin{subfigure}[t]{0.32\textwidth}
		\includegraphics[height=4cm, width=1\columnwidth]{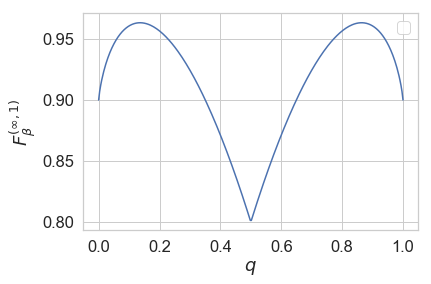}
		\caption{$\beta = 0.2$}
		\label{fig:PF_infty1}
		\end{subfigure}
		\begin{subfigure}[t]{0.32\textwidth}
		\includegraphics[height=4cm, width=1\columnwidth]{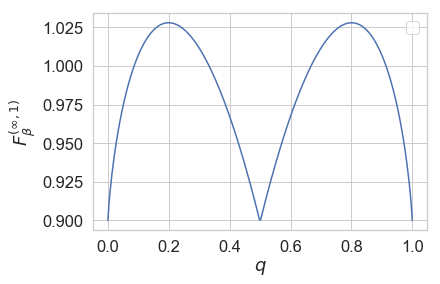}
		\caption{$\beta = 0.0.4$}
		\label{fig:PF_infty2}
		\end{subfigure}
		\begin{subfigure}[t]{0.32\textwidth}
		\includegraphics[height=4cm, width=1\columnwidth]{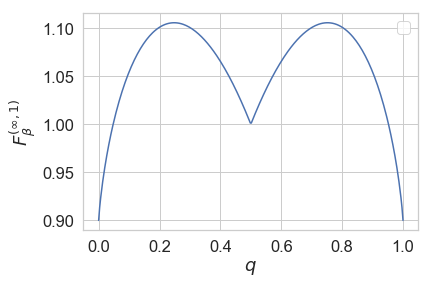}
		\caption{$\beta = 0.5$}
		\label{fig:PF_infty3}
		\end{subfigure}
		\caption{ The mapping $q\mapsto F_\beta^{(\infty, 1)}(q)  = \cp(Y')+\beta H(X')$ where $X'\sim \sBer(q)$ and $Y' \sim \sBer(q) \bsc(0.1)$.}   
\label{fig:PF_infty}
\end{figure}
The detailed derivation of convex and concave envelope of $F_\beta^{(\infty,1)}$ is given in 
Appendix~\ref{Appendix_ProofSecFamily}. The proof of this theorem also reveals the following intuitive statements. If $X\sim \sBer(p)$ and $P_{Y|X} = \bsc(\delta)$, then among all random variables $T$ satisfying $Y\markov X\markov T$ and $H(X|T) \leq \lambda$, the minimum $\cp(Y|T)$ is given by $\bar\delta - \lambda(0.5-\delta)$. 
Notice that, without any information constraint (i.e., $\lambda = 0$), $\cp(Y|T) = \cp(Y|X) = \bar\delta$. Perhaps surprisingly, this shows that the mutual information constraint has a linear effect on the privacy of $Y$.  Similarly, to prove \eqref{Eq:IB_infty}, we show that among all $R$-bit representations $T$ of $X$, the best achievable accuracy $\cp(Y|T)$ is given by $1-\delta*h^{-1}_\mathsf{b}(h_\mathsf{b}(p)-R)$.  This can be proved by combining Mrs. Gerber's Lemma (cf. Lemma~\ref{lemma: Gerber}) and Fano's inequality as follows. For all $T$ such that $H(X|T)\geq \lambda$, the minimum of $H(Y|T)$ is given by $h_\mathsf{b}(\delta*h^{-1}_\mathsf{b}(\lambda))$. Since by Fano's inequality, $H(Y|T)\leq h_\mathsf{b}(1-\cp(Y|T))$, we obtain $\delta*h^{-1}_\mathsf{b}(\lambda)\leq 1-\cp(Y|T)$
which leads to the same result as above. Nevertheless, in Appendix~\ref{Appendix_ProofSecFamily} we give another proof based on the discussion of Section~\ref{Sec:Evaluate_Family}. 

\subsection{Arimoto Bottleneck Problems}\label{Sec:Arimoto}
The bottleneck framework proposed in the last section benefited from interpretable guarantees brought forth by the quantity $I_\infty$.  In this section, we define a parametric family of statistical quantities, the so-called \textit{Arimoto's mutual information}, which includes both Shannon's mutual information and $I_\infty$ as extreme cases. 

\begin{definition}[\cite{Arimoto_Original_Paper}]\label{Def:Arimoto}
Let $U\sim P_U$ and $V\sim P_V$ be two random variables supported
over finite sets $\U$ and $\V$, respectively. Their Arimoto's mutual information of order $\alpha>1$ is defined as
\eqn{}{I_\alpha(U; V)  = H_\alpha(U) - H_\alpha(U|V),}
where 
\eqn{}{H_\alpha(U) \coloneqq \frac{\alpha}{1-\alpha}\log||P_U||_\alpha,} is the R\'enyi entropy of order $\alpha$ and \eqn{}{H_\alpha(U|V) \coloneqq \frac{\alpha}{1-\alpha}\log\sum_{v\in \V}P_V(v)||P_{U|V=v}||_\alpha,}
is the Arimoto's conditional entropy of order $\alpha$. 
\end{definition}
By continuous extension, one can define $I-\alpha(U; V)$ for $\alpha =1$ and $\alpha = \infty$ as $I(U; V)$ and $I_\infty(U; V)$, respectively. That is,
\eqn{}{\lim_{\alpha\to 1^+} I_\alpha(U; V) = I(U; V), \qquad \text{and}\qquad \lim_{\alpha\to \infty} I_\alpha(U; V) = I_\infty(U; V).}
Arimoto's mutual information was first introduced by Arimoto \cite{Arimoto_Original_Paper} and then later revisited by Liese and Vajda in \cite{Liese_fDivergence} and more recently by Verd\'u in \cite{Verdu_ALPHA}. More in-depth analysis and properties of $I_\alpha$  can be found in \cite{Arimoto's_Conditional_Entropy}. It is shown in \cite[Lemma 1]{Liao_maximal_leakage} that $I_\alpha(U; V)$ for $\alpha\in [1, \infty]$ quantifies the minimum loss in recovering $U$ given $V$ where the loss is measured in terms of the so-called $\alpha$-loss. This loss function reduces to logarithmic loss \eqref{Distortion_Loss} and $\cp(U|V)$ for $\alpha = 1$ and $\alpha = \infty$, respectively.  This sheds light on the utility and/or privacy guarantee  promised by a constraint on Arimoto's mutual information. It is now natural to use $I_\alpha$ for defining a family of bottleneck problems.

\begin{definition}
Given a pair of random variables $(X, Y)\sim \pxy$ over finite sets $\X$ and $\Y$ and $\alpha, \gamma\in[1, \infty]$, we define $\ib^{(\alpha, \gamma)}$ and $\pf^{(\alpha, \gamma)}$ as 
\eqn{}{\mathsf{IB}^{(\alpha, \gamma)}(R)\coloneqq \sup_{\substack{P_{T|X}:Y\markov X\markov T \\ I_\gamma(X;T)\leq R}} I_\alpha(Y; T),}
and 
\eqn{}{\mathsf{PF}^{(\alpha, \gamma)}(r)\coloneqq \inf_{\substack{P_{T|X}:Y\markov X\markov T \\ I_\gamma(X;T)\geq r}} I_\alpha(Y; T),}
\end{definition}
 Of course, $\mathsf{IB}^{(1, 1)}(R)  = \ib(R)$ and $\mathsf{PF}^{(1, 1)}(r)  = \pf(r)$. 
It is known that Arimoto's mutual information satisfies the data-processing inequality \cite[Corollary 1]{Arimoto's_Conditional_Entropy}, i.e., $I_\alpha(Y; T)\leq I_\alpha(Y;X)$ for the Markov chain $Y\markov X\markov T$. On the other hand, $I_\gamma(X; T)\leq H_\gamma(X)$. Thus, both $\ib^{(\alpha, \gamma)}(R)$ and $\pf^{(\alpha, \gamma)}(r)$ equal $I_\alpha(Y;X)$ for $R,r\geq H_\gamma(X)$.
Note also that  
$H_\alpha(Y|T) = \frac{\alpha}{1-\alpha}\log\Psi(Y|T)$ where $\Psi(Y|T)$ (see \eqref{Def:Psi}) corresponding to the function $\Psi(Q_Y) = ||Q_Y||_\alpha$. Consequently, $\ib^{(\alpha, \gamma)}$ and $\pf^{(\alpha, \gamma)}$ are characterized by the lower and upper boundary of $\M_{\Phi, \Psi}$, defined in \eqref{Def:Set_M2}, with respect to $\Phi(Q_X) = ||Q_X||_\gamma$ and   $\Psi(Q_Y) = ||Q_Y||_\alpha$.  
Specifically, we have  
\eqn{IB_alpha1}{\ib^{(\alpha, \gamma)}(R) = H_\alpha(Y) + \frac{\alpha}{\alpha-1}\log  \sU_{\Phi,\Psi}(\zeta), }
where $\zeta = e^{-(1-\frac{1}{\gamma})(H_\gamma(X)-R)}$, and 
\eqn{PF_alpha1}{\pf^{(\alpha, \gamma)}(r) = H_\alpha(Y) + \frac{\alpha}{\alpha-1}\log\sL_{\Phi,\Psi}(\zeta),}
where $\zeta = e^{-(1-\frac{1}{\gamma})(H_\gamma(X)-r)}$ and $\Phi(Q_X) = \|Q_X\|_\gamma$ and $\Psi(Q_Y) = \|Q_Y\|_\alpha$. This paves the way to apply the technique described in Section~\ref{Sec:Evaluation} to compute $\ib^{(\alpha, \gamma)}$ and $\pf^{(\alpha, \gamma)}$. Doing so requires the upper concave and lower convex envelope of the mapping $Q_X\mapsto \|Q_Y\|_\alpha-\beta \|Q_X\|_\gamma$ for some $\beta\geq 0$, where $Q_Y\sim Q_X P_{Y|X}$. In the following theorem, we drive these envelopes and give closed form expressions for $\ib^{(\alpha, \gamma)}$ and $\pf^{(\alpha, \gamma)}$ for a special case where $\alpha = \gamma\geq 2$.

\begin{theorem}\label{Thm:Gerber_AlphaGamma}
Let $X\sim \sBer(p)$ and $P_{Y|X} = \bsc(\delta)$ with $p, \delta\leq \frac{1}{2}$. We have for $\alpha\geq 2$  
\eq{\pf^{(\alpha, \alpha)}(r) = \frac{\alpha}{1-\alpha}\log\left[\frac{\|p*\delta\|_\alpha}{\|q*\delta\|_\alpha}\right],} 
where  $\|a\|_\alpha \coloneqq \|[a, \bar a]\|_\alpha$ for $a\in [0,1]$ and  $ q\leq p$ solves 
$$\frac{\alpha}{1-\alpha}\log\left[\frac{\|p\|_\alpha}{\|q\|_\alpha}\right] = r.$$ 
Moreover, 
\eq{\ib^{(\alpha, \alpha)}(R)=\frac{\alpha}{\alpha-1}\log\left[\frac{\bar\lambda\|\delta\|_\alpha + \lambda\|\frac{q}{z}*\delta\|_\alpha}{\|p*\alpha\|_\alpha}\right],}
where $z= \max\{2p, \lambda\}$ and $\lambda\in [0,1]$ solves
\eq{\frac{\alpha}{\alpha-1}\log\frac{\bar\lambda + \lambda\|\frac{p}{z}\|_\alpha}{\|p\|_\alpha} = R.}
\end{theorem}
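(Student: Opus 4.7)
The plan is to apply the three-step convex-analytic recipe from Section~\ref{Sec:Evaluate_Family} with $\Phi(Q_X) = \|Q_X\|_\alpha$ and $\Psi(Q_Y) = \|Q_Y\|_\alpha$, mirroring the derivation of Mr.\ and Mrs.\ Gerber's Lemma (Lemma~\ref{lemma: Gerber}) but with R\'enyi $\alpha$-norms in place of Shannon entropies. Using \eqref{IB_alpha1} and \eqref{PF_alpha1}, it suffices to determine the upper concave and lower convex envelopes at $P_X = \sBer(p)$ of the single-parameter family
\eq{F_\beta(q) \coloneqq \|[q*\delta,\,\overline{q*\delta}]\|_\alpha - \beta\,\|[q,\,\bar q]\|_\alpha, \qquad q \in [0,1],\ \beta\geq 0,}
and then translate envelope values back using the identity $I_\alpha(Y;T) = \tfrac{\alpha}{\alpha-1}\log\bigl(\sum_t P_T(t)\|P_{Y|T=t}\|_\alpha/\|P_Y\|_\alpha\bigr)$ together with its analogue for $X$.

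The first step is a qualitative analysis of $F_\beta$. The map $q\mapsto \|[q,\bar q]\|_\alpha$ is a norm composed with an affine embedding, hence convex on $[0,1]$, strictly so for $\alpha>1$, and symmetric about $q=1/2$; the same holds for $q\mapsto \|[q*\delta,\overline{q*\delta}]\|_\alpha$, whose range is compressed into $[\delta,\bar\delta]$. Hence $F_\beta$ is symmetric about $1/2$ and, in the regime $\alpha\geq 2$, exhibits exactly one of three shapes as $\beta$ grows: globally convex on $[0,1]$, then convex on $[0,q^*]\cup[1-q^*,1]$ and concave on $[q^*,1-q^*]$, and finally globally concave. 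Establishing this single-inflection structure on $[0,1/2]$ is the principal analytic obstacle; I would carry it out by computing $F_\beta''$ explicitly and showing, using the hypothesis $\alpha\geq 2$ to control the relative curvatures of $\|q\|_\alpha$ and $\|q*\delta\|_\alpha$, that $F_\beta''$ changes sign at most once on $(0,1/2)$.

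For $\pf^{(\alpha,\alpha)}(r)$ one reads off $\conv[F_\beta]$ at $p$. By the symmetry $F_\beta(q)=F_\beta(\bar q)$ together with the shape above, this envelope at any $p\leq 1/2$ is realized by the chord through $(q, F_\beta(q))$ and $(\bar q, F_\beta(\bar q))$ for a unique $q\in [0, p]$ lying on a convex branch, with the chord tangent to $F_\beta$ at each endpoint. The corresponding $T$ is binary with $P_{X|T=0} = \sBer(q)$, $P_{X|T=1} = \sBer(\bar q)$, and weights $(\lambda,\bar\lambda)$ obeying $\lambda q + \bar\lambda\bar q = p$. Since $\|q\|_\alpha = \|\bar q\|_\alpha$ and $\|q*\delta\|_\alpha = \|\bar q *\delta\|_\alpha$, the weighted sums collapse and yield
\eq{I_\alpha(X;T) = \tfrac{\alpha}{1-\alpha}\log\tfrac{\|p\|_\alpha}{\|q\|_\alpha},\qquad I_\alpha(Y;T) = \tfrac{\alpha}{1-\alpha}\log\tfrac{\|p*\delta\|_\alpha}{\|q*\delta\|_\alpha}.}
Setting the first to $r$ implicitly determines $q$ and produces the stated closed form for $\pf^{(\alpha,\alpha)}(r)$.

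For $\ib^{(\alpha,\alpha)}(R)$ one instead computes $\conc[F_\beta]$ at $p$. Since $F_\beta$ is convex near $q=0$ and concave around $q=1/2$, its concave envelope at $p\leq 1/2$ is given by the chord from $(0, F_\beta(0))$ to an interior tangency point $(p/z,\,F_\beta(p/z))$; when $p$ already lies in the concave region this degenerates to the pointwise value, a boundary case encoded in the theorem by $z=2p$ (so $p/z=1/2$). The corresponding $T$ is binary with $P_{X|T=0} = \delta_0$ and $P_{X|T=1} = \sBer(p/z)$, and the consistency relation $\lambda\cdot(p/z)=p$ combined with the requirement $p/z\leq 1/2$ forces $z = \max\{\lambda,2p\}$. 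Plugging into the $I_\alpha$ identities, using $\|\delta_0\|_\alpha = 1$ and $P_{Y|T=0} = \sBer(\delta)$, recovers the claimed expression, with $\lambda$ implicitly fixed by $I_\alpha(X;T)=R$. The remaining verification, beyond the curvature analysis, is the tangency condition that pins down $p/z$ uniquely, which reduces to a single-variable monotonicity argument on $F_\beta'$.
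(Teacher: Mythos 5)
Your overall strategy --- reducing both quantities via \eqref{IB_alpha1} and \eqref{PF_alpha1} to the lower convex and upper concave envelopes of $F_\beta(q)=\|q*\delta\|_\alpha-\beta\|q\|_\alpha$ evaluated at $q=p$, and reading off the optimal $T$ from the mixing distributions that realize the envelope --- is exactly the paper's argument. Your treatment of the $\pf^{(\alpha,\alpha)}$ branch (binary $T$ with components $\sBer(q)$ and $\sBer(\bar q)$, the weighted sums collapsing by the symmetry $\|q\|_\alpha=\|\bar q\|_\alpha$) matches the paper's. The single-inflection curvature claim that you defer is asserted at the same level of detail in the paper (``a straightforward computation shows\dots''), so it is not a point of divergence, though a self-contained proof would need it.

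The genuine gap is in the $\ib^{(\alpha,\alpha)}$ branch, in the regime $z=2p$ (equivalently $\lambda\in[0,2p]$). You claim this case ``degenerates to the pointwise value'' and is realized by a binary $T$ with $P_{X|T=0}=\sBer(0)$ and $P_{X|T=1}=\sBer(p/z)=\sBer(\tfrac12)$. Such a $T$ with weights $(\bar\lambda,\lambda)$ induces a marginal with mean $\lambda/2$, which equals $p$ only at the single point $\lambda=2p$; for $\lambda<2p$ the consistency constraint $\sum_i\lambda_iQ^i=P_X$ fails, so your construction is not a valid test channel there, and your own relation $\lambda\cdot(p/z)=p$ is violated. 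This regime is not degenerate: it is the flat-top case in which the concave envelope at $p$ is a mixture of the \emph{three} points $q\in\{0,\tfrac12,1\}$, and it sweeps out an entire range of rates $R$ as $\lambda$ varies over $[0,2p]$. The paper realizes it with a ternary $T$, namely $P_{X|T=0}=\sBer(0)$, $P_{X|T=1}=\sBer(1)$, $P_{X|T=2}=\sBer(\tfrac12)$ with marginal $\big(1-p-\tfrac{\lambda}{2},\,p-\tfrac{\lambda}{2},\,\lambda\big)$. The final formula happens to be insensitive to the atom you dropped, because $\|[1,0]\|_\alpha=\|[0,1]\|_\alpha=1$ and $\|\bar\delta\|_\alpha=\|\delta\|_\alpha$, but as stated your optimal $T$ is wrong and achievability for $\lambda<2p$ does not follow; the third atom at $\sBer(1)$ is needed to restore the marginal constraint.
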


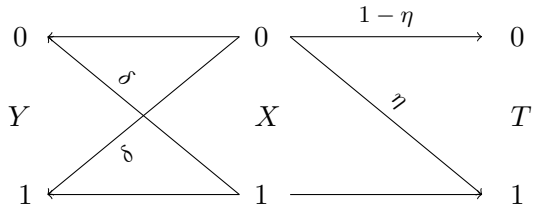
\begin{figure}[t]
		\centering
\begin{tikzpicture}[scale=1.5]
		\node (a) [circle] at (0,0) {$1~$};
		\node (b) [circle] at (0,1.4) {$0~~$};
		\node (c) [circle] at (2,0) {$~1$};
		\node (d) [circle] at (2,1.4) {$~0$};
		\node (ee1) [circle] at (2.1,0.7) {$X$};
		\node (ee2) [circle] at (-0.1,0.7) {$Y$};
		\node (ee3) [circle] at (4.35,0.7) {$T$};
		\draw[<-] (0.15,0) -- (1.85,0) node[pos=.5,sloped,below] {};
		\draw[<-] (0.15,0) -- (1.85,1.4) node[pos=.35,sloped,below] {\footnotesize{$\delta$}};
		\draw[<-] (0.15,1.4) -- (1.85,0) node[pos=.35,sloped,above] {\footnotesize{$\delta$}};
		\draw[<-] (0.15,1.4) -- (1.85,1.4) node[pos=.5,sloped,above] {};
		\node (e) [circle] at (2.3,0) {};
		\node (f) [circle] at (2.3,1.4) {};
		\node (g) [circle] at (4.27,0) {$~1$};
		\node (h) [circle] at (4.27,1.4) {$~0$};
		\draw[->] (2.3,0) -- (4,0) node[pos=.5,sloped,below] {};
		\draw[->] (2.3,1.4) -- (4,1.4) node[pos=.5,sloped,above] {\footnotesize{$1-\eta$}};
		\draw[->] (2.3,1.4) -- (4, 0) node[pos=.5,sloped,above]{\footnotesize{$\eta$}}; 
		\end{tikzpicture}	
	\caption{The structure of the optimal $P_{T|X}$ for $\pf^{(\infty, \infty)}$ when $P_{Y|X} = \bsc(\delta)$ and $X\sim \sBer(p)$ with $\delta,p\in [0, \frac{1}{2}]$. If the accuracy constraint is $\cp(X|T) \geq  \lambda$ (or equivalently $I_\infty(X|T) \geq  \log\frac{\lambda}{\bar p}$), then the  parameter of optimal $P_{T|X}$ is given by $\eta = \frac{\bar\lambda}{\bar p}$, leading to $\cp(Y|T) = \delta*\lambda$. }
\label{fig:Optimal_Filter}
\end{figure}

By letting $\alpha\to \infty$, this theorem indicates that for $X$ and $Y$ connected through $\bsc(\delta)$ and all variables $T$ forming $Y\markov X\markov T$, we have
\eqn{BSC_PF_Infty}{\cp(X|T)\geq \lambda \quad \Longrightarrow \quad \cp(Y|T)\geq \delta*\lambda,}
which can be shown to be achieved $T^*$ generated by the following channel (see Fig.~\ref{fig:Optimal_Filter})
\begin{equation}\label{Z_channel}
	    P_{T^*|X}=\begin{bmatrix}
	\frac{\lambda-p}{\bar p}&\frac{\bar\lambda}{\bar p}\\
	0 & 1
	\end{bmatrix}.
	\end{equation}
Note that, by assumption, $p\leq \frac{1}{2}$, and hence the event $\{X=1\}$ is less likely than $\{X=0\}$. Therefore, \eqref{BSC_PF_Infty} demonstrates that to ensure correct recoverability of $X$ with probability at lest $\lambda$, the most private approach (with respect to $Y$) is to obfuscate the higher-likely event $\{X=0\}$ with probability $\frac{\bar\lambda}{\bar p}$. As demonstrated in \eqref{BSC_PF_Infty} the optimal privacy guarantee is \textit{linear} in the utility parameter in the binary symmetric case. This is in fact a special case of the larger result recently proved in \cite[Theorem 1]{Asoodeh_TIT19}: the infimum of  $\cp(Y|T)$ over all variables $T$ such that $\cp(X|T)\geq \lambda$ is \textit{piece-wise linear} in $\lambda$, on equivalently, the mapping $e^{r}\mapsto \exp(\pf^{(\infty, \infty)}(r))$ is piece-wise linear. 

Computing $\pf^{(\alpha, \gamma)}$ analytically for every $\alpha, \gamma>1$ seems to be challenging, however, the following lemma provides  bounds for $\pf^{(\alpha, \gamma)}$ and $\ib^{(\alpha, \gamma)}$ in terms of $\pf^{(\infty, \infty)}$ and $\ib^{(\infty, \infty)}$, respectively. 
\begin{lemma}\label{Lemma:Alpha_Gamma_infinity}
    For any pair of random variables $(X,Y)$ over finite alphabets and $\alpha, \gamma>1$, we have 
    \eq{\frac{\alpha}{\alpha-1}\pf^{(\infty, \infty)}(f(r)) -\frac{\alpha}{\alpha-1} H_\infty(Y) + H_\alpha(Y)\leq \pf^{(\alpha, \gamma)}(r)\leq \pf^{(\infty, \infty)}(g(r)) + H_\alpha(Y)-H_\infty(Y),}
    and 
    \eq{\frac{\alpha}{\alpha-1}\ib^{(\infty, \infty)}(f(R)) -\frac{\alpha}{\alpha-1} H_\infty(Y) + H_\alpha(Y)\leq \ib^{(\alpha, \gamma)}(R)\leq \ib^{(\infty, \infty)}(g(R)) + H_\alpha(Y)-H_\infty(Y),}
    where $f(a) = \max\{a-H_\gamma(X) + H_\infty(X), 0\}$ and $g(b) = \frac{\gamma-1}{\gamma}b + H_\infty(X) - \frac{\gamma-1}{\gamma}H_\gamma(X)$.
\end{lemma}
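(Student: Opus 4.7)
The plan is to derive both bounds by first establishing two ``sandwich'' inequalities that relate Arimoto's mutual information $I_\alpha$ to $I_\infty$ (and similarly $I_\gamma$ to $I_\infty$), and then to combine these inequalities with feasibility comparisons between the optimization programs defining $\pf^{(\alpha,\gamma)}$ and $\pf^{(\infty,\infty)}$ (and analogously for $\ib$).

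\textbf{Step 1: Sandwich inequalities.} For any probability distribution $Q$ on a finite set, H\"older's inequality and monotonicity of $\ell^p$-norms give $\|Q\|_\infty\le\|Q\|_\alpha\le\|Q\|_\infty^{1-1/\alpha}$ for $\alpha>1$. Averaging over $v$ with weights $P_V(v)$, using Jensen on the concave function $x\mapsto x^{1-1/\alpha}$ for the upper side, yields
\[
\cp(U|V)\;\le\;\sum_v P_V(v)\|P_{U|V=v}\|_\alpha\;\le\;\cp(U|V)^{1-1/\alpha}.
\]
Taking $\tfrac{\alpha}{1-\alpha}\log(\cdot)$ (negative for $\alpha>1$) gives
\[
H_\infty(U|V)\;\le\;H_\alpha(U|V)\;\le\;\tfrac{\alpha}{\alpha-1}H_\infty(U|V),
\]
and similarly $H_\infty(U)\le H_\alpha(U)\le\tfrac{\alpha}{\alpha-1}H_\infty(U)$. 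Subtracting, I obtain the key two-sided bound
\[
H_\alpha(Y)-\tfrac{\alpha}{\alpha-1}H_\infty(Y)+\tfrac{\alpha}{\alpha-1}I_\infty(Y;T)\;\le\;I_\alpha(Y;T)\;\le\;H_\alpha(Y)-H_\infty(Y)+I_\infty(Y;T),
\]
and the analogous sandwich for $I_\gamma(X;T)$ with $\gamma$ in place of $\alpha$ and $X$ in place of $Y$.

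\textbf{Step 2: Translating constraints.} The $X$-version of the sandwich above shows that $I_\gamma(X;T)\ge r$ is \emph{implied by} $I_\infty(X;T)\ge g(r)$ (use the lower side), while $I_\gamma(X;T)\le R$ \emph{implies} $I_\infty(X;T)\le g(R)$ (use the same side). Likewise, $I_\gamma(X;T)\ge r$ implies $I_\infty(X;T)\ge f(r)$, where the $\max$ with $0$ is needed only because $I_\infty$ is nonnegative; and $I_\infty(X;T)\le f(R)$ implies $I_\gamma(X;T)\le R$ by the upper side of the sandwich (valid when $R\ge H_\gamma(X)-H_\infty(X)$; otherwise $f(R)=0$ and the claimed lower bound on $\ib^{(\alpha,\gamma)}$ is trivial since $H_\alpha(Y)-\tfrac{\alpha}{\alpha-1}H_\infty(Y)\le 0$).

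\textbf{Step 3: Assemble the four bounds.} For the upper bound on $\pf^{(\alpha,\gamma)}(r)$, I take $T^\ast$ optimal for $\pf^{(\infty,\infty)}(g(r))$, which by Step~2 is feasible for $\pf^{(\alpha,\gamma)}(r)$; the right half of the sandwich then gives $\pf^{(\alpha,\gamma)}(r)\le I_\alpha(Y;T^\ast)\le H_\alpha(Y)-H_\infty(Y)+\pf^{(\infty,\infty)}(g(r))$. For the lower bound, I let $T^\ast$ attain $\pf^{(\alpha,\gamma)}(r)$; Step~2 shows $T^\ast$ is feasible for $\pf^{(\infty,\infty)}(f(r))$, so $I_\infty(Y;T^\ast)\ge\pf^{(\infty,\infty)}(f(r))$, and the left half of the sandwich yields the stated lower bound. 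The two bounds on $\ib^{(\alpha,\gamma)}(R)$ follow by an identical argument with the inequality directions reversed: use feasibility $I_\gamma(X;T)\le R\Rightarrow I_\infty(X;T)\le g(R)$ for the upper bound, and $I_\infty(X;T)\le f(R)\Rightarrow I_\gamma(X;T)\le R$ for the lower bound.

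\textbf{Expected obstacle.} The only delicate point is the corner case $R<H_\gamma(X)-H_\infty(X)$ in the lower bound on $\ib^{(\alpha,\gamma)}(R)$: there the implication $I_\infty(X;T)\le f(R)\Rightarrow I_\gamma(X;T)\le R$ need not hold directly, but the bound becomes vacuous because the left-hand side is nonpositive. Flagging this case and invoking $H_\alpha(Y)\le \tfrac{\alpha}{\alpha-1}H_\infty(Y)$ disposes of it. All other steps are mechanical once the sandwich inequalities of Step~1 are in hand.
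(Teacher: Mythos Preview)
Your proposal is correct and follows essentially the same route as the paper: both proofs rest on the sandwich $\frac{\alpha-1}{\alpha}H_\alpha(\cdot|\cdot)\le H_\infty(\cdot|\cdot)\le H_\alpha(\cdot|\cdot)$, translate it into the two-sided bound relating $I_\alpha$ (resp.\ $I_\gamma$) to $I_\infty$, and then combine the $X$- and $Y$-versions to compare feasible sets and objective values. If anything, you are more thorough than the paper, which only writes out the derivation of the upper bound on $\pf^{(\alpha,\gamma)}$ and states that the remaining three bounds ``can be proved similarly''; your explicit handling of the corner case $R<H_\gamma(X)-H_\infty(X)$ (where $f(R)=0$ and the lower bound is vacuous because $H_\alpha(Y)\le\frac{\alpha}{\alpha-1}H_\infty(Y)$) is a detail the paper omits.
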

The previous lemma can be directly applied to derive upper and lower bounds for $\pf^{(\alpha, \gamma)}$ and $\ib^{(\alpha, \gamma)}$ given  $\pf^{(\infty, \infty)}$ and $\ib^{(\infty, \infty)}$.

\subsection{$f$-Bottleneck Problems}
In this section, we describe another instantiation of the general framework introduced in terms of functions $\Phi$ and $\Psi$ that enjoys interpretable estimation-theoretic guarantee.  
\begin{definition}\label{Def:f_info}
Let $f: (0, \infty)\to \R$ be a convex function with $f(1) = 0$. Furthermore, let $U$ and $V$ be two real-valued random variables supported over $\U$ and $\V$, respectively. Their $f$-information is defined by
\eqn{Def:f_MI}{I_f(U; V) \coloneqq D_f(P_{UV}\|P_UP_V),}
where $D_f(\cdot\|\cdot)$ is the $f$-divergence \cite{Csiszar_f_divergence} between distributions and defined as 
$$D_f(P\|Q)\coloneqq \E_Q\left[f\left(\frac{\textnormal{d}P}{\textnormal{d}Q}\right)\right].$$
\end{definition}
Due to convexity of $f$, we have $D_f(P\|Q)\geq f(1) = 0$ and hence $f$-information is
always non-negative. If, furthermore, $f$ is strictly convex at $1$, then equality holds if and only $P=Q$. Csisz\'ar introduced $f$-divergence in \cite{Csiszar_f_divergence} and applied it to several problems in statistics and information theory.  More recent developments
about the properties of $f$-divergence and $f$-information can be found in \cite{Maxim_Strong_TIT} and the references therein. Any convex function $f$ with the property $f(1)=0$ results in an $f$-information. Popular examples include $f(t) = t\log t$ corresponding to Shannon's mutual information, $f(t) = |t-1|$ corresponding to $T$-information \cite{Polyanskiy}, and also $f(t) = t^2-1$ corresponding to $\chi^2$-information \cite{Hao_Privacy_estimation} for . It is worth mentioning that Arimoto's mutual information of order\footnote{Note that Arimoto's mutual information was defined in Definition~\ref{Def:Arimoto} for $\alpha>1$. However, it was similarly defined in \cite{Liese_fDivergence} for $\alpha\in (0,1)$.} $\alpha\in (0,1)$  was also shown to be an $f$-information in the binary case for a certain function $f$, see \cite[Theorem 8]{Liese_fDivergence}. 

Let $(X, Y)\sim \pxy$ be given with marginals $P_X$ and $P_Y$. Consider functions $\Phi$ and $\Psi$ on $\P(\X)$ and $\P(\Y)$ defined as 
\eq{\Phi(Q_X)\coloneqq D_f(Q_X\|P_X)\qquad \text{and}\qquad  \Psi(Q_Y)\coloneqq D_f(Q_Y\|P_Y).} 
Given a conditional distribution $P_{T|X}$, it is easy to verify that $\Phi(X|T) = I_f(X; T)$ and $\Psi(Y|T) = I_f(Y; T)$. This in turn implies that $f$-information can be utilized in \eqref{Def:U} and \eqref{Def:L} to define general bottleneck: Let $f:(0, \infty)\to \R$ and $g:(0, \infty)\to \R$ be two convex functions satisfying $f(1) = g(1) = 0$. Then we define   \eqn{}{\ib^{(f,g)}(R)\coloneqq \sup_{\substack{P_{T|X}:Y\markov X\markov T \\ I_g(X;T)\leq R}} I_f(Y; T),}
and 
\eqn{}{\pf^{(f,g)}(r)\coloneqq \inf_{\substack{P_{T|X}:Y\markov X\markov T \\ I_g(X;T)\geq r}} I_f(Y; T).}

In light of the discussion in Section~\ref{Sec:Evaluate_Family}, the optimization problems in $\ib^{(f, g)}$ and $\ib^{(f, g)}$ can be analytically solved by determining the upper concave and lower convex envelope of the mapping 
\eqn{F_beta_F}{Q_X\mapsto \F_\beta^{(f, g)}\coloneqq D_f(Q_Y\|P_Y) - \beta D_g(Q_X\|P_X),}
where $\beta \geq 0$ is the Lagrange multiplier and $Q_Y = Q_X P_{Y|X}$.

Consider the function $f_\alpha(t) = \frac{t^\alpha-1}{\alpha-1}$ with $\alpha\in (1, \infty)\cup (1, \infty)$. The corresponding $f$-divergence is sometimes called \textit{Hellinger} divergence of order $\alpha$, see e.g., \cite{Sason_f_divergence}. Note that \textit{Hellinger} divergence of order $2$ reduces to $\chi^2$-divergence. Calmon et al. \cite{Calmon_principal_TIT} and Asoodeh et al. \cite{Asoode_MMSE_submitted} showed that if $I_{f_2}(Y; T)\leq \eps$ for some $\eps\in (0,1)$, then the minimum mean-squared error (MMSE) of reconstructing any zero-mean unit-variance \textit{function} of $Y$ given $T$ is lower bounded by $1-\eps$, i.e., no function of $Y$ can be reconstructed with small MMSE given an observation
of $T$. 
This result serves a natural justification for $I_{f_2}$ as an operational measure of both privacy and utility in a bottleneck problem. 

Unfortunately, our approach described in Section~\ref{Sec:Evaluate_Family} cannot be used to compute $\ib^{(f_2, f_2)}$ or $\pf^{(f_2, f_2)}$ in the binary symmetric case. The difficulty lies in the fact that the function $\F_\beta^{f_2,f_2}$, defined in \eqref{F_beta_F}, for the binary symmetric case is either convex or concave on its entire domain depending on the value of $\beta$. Nevertheless, one can consider Hellinger divergence of order $\alpha$ with $\alpha\neq 2$ and then apply our approach to compute $\ib^{(f_\alpha, f_\alpha)}$ or $\pf^{(f_\alpha, f_\alpha)}$. Since $D_{f_2}(P\|Q)\leq (1+(\alpha-1)D_{f_\alpha}(P\|Q))^{1/(\alpha-1)}-1$ (see \cite[Corollary 5.6]{Sharp_inequalities_f}, one can justify $I_{f_\alpha}$ as a measure of privacy and  utility in a similar way as $I_{f_2}$. 

We end this section by a remark about estimating the measures studied in this section. While we consider information-theoretic regime where the underlying distribution $\pxy$ is known, in practice only samples $(x_i, y_i)$ are given. Consequently, the \textit{de facto} guarantees of bottleneck problems might be considerably different from those shown in this work. 
It is therefore essential to asses the guarantees of bottleneck problems when accessing only samples. To do so, one must derive bounds on the discrepancy between $\cp$, $I_\alpha$, and $I_f$ computed on the empirical distribution  and the true (unknown) distribution. These bounds can then be used to shed light on the de facto guarantee of the bottleneck problems. Relying on  \cite[Theorem 1]{Diaz_Robustness}, one can obtain that the gaps between the measures $\cp$, $I_\alpha$, and $I_f$ computed on empirical distributions and the true one scale as $O(1/\sqrt{n})$ where $n$ is the number of samples. This is in contrast with mutual information for which the 
similar upper bound scales as $O(\log n/\sqrt{n})$ as shown in \cite{Shamir_IB}.
Therefore, the above measures appear to be easier to estimate than mutual information.

\section{Summary and Concluding Remarks}
Following the recent surge in the use of information bottleneck ($\ib$) and privacy funnel ($\pf$) in developing and analyzing machine learning models, we investigated the functional properties of  these two optimization problems. Specifically, we showed that $\ib$ and $\pf$ correspond to the upper and lower boundary of a two-dimensional convex set $\M = \{(I(X; T), I(Y; T)): Y\markov X\markov T\}$ where $(X, Y)\sim \pxy$ represents the observable data $X$ and target feature $Y$ and the auxiliary random variable $T$ varies over all possible choices satisfying the Markov relation $Y\markov X\markov T$. This unifying  perspective on $\ib$ and $\pf$ allowed us  to adapt the classical technique of Witsenhausen and Wyner \cite{Witsenhausen_Wyner} devised for computing $\ib$ to be applicable for $\pf$ as well. We illustrated this by deriving a closed form expression for $\pf$ in the binary case --- a result reminiscent of the \textit{Mrs. Gerber's Lemma} \cite{Gerber} in information theory literature.  We then showed that both $\ib$ and $\pf$ are closely related to several information-theoretic coding problems such as noisy random coding, hypothesis testing against independence, and  dependence dilution. While these connections were partially known in previous work (see e.g., \cite{Bottleneck_Shamai, Bottleneck_Polyanskiy}), we show that they lead to an improvement on the cardinality of $T$  for computing $\ib$.  We then turned our attention to the continuous setting where $X$ and $Y$ are continuous random variables. Solving the optimization problems in $\ib$ and $\pf$ in this case without any further assumptions seems a difficult challenge in general and leads to theoretical results only when $(X,Y)$ is jointly Gaussian.  Invoking recent results on the entropy power inequality \cite{EPI_Courtade} and strong data processing inequality  \cite{calmon2015strong}, we  obtained tight bounds on $\ib$ in two different cases: (1) when $Y$ is a Gaussian perturbation of $X$ and (2) when $X$ is a Gaussian perturbation of $Y$.   We also utilized the celebrated I-MMSE relationship \cite{MMSE_Guo} to derive a second-order approximation of $\pf$ when $T$ is considered to be a Gaussian perturbation of $X$. 

In the second part of the paper, we argue that the choice of (Shannon's) mutual information in both $\ib$ and $\pf$ does not seem to carry specific operational significance. It does, however, have a desirable practical consequence: it leads to self-consistent equations \cite{tishby2000information} that can be solved iteratively (without any guarantee to convergence though). In fact, this property is unique to   mutual information among other existing information measures \cite{harremoes2007information}. Nevertheless, we argued that other information measures might lead to better interpretable guarantee for both $\ib$ and $\pf$. For instance, statistical accuracy in $\ib$ and privacy leakage in $\pf$ can be shown to be \textit{precisely} characterized by probability of correctly guessing (aka Bayes risk) or minimum mean-squared error (MMSE). Following this observation,  we introduced a large family of optimization problems, which we call \textit{bottleneck problems},  by replacing mutual information in $\ib$ and $\pf$ with Arimoto's mutual information\cite{Arimoto_Original_Paper} or $f$-information \cite{Maxim_Strong_TIT}. Invoking results from \cite{Shamir_IB, Diaz_Robustness}, we also demonstrated that these information measures are in general easier to estimate from data than mutual information. Similar to $\ib$ and $\pf$, the bottleneck problems  were shown to be fully characterized by boundaries of  a two-dimensional convex set parameterized by two real-valued non-negative functions $\Phi$ and $\Psi$.  This perspective enabled us to generalize the technique used to compute $\ib$ and $\pf$ for evaluating bottleneck problems. Applying this technique to the binary case, we derived closed form expressions for several bottleneck problems.

\appendices
\section{Proofs from Section~\ref{Sec:IB_PF}}
\label{Appendix_ProofSecIB_PF}

	 \begin{proof}[Proof of Theorem~\ref{Thm:IB_Properties}]
	  \begin{itemize}
	      \item Note that $R= 0$ in optimization problem \eqref{eq:Def_IB} implies that $X$ and $T$ are independent. Since $Y, X$ and $T$ form Markov chain $Y\markov Y\markov T$, independent of $X$ and $T$ implies independence of $Y$ and $T$ and thus $I(Y; T) = 0$. Similarly for $\pf(0)$.
	      \item Since $I(X; T)\leq H(X)$ for any random variable $T$, we have $T = X$ satisfies the information constraint $I(X; T)\leq R$ for $R\geq H(X)$. Since $I(Y; T)\leq I(Y; X)$, this choice is optimal. 
	      Similarly for $\pf$, the constraint $I(X; T)\geq r$ for $r\geq H(X)$ implies $T = X$. Hence, $\pf(r) = I(Y; X)$. 
	      \item The upper bound on $\ib$ follows from the data processing inequality: $I(Y; T)\leq \min\{I(X; T), I(X; Y)\}$ for all $T$ satisfying the Markov condition $Y\markov X\markov T$.
	      \item To prove the lower bound on $\pf$, note that  
	      \eq{I(Y; T) = I(X; T) - I(X; T|Y)\geq I(X; T) - H(X|Y).}
	  \item The concavity of $R\mapsto \ib(R)$ follows from the fact it is the upper boundary of the convex set $\M$, defined in \eqref{Def:Set_M}. This in turn implies the continuity of $\ib(\cdot)$. Monotonicity of $R\mapsto \ib(R)$ follows from the definition. Strict monotonicity follows from the convexity and the fact that $\ib(H(X)) = I(X;Y)$. 
	      \item Similar as above.
	      \item The differentiability of the map $R\mapsto \ib(R)$ follows from \cite[Lemma 6]{IB_operational}. This result in fact implies the differentiability of the map $r\mapsto \pf(r)$ as well.  
	      Continuity of the derivative of $\ib$ and $\pf$ on $(0, H(X))$ is a straightforward application of \cite[Theorem 25.5]{rock}.
	  \item Monotonicity of mappings $R\mapsto \frac{\ib(R)}{R}$ and $r\mapsto \frac{\pf(r)}{r}$ follows from the concavity and convexity of $\ib(
	  \cdot)$ and $\pf(\cdot)$, respectively. 
	  \item Strict monotonicity of $\ib(
	  \cdot)$ and $\pf(\cdot)$ imply that the optimization problems in \eqref{eq:Def_IB} and \eqref{eq:Def_PF} occur when the inequality in the constraints becomes equality.  
	  \end{itemize} 
	 \end{proof}
 
	 \begin{proof}[Proof of Theorem~\ref{Thm:Bounds_IB_PF}]
	  Recall that, according to Theorem~\ref{Thm:IB_Properties}, the mappings $R\mapsto \ib(R)$ and $r\mapsto \pf(r)$ are concave and convex, respectively. This implies that $\ib(R)$ (resp. $\pf(r)$) lies above (resp. below) the chord connecting  $(0, 0)$ and $(H(X, I(X; Y))$. This proves the lower bound (resp. upper bound) $\ib(R)\geq R\frac{I(X; Y)}{H(X)}$ (resp. $\pf(r)\leq r\frac{I(X; Y)}{H(X)}$).  
	  
	  In light of the convexity of $\pf$ and monotonicity of $r\mapsto \frac{\pf(r)}{r}$, we can write  
	  $$\frac{\pf(r)}{r}\geq\lim_{r\to 0^+}\frac{\pf(r)}{r} =  \inf_{r\neq 0}\frac{\pf(r)}{r} = \inf_{\substack{P_{T|X}:Y\markov X\markov T\\I(X; T)\neq 0}}\frac{I(Y; T)}{I(X; T)}= \inf_{\P(\X)\ni Q_X\neq P_X}\frac{\kl(Q_Y\|P_Y)}{\kl(Q_X\|P_X)},$$
	  where the last equality is due to \cite[Lemma 4]{Calmon_fundamental-Limit} and $Q_Y$ is the output distribution of the channel $P_{Y|X}$ when the input is distributed according to $Q_X$. Similarly, we can write 
	  $$\frac{\ib(R)}{R}\leq\lim_{R\to 0^+}\frac{\ib(R)}{R} =  \sup_{R\neq 0}\frac{\ib(R)}{R} = \sup_{\substack{P_{T|X}:Y\markov X\markov T\\I(X; T)\neq 0}}\frac{I(Y; T)}{I(X; T)}= \sup_{\P(\X)\ni Q_X\neq P_X}\frac{\kl(Q_Y\|P_Y)}{\kl(Q_X\|P_X)},$$
	  where the last equality is due to \cite[Theorem 4]{anantharam}.
	 \end{proof}

\begin{proof}[Proof of Theorem~\ref{Thm:IB_Additivity}]
   	Let $T_n$ be an optimal summeries of $X^n$, that is, it satisfies $T_n\markov X^n\markov Y^n$ and $I(X^n; T_n)=nR$. We can write
   	\begin{equation*}
   	I(X^n, T_n)=H(X^n)-H(X^n|T_n) = \sum_{k=1}^n\left[H(X_k)-H(X_k|X^{k-1}, T_n)\right]=\sum_{k=1}^nI(X_k; X^{k-1}, T_n),
   	\end{equation*}
   	and hence, if $R_k\coloneqq I(X_k; X^{k-1}, T_n)$, then we have 
   	\begin{equation}\label{proof_Con}
   	R=\frac{1}{n}\sum_{k=1}^nR_k.
   	\end{equation} 
   	We can similarly write
   	\begin{eqnarray*}
   	I(Y^n, T_n)&=& H(Y^n)-H(Y^n|T_n) = \sum_{k=1}^n\left[H(Y_k)-H(Y_k|Y^{k-1}, T_n)\right]\\
   	&\leq& \sum_{k=1}^n\left[H(Y_k)-H(Y_k|Y^{k-1}, X^{k-1}, T_n)\right]\\
   	&=& \sum_{k=1}^n\left[H(Y_k)-H(Y_k|X^{k-1}, T_n)\right]=\sum_{k=1}^nI(Y_k;X^{k-1}, T_n).
  	\end{eqnarray*}
  Since we have $(T_n, X^{k-1})\markov X_k\markov Y_k$ for every $k\in [n]$, we conclude from the above inequality that 
  \begin{equation}
  	I(Y^n, T_n)\leq  \sum_{k=1}^nI(Y_k;X^{k-1}, T_n) \leq  \sum_{k=1}^n \mathsf{IB}(P_{XY}, R_k)\leq n\mathsf{IB}(P_{XY}, R), 
  \end{equation}
  where the last inequality follows from concavity of the map $x\mapsto \mathsf{IB}(P_{XY}, x)$ and \eqref{proof_Con}. Consequently, we obtain
  \begin{equation}\label{proof_Direction}
  \mathsf{IB}(P_{X^nY^n}, nR)\leq n\mathsf{IB}(P_{XY}, R).
  \end{equation}
  To prove the other direction, let $P_{T|X}$ be an optimal channel in the definition of $\ib$, i.e.,  $I(X; T)=R$ and $\mathsf{IB}(P_{XY}, R)=I(Y; T)$.  Then using this channel $n$ times for each pair $(X_i, Y_i)$, we obtain $T^n=(T_1, \dots, T_n)$ satisfying $T^n\markov X^n\markov Y^n$. Since $I(X^n; T^n)=nI(X; T)=nR$ and $I(Y^n; T^n)=nI(Y; T)$, we have $ \mathsf{IB}(P_{X^nY^n}, nR)\geq n\mathsf{IB}(P_{XY}, R).$ This, together with \eqref{proof_Direction}, concludes the proof.

\end{proof}

\begin{proof}[Proof of Theorem~\ref{Thm:Derivative_IB_PF}]
       First notice that 
       $$\lim_{R\to 0}\frac{\ib(R)}{R} = \sup_{R>0}\frac{\ib(R)}{R} = \sup_{\substack{P_{T|X}:\\Y\markov X\markov T}}\frac{I(Y; T)}{I(X; T)} =  \sup_{Q_X\in \P(\X)\atop Q_X\neq P_X}\frac{\kl(Q_Y\|P_Y)}{\kl(Q_X\|P_X)},$$
       where the last equality is due to \cite[Theorem 4]{anantharam}.
       Similarly, 
       $$\lim_{r\to 0}\frac{\pf(r)}{r} =\inf_{r>0}\frac{\pf(r)}{r} = \inf_{\substack{P_{T|X}:\\Y\markov X\markov T}}\frac{I(Y; T)}{I(X; T)} = \inf_{Q_X\in \P(\X)\atop Q_X\neq P_X}\frac{\kl(Q_Y\|P_Y)}{\kl(Q_X\|P_X)},$$
       where the last equality is due to \cite[Lemma 4]{Calmon_fundamental-Limit}.
       
       Fix $x_0\in \X$ with $P_X(x_0)>0$ and let $T$ be a Bernoulli random variable specified by the following channel 
       $$P_{T|X}(1|x) = \delta 1_{\{x = x_0\}},$$
       for some $\delta>0$. This channel induces  $T \sim\sBer(\delta P_{X}(x_0))$,  
       $P_{Y|T}(y|1) = P_{Y|X}(y|x_0)$, and 
       $$P_{Y|T}(y|0) = \frac{P_Y(y) - \pxy(x_0, y)}{1-\delta P_X(x_0)}.$$
       It can be verified that
       $$I(X; T) = -\delta P_X(x_0)\log P_X(x_0) + o(\delta),$$
       and 
       $$I(Y; T) = \delta P_X(x_0)\kl(P_{Y|X}(\cdot|x_0)\|P_Y(\cdot)) + o(\delta).$$
       Setting $$\delta = \frac{r}{-P_X(x_0)\log P_X(x_0)},$$ we obtain 
       $$I(Y; T) =  \frac{\kl(P_{Y|X}(\cdot|x_0)\|P_Y(\cdot))}{-\log P_X(x_0)}r + o(r),$$
       and hence 
       $$\pf(r)\leq \frac{\kl(P_{Y|X}(\cdot|x_0)\|P_Y(\cdot))}{-\log P_X(x_0)} r + o(r).$$
Since $x_0$ is arbitrary, the result follows.  The proof for $\ib$ follows similarly.     
\end{proof}
\begin{proof}[Proof of Lemma~\ref{lem:pf_BEC}]
	 When $Y$ is an erasure of $X$, i.e., $\Y = \X\cup \{\perp\}$ with $P_{Y|X}(x|x) = 1-\delta$ and $P_{Y|X}(\perp|x) = \delta$, it is straightforward to verify that $\kl(Q_Y\|P_Y) = (1-\delta)\kl(Q_X\|P_X)$ for every $P_X$ and $Q_X$ in $\P(X)$. Consequently, we have 
	 $$\inf_{Q_X\neq P_X }\frac{\kl(Q_Y\|P_Y)}{\kl(Q_X\|P_X)} = \sup_{Q_X\neq P_X }\frac{\kl(Q_Y\|P_Y)}{\kl(Q_X\|P_X)} = 1-\delta.$$ 
	 Hence, Theorem~\ref{Thm:Bounds_IB_PF} gives the desired result. 
	 
	 To prove the second part, i.e., when $X$ is an erasure of $Y$, we need an improved upper bound of $\pf$. 	 Notice that if perfect privacy occurs for a given $\pxy$, then the upper bound for $\pf(r)$ in Theorem~\ref{Thm:Bounds_IB_PF} can be improved: \begin{equation}\label{eq:pf_UB_Improved}
	     \pf(r)\leq (r-r_0)\frac{I(X; Y)}{H(X)-r_0},
	 \end{equation}
	 where $r_0$ is the largest $r\geq 0$ such that $\pf(r) = 0$.
	 Here, we show that $r_0 = H(X|Y)$. 
	 This suffices to prove the result as  \eqref{eq:pf_UB_Improved}, together with Theorem~\ref{Thm:IB_Properties},
	 we have 
	 $$\max\{r - H(X|Y), 0\}\leq \pf\leq (r-r_0)\frac{I(X; Y)}{H(X)-r_0} = (r-H(X|Y)).$$
	 To show that $\pf(H(X|Y)) = 0$, consider the channel $P_{T|X}(t|x) = \frac{1}{|\Y|}1_{\{t\neq \perp, x\neq \perp\}}$ and  $P_{T|X}(\perp|\perp) = 1.$ It can be verified that this channel induces $T$ which is independent of $Y$ and that 
	 $$I(X; T) = H(T) - H(T|X) = H\left(\frac{1-\delta}{|\Y|}, \dots, \frac{1-\delta}{|\Y|}, \delta\right) - (1-\delta)\log |\Y| = h_{\mathsf b}(\delta) = H(X|Y),$$
	 where $h_{\mathsf b}(\delta) \coloneqq -\delta\log \delta - (1-\delta)\log(1-\delta)$ is the binary entropy function. 
	 \end{proof}

\begin{proof}[Proof of Lemma~\ref{lemma: Gerber}]
    As mentioned earlier, \eqref{eq:MRsGL} was proved in \cite{Gerber}. We thus give a proof only for \eqref{eq:MRGL}. 
    
    Consider the problem of minimizing the Lagrangian $\L_\pf(\beta)$ \eqref{eq:Largrangian_PF3} for $\beta\geq \beta_\pf$.  
    Let $X'\sim Q_X = \sBer(q)$ for some $q\in (0,1)$ and $Y'$ be the result of passing $X'$ through $\bsc(\delta)$, i.e., $Y'\sim \sBer(q*\delta)$. Recall that $F_\beta(q)\coloneqq F_\beta(Q_X) = h_\mathsf{b}(q*\delta) - \beta h_\mathsf{b}(q)$. It suffices to compute $\conc[F_\beta(q)]$ the upper concave envelope of $q\mapsto F_\beta(q)$.  It can be verified that
    $\beta_\ib \leq  (1-2\delta)^2$ and hence for all $\beta\geq (1-2\delta)^2$, $\conc[F_\beta(q)] = F_\beta(0)$.
    A straightforward computation shows that  $F_\beta(q)$ is symmetric around $q = \frac{1}{2}$ and is also concave in a region around $q = \frac{1}{2}$, where it reaches its local maximum. Hence, if $\beta$ is such that
    \begin{itemize}
        \item $F_\beta(\frac{1}{2}) < F_\beta(0)$ (see Fig.~\ref{fig:gerber_first}), then $\conc[F_\beta(q)]$ is given by the convex combination of $F_\beta(0)$ and $F_\beta(1)$.
        \item $F_\beta(\frac{1}{2}) = F_\beta(0)$ (see Fig.~\ref{fig:gerber_end}), then $\conc[F_\beta(q)]$ is given by the convex combination of $F_\beta(0)$ and $F_\beta(\frac{1}{2})$ and $F_\beta(1)$. 
        \item $F_\beta(\frac{1}{2}) > F_\beta(0)$ (see Fig.~\ref{fig:gerber_mid}), then there exists $q_\beta\in [0,\frac{1}{2}]$ such that for $q\leq q_\beta$,  $\conc[F_\beta(q)]$ is given by the convex combination of $F_\beta(0)$ and $F_\beta(q_\beta)$.
    \end{itemize} 
    Hence, assuming $p\leq \frac{1}{2}$, we can construct $T^*$ that maximizes $H(Y|T)-\beta H(X|T)$ in three different cases corresponding three cases above:
    \begin{itemize}
        \item In the first case, $T^*$ is binary and we have $P_{X|T^* = 0} = \sBer(0)$ and $P_{X|T^* = 1} = \sBer(1)$ with $P_{T^*} = \sBer(p)$.
        \item In the second case, $T^*$ is ternary and we have 
        $P_{X|T^* = 0} = \sBer(0)$,  $P_{X|T^* = 1} = \sBer(1)$, and  $P_{X|T^* = 2} = \sBer( \frac{1}{2})$ with $P_{T^*}=(1-p-\frac{\alpha}{2}, p-\frac{\alpha}{2}, \alpha)$ for some $\alpha\in [0, 2p]$.
        \item In the third case, $T^*$ is again binary and we have  $P_{X|T^*=0} = \sBer(0)$ and $P_{X|T^*=1} = \sBer(\frac{p}{\alpha})$ with $P_{T^*} = \sBer(\alpha)$ for some $\alpha\in [2p, 1]$. 
    \end{itemize}
    Combining these three cases, we obtain the result in \eqref{eq:MRGL}.
\end{proof}
\begin{proof}[Proof of Lemma~\ref{Lem:IB_UB_Courtade}]
       Let $X = Y + \sigma N^\mathsf{G}$ where $\sigma>0$ and $N^\mathsf{G}\sim \N(0,1)$ is independent of $Y$. 
 According to the improved entropy power inequality proved in \cite[Theorem 1]{EPI_Courtade}, we can write 
       $$e^{2(H(X)-I(Y; T))}\geq e^{2(H(Y)-I(X; T))} + 2\pi e \sigma^2,$$
       for any random variable $T$ forming $Y\markov X\markov T$. This, together with Theorem~\ref{Thm:IB_Additivity}, implies the result.  
\end{proof}
	 
\begin{proof}[Proof of Corollary~\ref{Cor:Gaussian_IB}]
       Since $(X, Y)$ are jointly Gaussian, we can write $X  = Y + \sigma N^\mathsf{G}$ where $\sigma = \sigma_Y\frac{\sqrt{1-\rho^2}}{\rho}$ and $\sigma^2_Y$ is the variance of $Y$. Applying Lemma~\ref{Lem:IB_UB_Courtade} and noticing that $H(X) = \frac{1}{2}\log(2\pi e (\sigma_Y^2+\sigma^2))$, we obtain 
       \begin{equation}\label{Prof_GaussianIB}
           I(Y; T)\leq \frac{1}{2}\log\frac{\sigma^2+\sigma_Y^2}{\sigma^2+\sigma_Y^2e^{-2I(X; T)}} = \frac{1}{1-\rho^2 + \rho^2e^{-2I(X; T)}},
       \end{equation}
       for all channels $P_{T|X}$ satisfying $Y\markov X\markov T$. 
       This bound is attained by Gaussian $P_{T|X}$. Specifically, assuming $T + X + \tilde\sigma M^\mathsf{G}$  where  $\tilde\sigma^2 = \sigma_Y^2\frac{e^{-2R}}{\rho^2(1-e^{-2R})}$ for $R\geq 0$ and $M^\mathsf{G}\sim \N(0, \tilde\sigma^2)$ independent of $X$, it can be easily verified that 
       $I(X; T) = R$ and $I(Y; T) = \frac{1}{1-\rho^2 + \rho^2e^{-2R}}$. This, together with  \eqref{Prof_GaussianIB}, implies
       $\ib(R) = \frac{1}{1-\rho^2 + \rho^2e^{-2R}}.$
\end{proof}
Next, we wish to prove Theorem~\ref{Thm:PF_Approximation}. However, we need the following preliminary lemma before we delve into its proof.
\begin{lemma}\label{Lemma:StrictCon_IYT}
    Let $X$ and $Y$ be continuous correlated random variables with $\E[X^2]<\infty$ and $\E[Y^2]<\infty$. Then the mappings $\sigma\mapsto I(X; T_\sigma)$ and $\sigma\mapsto I(Y; T_\sigma)$ are continuous, strictly decreasing, and
    $$I(X; T_\sigma)\to 0, \quad \text{and}\quad I(Y; T_\sigma)\to 0 \quad \text{as}\quad \sigma\to \infty.$$
\end{lemma}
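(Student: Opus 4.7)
}
I would prove the three claimed properties (continuity, strict monotonicity, vanishing at infinity) separately, in increasing order of subtlety. The central analytic tool is the I-MMSE identity \cite{MMSE_Guo} and its generalization to an auxiliary observation, which fits naturally with the reparametrization $\mathsf{snr}=1/\sigma^2$ under which $I(X;T_\sigma)=I(X;\sqrt{\mathsf{snr}}\,X+N^\sG)$ and $I(Y;T_\sigma)=I(Y;\sqrt{\mathsf{snr}}\,X+N^\sG)$.

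For continuity, I would use $I(X;T_\sigma)=h(T_\sigma)-\tfrac12\log(2\pi e\sigma^2)$ and $I(Y;T_\sigma)=h(T_\sigma)-\E_Y[h(T_\sigma\mid Y)]$, and reduce the problem to continuity of $\sigma\mapsto h(T_\sigma)$ and of $\sigma\mapsto h(T_\sigma\mid Y=y)$ for $P_Y$-a.e.\ $y$. Because the densities of $T_\sigma$ and of $T_\sigma\mid Y=y$ are Gaussian convolutions of $P_X$ and $P_{X\mid Y=y}$, respectively, they vary continuously in $L^1$ as $\sigma$ varies in $(0,\infty)$; the uniform pointwise bound $(2\pi\sigma^2)^{-1/2}$ together with the finite-second-moment tail control allows passage to the limit inside the entropy integral, and dominated convergence then transfers continuity to the outer expectation over $Y$.

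For strict monotonicity I would first establish the non-strict version by a coupling argument: for $\sigma>\sigma'>0$ one can realize $T_\sigma\stackrel{d}{=}T_{\sigma'}+\sqrt{\sigma^2-\sigma'^2}\,\tilde N^\sG$ with an independent $\tilde N^\sG\sim\N(0,1)$, yielding the Markov chain $(X,Y)\markov T_{\sigma'}\markov T_\sigma$ and hence the data-processing inequalities $I(X;T_\sigma)\le I(X;T_{\sigma'})$ and $I(Y;T_\sigma)\le I(Y;T_{\sigma'})$. To sharpen these to strict inequalities I would invoke I-MMSE: with $\mathsf{snr}=1/\sigma^2$,
\begin{equation*}
\frac{d}{d\mathsf{snr}}I(X;T_\sigma)=\tfrac12\,\mmse(X\mid\sqrt{\mathsf{snr}}\,X+N^\sG),\qquad \frac{d}{d\mathsf{snr}}I(Y;T_\sigma)=\tfrac12\bigl[\mmse(X\mid T_\sigma)-\mmse(X\mid T_\sigma,Y)\bigr].
\end{equation*}
The first derivative is strictly positive for every finite $\mathsf{snr}$ since $X$ is non-degenerate, immediately yielding strict monotonicity of $\sigma\mapsto I(X;T_\sigma)$. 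For the second, the integrand is a.e.\ non-negative, and vanishing on an open $\mathsf{snr}$-interval would force $I(Y;T_\sigma)$ to be constant there; this contradicts the boundary behavior $I(Y;T_\sigma)\to 0$ as $\sigma\to\infty$ and $I(Y;T_\sigma)\to I(Y;X)>0$ as $\sigma\to 0^+$, where the limit $I(Y;X)>0$ uses the standing assumption that $X$ and $Y$ are correlated.

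For the limits at infinity, the rescaling $h(T_\sigma)=h(T_\sigma/\sigma)+\log\sigma$ together with $\tfrac12\log(2\pi e\sigma^2)=\log\sigma+\tfrac12\log(2\pi e)$ gives $I(X;T_\sigma)=h(X/\sigma+N^\sG)-\tfrac12\log(2\pi e)$. The hypothesis $\E[X^2]<\infty$ yields $X/\sigma\to 0$ in $L^2$, and the same Gaussian-smoothed-entropy continuity used above gives $h(X/\sigma+N^\sG)\to h(N^\sG)=\tfrac12\log(2\pi e)$, so $I(X;T_\sigma)\to 0$; the bound $I(Y;T_\sigma)\le I(X;T_\sigma)$ (data processing along $Y\markov X\markov T_\sigma$) transfers the limit to $Y$. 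The main obstacle in this plan is establishing the \emph{strict} monotonicity of $\sigma\mapsto I(Y;T_\sigma)$: DPI alone only produces weak monotonicity, and the generalized I-MMSE argument with the non-vanishing-derivative step is the only delicate ingredient, crucially relying on the hypothesis that $X$ and $Y$ are not independent.
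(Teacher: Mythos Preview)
Your overall plan is sound and parallels the paper's proof in structure, but differs in the tools chosen. For continuity, you argue directly via $L^1$-continuity of Gaussian-smoothed densities and dominated convergence; the paper instead invokes de Bruijn's identity (differentiability of $\sigma\mapsto h(T_\sigma)$ gives continuity for free) and cites \cite{Linder} for the $\sigma\to 0$ case. For the limit $\sigma\to\infty$, your rescaling argument works but is more elaborate than necessary: the paper simply uses the Gaussian capacity bound $I(X;T_\sigma)\le \tfrac12\log(1+\sigma_X^2/\sigma^2)$, which vanishes as $\sigma\to\infty$, and then $I(Y;T_\sigma)\le I(X;T_\sigma)$ by DPI.

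There is, however, a genuine gap in your argument for \emph{strict} monotonicity of $\sigma\mapsto I(Y;T_\sigma)$. You claim that if the derivative $\tfrac12[\mmse(X\mid T_\sigma)-\mmse(X\mid T_\sigma,Y)]$ vanishes on an open $\mathsf{snr}$-interval, the resulting plateau ``contradicts the boundary behavior'' $I(Y;T_\sigma)\to 0$ and $I(Y;T_\sigma)\to I(X;Y)>0$. It does not: a continuous, weakly monotone function can be constant on an interior interval and still interpolate between distinct boundary values. You need an additional ingredient. One clean fix is to invoke the real-analyticity in $\mathsf{snr}$ of the MMSE functions (established in the I-MMSE literature, e.g.\ Guo--Shamai--Verd\'u and follow-up work): if the difference vanishes on an interval it vanishes identically, forcing $I(Y;T_\sigma)\equiv 0$ and hence $Y\perp T_\sigma$ for all $\sigma$; letting $\sigma\to 0$ yields $Y\perp X$, contradicting the correlation hypothesis. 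Alternatively, the paper argues directly from the equality case of the data-processing inequality that $I(Y;T_{\sigma+\delta})=I(Y;T_\sigma)$ for some $\delta>0$ forces $X$ and $Y$ to be independent.
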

\begin{proof}
       The finiteness of $\E[X^2]$ and $\E[Y^2]$ imply that $H(X)$ and $H(Y)$ are finite. A straightforward application of the entropy power inequality (cf. \cite[Theorem 17.7.3]{Cover_Book}) implies that $H(T_\sigma)$ is also finite. Thus, $I(X; T_\sigma)$ and $I(Y; T_\sigma)$ are well-defined. According to the data processing inequality, we have $I(X; T_{\sigma+\delta})< I(X; T_{\sigma})$ for all $\delta>0$ and also $I(Y; T_{\sigma+\delta})\leq I(Y; T_{\sigma})$
       where the equality occurs if and only if $X$ and $Y$ are independent. Since, bu assumption $X$ and $Y$ correlated, it follows $I(Y; T_{\sigma+\delta})< I(Y; T_{\sigma})$. Thus, both $I(X; T_\sigma)$ and    $I(Y; T_\sigma)$ are strictly decreasing. 
       
       For the proof of continuity, we consider two cases $\sigma = 0$ and $\sigma>0$ separately. We first give the poof for $I(X; T_\sigma)$.  Since $H(\sigma N^\mathsf{G}) = \frac{1}{2}\log(2\pi e \sigma^2)$, we have $\lim_{\sigma\to 0}H(\sigma N^\mathsf{G}) = \infty$ and thus $\lim_{\sigma\to 0}I(X; T_\sigma) = \infty$ that is equal to $I(X; T_0)$. For $\sigma>0$, let ${\sigma_n}$ be a sequence of positive numbers converging to $\sigma$. In light of de Bruijn's identity (cf. \cite[Theorem 17.7.2]{Cover_Book}), we have $H(T_{\sigma_n})\to H(T_\sigma)$, implying the continuity of $\sigma\mapsto I(X; T_\sigma)$. 
       
       Next, we prove the  continuity of $\sigma\mapsto I(Y; T_\sigma)$. For the sequence of positive numbers ${\sigma_n}$ converging to $\sigma>0$, we have $I(Y; T_{\sigma_n}) = H(T_{\sigma_n})- H(T_{\sigma_n}|Y)$. We only need to show $H(T_{\sigma_n}|Y)\to H(T_{\sigma}|Y)$. Invoking again de Brujin's identity, we obtain  $H(T_{\sigma_n}|Y=y)\to H(T_{\sigma}|Y=y)$ for each $y\in \Y$. The desired result follows from dominated convergence theorem. 
       Finally, the The  continuity of $\sigma\mapsto I(Y; T_\sigma)$ when $\sigma = 0 $ follows from \cite[Page 2028]{Linder} stating that $H(T_{\sigma_n}|Y=y)\to H(X|Y=y)$ and then applying dominated convergence theorem. 
       
       Note that 
       $$0\leq I(Y; T_\sigma)\leq I(X; T_\sigma)\leq \frac{1}{2}\log\left(1+\frac{\sigma^2_{X}}{\sigma^2}\right),$$
       where $\sigma^2_X$ is the variance of $X$ and the last inequality follows from the fact that $I(X; X + \sigma N^\mathsf{G})$ is maximized when $X$ is Gaussian. Since by assumption $\sigma_X<\infty$,  it follows that both $I(X; T_\sigma)$ and $I(Y; T_\sigma)$ converge to zero as $\sigma\to \infty$.
\end{proof}
In light of this lemma, there exists a unique $\sigma\geq 0$ such that $I(X; T_\sigma) = r$. Let $\sigma_r$ denote such $\sigma$. Therefore, we have 
$\pf(r) = I(Y; T_{\sigma_r}).$
This enables us to prove Theorem~\ref{Thm:PF_Approximation}. 
\begin{proof}[Prof of Theorem~\ref{Thm:PF_Approximation}]
The proof relies on the I-MMSE relation in information theory literature. We briefly describe it here for convenience. 
Given any pair of random variables $U$ and $V$, the minimum mean-squared error (MMSE) of estimating $U$ given $V$ is given by
 \begin{equation*}
  \mmse(U|V):=\inf_{f}\E[(U-f(V))^2] = \E[\left(U-\E[U|V]\right)^2]=\E[\var(U|V)],
\end{equation*}
where the infimum is taken over all measurable functions $f$ and  $\var(U|V)=\E[(U-\E[U|V])^2|V]$. Guo et al.\ \cite{MMSE_Guo} proved the following identity, which is referred to as \textit{I-MMSE formula}, relating the input-output mutual information of the additive Gaussian channel $T_\sigma=X+\sigma N^\sG$, where $N^\sG\sim\N(0,1)$ is independent of $X$, with the MMSE of the input given the output:
  \begin{equation}\label{I_MMSE}
    \frac{\text{d}}{\text{d}(\sigma^2)}I(X;T_\sigma)=-\frac{1}{2\sigma^4}\mmse(X|T_\sigma).
  \end{equation}
 Since $Y$, $X$, and $T_\sigma$ form the Markov chain $Y\markov X\markov T_\sigma$, it follows that $I(Y; T_\sigma)=I(X; T_\sigma)-I(X; T_\sigma|Y)$. Thus, two applications of \eqref{I_MMSE} yields
     \begin{equation}\label{I_MMSE2}
    \frac{\text{d}}{\text{d}(\sigma^2)}I(Y; T_\sigma)=-\frac{1}{2\sigma^4}\left[\mmse(X|T_\sigma)-\mmse(X|T_\sigma, Y)\right].
  \end{equation}
The second derivative of  $I(X; T_\sigma)$ and $I(Y; T_\sigma)$ are also known via the formula \cite[Proposition 9]{MMSE_Guo_Wu}
  \begin{equation}\label{Second_Derivative_MMSE}
    \frac{\text{d}}{\text{d}(\sigma^2)}\mmse(X|T_\sigma)=\frac{1}{\sigma^4}\E[\var^2(X|T_\sigma)]\qquad \text{and}\qquad \frac{\text{d}}{\text{d}(\sigma^2)}\mmse(X|T_\sigma, Y)=\frac{1}{\sigma^4}\E[\var^2(X|T_\sigma, Y)].
  \end{equation}
With these results in mind, we now begin the proof. Recall that $\sigma_r$ is the unique $\sigma$ such that $I(X; T_\sigma) = r$, thus implying $\pf^\mathsf{G}(r) = I(Y; T_{\sigma_r})$.  
We have 
\begin{equation}\label{Proof_derovative_PF}
    \frac{\text{d}}{\text{d}r} \pf^\mathsf{G}(r)= \left[\frac{\text{d}}{\text{d}(\sigma^2)}I(Y; T_\sigma)\right ]_{\sigma = \sigma_r}\frac{\text{d}}{\text{d}r}\sigma^2_r.
\end{equation}
To compute the derivative of $\pf(r)$, we therefore need to compute the derivative of $\sigma_r^2$ with respect to $r$. To do so, notice that from the identity $I(X; T_{\sigma_r}) = r$ we can obtain
$$1 = \frac{\text{d}}{\text{d}r}I(X; T_{\sigma_r}) = \left[\frac{\text{d}}{\text{d}(\sigma^2)}I(X; T_{\sigma_r})\right]_{\sigma = \sigma_r}\frac{\text{d}}{\text{d}r}\sigma^2_r = -\frac{1}{2\sigma^4}\mmse(X|T_{\sigma_r})\frac{\text{d}}{\text{d}r}\sigma^2_r,$$
implying 
$$\frac{\text{d}}{\text{d}r}\sigma^2_r = \frac{-2\sigma^4}{\mmse(X|T_{\sigma_r})}.$$
Plugging this identity into \eqref{Proof_derovative_PF} and invoking \eqref{I_MMSE2}, we obtain 
\begin{equation}\label{eq:firstDerivative_PF}
   \frac{\text{d}}{\text{d}r} \pf^\mathsf{G}(r) = \frac{\mmse(X|T_{\sigma_r})-\mmse(X|T_{\sigma_r}, Y)}{\mmse(X|T_{\sigma_r})}. 
\end{equation}
The second derivative can be obtained via \eqref{Second_Derivative_MMSE}
$$\frac{\text{d}^2}{\text{d}r^2} \pf^\mathsf{G}(r) = 2\frac{\E[\var^2(X|T_{\sigma_r},Y)]}{\mmse^2(X|T_{\sigma_r})} - 2\E[\var^2(X|T_{\sigma_r})]\frac{\mmse(X|T_{\sigma_r}, Y)}{\mmse^3(X|T_{\sigma_r})}.$$
Since $\sigma_r\to \infty$ as $r\to 0$, we can write 
$$\frac{\text{d}}{\text{d}r} \pf^\mathsf{G}(r)\Big|_{r=0} = \frac{\sigma^2_X-\E[\var(X|Y)]}{\sigma_X} = \frac{\var(\E[X|Y])}{\sigma^2_X} = \eta(X, Y),$$
where $\var(\E[X|Y])$ is the variance of the conditional expectation $X$ given $Y$ and the last equality comes from the law of total variance. 
and 
$$\frac{\text{d}^2}{\text{d}r^2} \pf^\mathsf{G}(r)\Big|_{r=0} = \frac{2}{\sigma^4_X}\left[\E[\var^2(X|Y)] - \sigma_X^2\E[\var(X|Y)]\right].$$
Taylor expansion of $\pf(r)$ around $r=0$ gives the result. 
\end{proof}

\begin{proof}[Proof of Theorem~\ref{Thm:cardinality_IB}]
 The main ingredient of this proof is a result by Jana \cite[Lemma 2.2]{CardinalityBound} which provides a tight cardinality bound for the  auxiliary  random variables in the canonical problems in network information theory (including noisy source coding problem described in \ref{sec:Operational}). 
Consider  a pair of random variables $(X,Y)\sim P_{XY}$ and let $d: \Y\times \hat \Y\to \R$ be  an arbitrary  distortion measure defined for arbitrary reconstruction alphabet $\hat \Y$. 
\begin{theorem}[\cite{CardinalityBound}]\label{thm:Jana}
	Let $\A$ be the set of all pairs $(\mathsf R, \mathsf D)$ satisfying 
	\eq{I(X; T)\leq \mathsf R\qquad \qquad \text{and}\qquad \qquad \E[d(Y, \psi(T))]\leq \mathsf D,}
	for some mapping $\psi:\T\to \hat\Y$ and some joint distributions $P_{XYT} = P_{XY}P_{T|X}$. Then every extreme points of $\A$ corresponds to some choice of auxiliary variable $T$ with alphabet size $|\T|\leq |\X|$.
\end{theorem}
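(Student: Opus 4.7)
The plan is to reduce the extreme-point cardinality question to a convex-envelope computation on the simplex $\P(\X)$. Since $\A \subset \R^2$ is convex and closed upward in each coordinate (if $(\mathsf R,\mathsf D) \in \A$ then $(\mathsf R', \mathsf D') \in \A$ for all $\mathsf R' \geq \mathsf R$, $\mathsf D' \geq \mathsf D$), every extreme point $(\mathsf R^*,\mathsf D^*)$ admits a supporting line with outward normal $(-\alpha,-\beta)$ for some $\alpha,\beta \geq 0$, not both zero. The two degenerate cases are immediate: $\alpha = 0$ forces pure distortion minimization, achieved by $T=X$ with $|\T|=|\X|$; $\beta = 0$ forces pure rate minimization, achieved by constant $T$ with $|\T|=1$. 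Henceforth assume $\alpha,\beta>0$ and set $\lambda = \beta/\alpha$, so that $(\mathsf R^*,\mathsf D^*)$ is realized by any feasible $(P_{T|X},\psi)$ attaining $\min \{I(X;T) + \lambda\, \E[d(Y,\psi(T))]\}$.

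The next step is to decompose this scalarized objective letter-by-letter in $\T$. For fixed $P_{T|X}$, the optimal reconstruction is $\psi^\star(t) = \argmin_{\hat y \in \hat\Y} \sum_y P_{Y|T=t}(y)\, d(y,\hat y)$, which depends only on the posterior $P_{X|T=t}$ via $P_{Y|T=t} = P_{X|T=t} P_{Y|X}$. Defining
\eq{ g(Q) \coloneqq \kl(Q\|P_X) + \lambda \min_{\hat y \in \hat\Y} \sum_{y} (Q P_{Y|X})(y)\, d(y,\hat y), \qquad Q \in \P(\X), }
the scalarized objective becomes $\sum_t P_T(t)\, g(P_{X|T=t})$, subject to the marginalization constraint $\sum_t P_T(t)\, P_{X|T=t} = P_X$. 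The minimum therefore equals the lower convex envelope $\conv[g]$ evaluated at $P_X$.

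The concluding step invokes the Fenchel--Eggleston strengthening of Carath\'eodory's theorem: in $\R^d$, every point of the convex hull of a connected set is a convex combination of at most $d$ of its points. I would apply this to the graph $\{(Q,g(Q)) : Q \in \P(\X)\}$, which lies in an ambient space of dimension $|\X|$ (since $\P(\X)$ is $(|\X|-1)$-dimensional and $g$ contributes one more coordinate) and is connected because $\P(\X)$ is path-connected and $g$ is continuous. Hence $\conv[g](P_X)$ is attained by at most $|\X|$ atoms $Q^1,\dots,Q^{|\X|} \in \P(\X)$ with weights $\lambda_1,\dots,\lambda_{|\X|}$ summing to one. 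Setting $P_T(i) = \lambda_i$ and $P_{X|T=i} = Q^i$ then yields an extremal $T$ with $|\T| \leq |\X|$, completing the proof.

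The main obstacle is precisely the use of the connectedness-based strengthening: the naive Carath\'eodory bound would give $|\X|+1$, and the entire saving comes from this step. Continuity of $g$ on its effective domain $\{Q : Q \ll P_X\}$ must be verified carefully---the KL term is continuous there, while the distortion term is a minimum of finitely many continuous linear functionals $Q \mapsto \sum_y (QP_{Y|X})(y)\, d(y,\hat y)$, hence continuous. A secondary care point is attainment: compactness of $\P(\X)$ together with lower semicontinuity of $g$ ensures that the infimum defining $\conv[g](P_X)$ is realized, so a genuine $T$ can be extracted rather than merely approximated. With these two technical steps in place, the extreme-point cardinality bound $|\T| \leq |\X|$ follows.
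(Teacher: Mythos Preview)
First, note that the paper does not prove this statement: it is quoted from \cite{CardinalityBound} (Jana, Lemma~2.2) and invoked as a black box inside the proof of Theorem~\ref{Thm:cardinality_IB}. So there is no ``paper's own proof'' to compare your proposal against.

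On the merits, your scalarization-plus-envelope outline is the right shape, and the Fenchel--Eggleston step is applied correctly: the graph of $g$ over the connected simplex $\P(\X)$ lives in an $|\X|$-dimensional ambient space, so $\conv[g](P_X)$ is realized with at most $|\X|$ atoms. The gap is one step earlier. You assert that ``$(\mathsf R^*,\mathsf D^*)$ is realized by \emph{any} feasible $(P_{T|X},\psi)$ attaining the scalarized minimum,'' but this fails precisely when the supporting line at $(\mathsf R^*,\mathsf D^*)$ meets $\A$ along a full edge: your $T^*$ then lands at \emph{some} point of that edge, not necessarily at the endpoint $(\mathsf R^*,\mathsf D^*)$. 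In other words, your argument as written handles only \emph{exposed} extreme points, and not every extreme point of a planar convex set is exposed. The degenerate case $\alpha=0$ has the same defect: $T=X$ lands at $(H(X),\mathsf D_{\min})$, whereas the extreme point on that horizontal face is the leftmost one, $\big(\inf\{\mathsf R:(\mathsf R,\mathsf D_{\min})\in\A\},\,\mathsf D_{\min}\big)$, which can be strictly smaller in the first coordinate. A standard repair is to combine your argument for exposed points with Straszewicz's theorem (exposed points are dense among extreme points) and a compactness/limit argument on the family of $|\X|$-atom channels; without that closure step, the proof is incomplete.
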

Measuring the distortion in the above theorem in terms of the logarithmic loss as in \eqref{Distortion_Loss}, we obtain that
$$\A=\{(\mathsf R, \mathsf D)\in \R^2_+:~ \mathsf R\geq \mathsf R^{\mathsf{noisy}}(\mathsf D)\},$$
where $R^{\mathsf{noisy}}(\mathsf D)$ is
given in \eqref{eq:rate_distortion_not_tilde}. We observed in Section~\ref{sec:Operational} that $\ib$ is fully characterized by the mapping $\mathsf D\mapsto R^{\mathsf{noisy}}(\mathsf D)$ and thus by $\A$. 
In light of Theorem~\ref{thm:Jana},  all extreme points of $\A$ are achieved by a choice of $T$ with cardinality size $|\T|\leq |\X|$. 
Let $\{(\mathsf R_i, \mathsf D_i)\}$ be the set of extreme points of $\A$ each constructed by channel $P_{T_i|X}$ and mapping $\psi_i$. Due to the convexity of $\A$, each point $(\mathsf R, \mathsf D)\in \A$ is expressed as a convex combination of $\{(\mathsf R_i, \mathsf D_i)\}$ with coefficient $\{\lambda_i\}$; that is there exists a channel $P_{T|X} = \sum_{i}\lambda_iP_{T_i|X}$ and a mapping $\psi(T) = \sum_{i}\lambda_i\psi_i(T_i)$ such that $I(X; T) = \mathsf R$ and $\E[d(Y, \psi(T))] = \mathsf D$. This construction, often termed \textit{timesharing} in information theory literature, implies that \textit{all} points in $\A$ (including the boundary points) can be achieved with a variable $T$ with $|\T|\leq |\X|$. Since the boundary of $\A$ is specified by the mapping $R\mapsto \ib(R)$, we conclude that $\ib(R)$ is achieved by a variable $T$ with cardinality $|\T|\leq |\X|$ for very $R<H(X)$. 
\end{proof}

	\begin{proof}[Proof of Lemma~\ref{lemma:LB_DIB}]
	The following proof is inspired by \cite[Proposition 1]{Polyanskiy_Distilling}.
	 Let $\X = \{1, \dots, m\}$. 
	 We sort the elements in $\X$ such that $$P_X(1) \kl(P_{Y|X=1}\|P_Y)\geq \dots\geq P_X(m) \kl(P_{Y|X=m}\|P_Y).$$ 
	 Now consider the function $f:\X\to [M]$ given by $f(x) = x$ if $x< M$ and $f(x) = M$ if $x\geq M$ where $M= e^R$. 
	 Let $Z = f(X)$. We have $P_Z(i) = P_X(i)$ if $i<M$ and $P_Z(M) = \sum_{j\geq M}P_X(j)$.
	 We can now write
	 \begin{align*}
	     I(Y; Z) &= \sum_{i=1}^{M-1} P_X(i)D(P_{Y|X=i}\|P_Y) + P_Z(M)D(P_{Y|Z=M}\|P_Y)\\
	     &\geq \sum_{i=1}^{M-1} P_X(i)D(P_{Y|X=i}\|P_Y)\\
	     & \geq \frac{M-1}{|\X|}\sum_{i\in \X} P_X(i)D(P_{Y|X=i}\|P_Y)\\
	     & = \frac{M-1}{|\X|}I(X; Y).
	 \end{align*}
Since $f(X)$ takes values in $[M]$, it follows that $H(f(X))\leq R$. Consequently, we have 
$$\mathsf{dIB}(P_{XY}, R)\geq \sup_{f:\X\to [M]} I(Y;f(X)) \geq \frac{M-1}{|\X|}I(X; Y).$$

For the privacy funnel, the proof proceeds as follows. We sort the elements in $\X$ such that $$P_X(1) \kl(P_{Y|X=1}\|P_Y)\leq \dots\leq P_X(m) \kl(P_{Y|X=m}\|P_Y).$$ 
	 Consider now the function $f:\X\to [M]$ given by $f(x) = x$ if $x< M$ and $f(x) = M$ if $x\geq M$. As before, let $Z = f(X)$. Then, we can write, 
	 \begin{align*}
	     I(Y; Z) &= \sum_{i=1}^{M-1} P_X(i)D(P_{Y|X=i}\|P_Y) + P_Z(M)D(P_{Y|Z=M}\|P_Y)\\
	     &\leq \frac{M-1}{|\X|} \sum_{i\in \X} P_X(i)D(P_{Y|X=i}\|P_Y) + P_Z(M)D(P_{Y|Z=M}\|P_Y)\\
	     &= \frac{M-1}{|\X|} I(X; Y) + P_Z(M)\sum_{y\in \Y}P_{Y|Z}(y|M)\log\frac{P_{Y|Z}(y|M)}{P_Y(y)}\\
	     &\leq \frac{M-1}{|\X|} I(X; Y) + \Pr(X\geq M)\sum_{y\in \Y}\left[\sum_{i}P_{Y|X}(y|i)\frac{P_X(i)1_{\{i\geq M\}}}{\Pr(X\geq M)}\right]\log\frac{\sum_{i}P_{Y|X}(y|i)\frac{P_X(i)1_{\{i\geq M\}}}{\Pr(X\geq M)}}{\sum_{i}P_{Y|X}(y|i)P_X(i)}\\
	     &\leq   \frac{M-1}{|\X|} I(X; Y) + \sum_{y\in \Y}\sum_{i\geq M}P_{Y|X}(y|i)P_X(i)\log\frac{1}{\Pr(X\geq M)}\\
	     & = \frac{M-1}{|\X|} I(X; Y) + \Pr(X\geq M)\log\frac{1}{\Pr(X\geq M)}
	 \end{align*}
where the last inequality is due to the log-sum inequality. 
	\end{proof}

\begin{proof}[Proof of Lemma~\ref{lemma:LB_IB_binary}]
 Employing the same argument as in the proof of \cite[Theorem 3]{Polyanskiy_Distilling}, we obtain that there exists a function $f:\X\to [M]$ such that 
 \begin{equation}\label{Eq:MI_LB}
     I(Y; f(X))\geq \eta I(X; Y)
 \end{equation} for any $\eta\in (0,1)$ and 
 $$M\leq 4 + \frac{4}{(1-\eta)\log 2}\log\frac{2\alpha}{(1-\eta)I(X; Y)}.$$
 Since $h^{-1}_\mathsf{b}(x)\leq \frac{x \log 2}{\log\frac{1}{x}}$ for all $x\in (0,1]$, it follows from above that (noticing that $I(X; Y)\leq \alpha$)
 $$M\leq 4 + \frac{I(X; Y)}{2\alpha} \frac{1}{h^{-1}_\mathsf{b}(\zeta)},$$
 where $\zeta \coloneqq \frac{(1-\eta)I(X; Y)}{2\alpha}$. Rearranging this, we obtain 
 $$h^{-1}_\mathsf{b}(\zeta)\leq \frac{I(X; Y)}{2\alpha(M-4)}.$$
 Assuming $M\geq 5$, we have $\frac{I(X; Y)}{2\alpha(M-4)}\leq \frac{1}{2}$ and hence 
 $$\zeta\leq h_\mathsf{b}\Big(\frac{I(X; Y)}{2\alpha(M-4)}\Big),$$
 implying 
 $$\eta\geq 1-\frac{2\alpha}{I(X; Y)}h_\mathsf{b}\Big(\frac{I(X; Y)}{2\alpha(M-4)}\Big).$$
Plugging this into \eqref{Eq:MI_LB}, we obtain 
$$I(Y; f(X))\geq I(X; Y)-2\alpha h_\mathsf{b}\Big(\frac{I(X; Y)}{2\alpha(M-4)}\Big).$$
As before, if $M = e^R$, then $H(f(X))\leq R$. Hence, 
$$\mathsf{dIB}(P_{XY}, R)\geq I(X; Y)-2\alpha h_\mathsf{b}\Big(\frac{I(X; Y)}{2\alpha(e^R-4)}\Big),$$
for all $R\geq \log 5$.
\end{proof}

\section{Proofs from Section~\ref{Sec:Family}}
\label{Appendix_ProofSecFamily}
\unskip
   \begin{proof}[Proof of Lemma~\ref{Lemma:Bound_Infty}]
    To prove the upper bound on $\pf^{(\infty, 1)}$, recall that  $r\mapsto e^{\pf^{(\infty, 1)}(r)}$ is convex. Thus, it lies below  the chord connecting points $(0, 0)$ and $(H(X), e^{I_\infty(X;Y)})$. 
    The lower bound on $\ib^{(\infty, 1)}$ is similarly obtained using the concavity of  $R\mapsto e^{\ib^{(\infty, 1)}(R)}$. This is achievable by an erasure channel. To see this consider the random variable $T_\delta$ taking values in $\X\cup \{\perp\}$ that is obtained by conditional distributions 
    $P_{T_\delta|X}(t|x) = \bar{\delta}I_{t=x}$ and $P_{T_\delta|X}(\perp|x) = \delta$ for some $\delta\geq 0$. It can be verified that $I(X; T_\delta) = \bar\delta H(X)$ and $\cp(Y|T_\delta) = \bar\delta \cp(Y|X) + \delta\cp(Y)$. By taking $\delta = 1-\frac{R}{H(X)}$, this channel meets the constraint $I(X; T_\delta)= R$.
    Hence, 
    \eq{\ib^{(\infty, 1)}(R)\geq \log\left[\frac{\cp(Y|T_\delta)}{\cp(Y)}\right] = \log\left[1-\frac{R}{H(X)} + \frac{R}{H(X)}\frac{\cp(Y|X)}{\cp(Y)}\right] .}
   \end{proof}
\begin{proof}[Proof of Theorem~\ref{Thm:Gerber_Infty}]
 We begin by $\pf^{(\infty,1)}$.  As described in Section~\ref{Sec:Evaluate_Family}, and similar to Mrs. Gerber's Lemma (Lemma~\ref{lemma: Gerber}), we need to construct the lower convex envelope $\conv[F_\beta^{(\infty, 1)}]$ of  $F_\beta^{(\infty, 1)}(q)=\cp(Y')+\beta H(X')$ where $X'\sim \sBer(q)$ and $Y'$ is the result of passing $X'$ through $\bsc(\delta)$, i.e., $Y'\sim \sBer(\delta*q)$. In this case, $\cp(Y') = \max\{\delta*q, 1-\delta*q\}$. Hence, we need to determine the lower convex envelope of the map 
 \eqn{}{q\mapsto F_\beta^{(\infty, 1)}(q) = \max\{\delta*q, 1-\delta*q\}+\beta h_\mathsf{b}(q).}
 A straightforward computation shows that  $F_\beta^{(\infty, 1)}(q)$ is symmetric around $q = \frac{1}{2}$ and  is also concave in $q$ on $q\in [0,\frac{1}{2}]$ for any $\beta$. Hence, $\conv[F_\beta^{(\infty, 1)}]$ is obtained as follows depending on the values of $\beta$: 
    \begin{itemize}
        \item $F_\beta^{(\infty, 1)}(\frac{1}{2}) < F_\beta^{(\infty, 1)}(0)$ (see Fig.~\ref{fig:PF_infty1}), then $\conv[F_\beta^{(\infty, 1)}]$ is given by the convex combination of $F_\beta^{(\infty, 1)}(0)$, $F_\beta^{(\infty, 1)}(1)$, and $F_\beta^{(\infty, 1)}(\frac{1}{2})$.
        \item $F_\beta^{(\infty, 1)}(\frac{1}{2}) = F_\beta^{(\infty, 1)}(0)$ (see Fig.~\ref{fig:PF_infty2}), then $\conv[F_\beta^{(\infty, 1)}]$ is given by the convex combination of $F_\beta^{(\infty, 1)}(0)$, $F_\beta^{(\infty, 1)}(\frac{1}{2})$, and $F_\beta^{(\infty, 1)}(1)$. 
        \item $F_\beta^{(\infty, 1)}(\frac{1}{2}) > F_\beta^{(\infty, 1)}(0)$ (see Fig.~\ref{fig:PF_infty3}), then $\conv[F_\beta^{(\infty, 1)}]$ is given by the convex combination of $F_\beta^{(\infty, 1)}(0)$ and $F_\beta^{(\infty, 1)}(1)$. 
    \end{itemize} 
    Hence, assuming $p\leq \frac{1}{2}$, we can construct $T^*$ that minimizes $\cp(Y|T)-\beta H(X|T)$. Considering the  first two cases, we obtain that $T^*$ is ternary with $P_{X|T^* = 0} = \sBer(0)$, $P_{X|T^* = 1} = \sBer(1)$, and $P_{X|T^* = 2} = \sBer(\frac{1}{2})$ with marginal $P_{T^*} = [1-p-\frac{\alpha}{2}, p-\frac{\alpha}{2}, \alpha]$ for some $\alpha\in [0, 2p]$. This leads to $\cp(Y|T^*) = \bar\alpha\bar\delta +\frac{1}{2}\alpha$ and $I(X;T^*) = h_\mathsf{b}(p)-\alpha$.
    Note that $\cp(Y|T^*)$ covers all possible domain $[\cp(Y), \bar\delta]$ by varying $\alpha$ on $[0, 2p]$.  Replacing $I(X;T^*)$ by $r$, we obtain $\alpha = h_\mathsf{b}(p) - r$ leading to  
    $\cp(Y|T^*) = \bar\delta - (h_\mathsf{b}(p) - r)(\frac{1}{2}-\delta)$. Since $\cp(Y) = 1-\delta*p$, the desired result follows. 
       
    To derive the expression for $\ib^{(\infty,1)}$, recall that we need to derive $\conc[F_\beta^{(\infty,1)}]$ the upper concave envelope of $F_\beta^{(\infty,1)}$. It is clear from Fig.~\ref{fig:PF_infty} that  $\conc[F_\beta^{(\infty,1)}]$ is  obtained by replacing  $F_\beta^{(\infty,1)}(q)$ on the interval $[q_\beta, 1-q_\beta]$ by its maximum value over $q$ where 
    \eq{q_\beta\coloneqq \frac{1}{1+e^{\frac{1-2\delta}{\beta}}},}
    is the maximizer of $F_\beta^{(\infty,1)}(q)$ on $[0, \frac{1}{2}]$. In other words, 
    \eq{\conc[F_\beta^{(\infty,1)}(q)] = \begin{cases} F_\beta^{(\infty,1)}(q_\beta), & \text{for}~ q\in [q_\beta, 1-q_\beta],\\  
    F_\beta^{(\infty,1)}(q), & \text{otherwise.}
    \end{cases}
    }
    Note that  if $p< q_\beta$ then  $\conc[F_\beta^{(\infty,1)}]$ evaluated at $p$ coincides with $F_\beta^{(\infty,1)}(p)$. This  corresponds to all trivial $P_{T|X}$ such that $\cp(Y|T) + \beta H(X|T) = \cp(Y) + \beta H(X)$. If, on the other hand, $p\geq q_\beta$, then $\conc[F_\beta^{(\infty,1)}$ is the convex combination of   $F_\beta^{(\infty,1)}(q_\beta)$ and $F_\beta^{(\infty,1)}(1-q_\beta)$. Hence, taking $q_\beta$ as a parameter (say, $\alpha$), the optimal binary $T^*$ is constructed as follows: $P_{X|T^*=0} = \sBer(\alpha)$ and $P_{X|T^*=1} = \sBer(\bar\alpha)$ for $\alpha\leq p$. Such channel induces \eq{\cp(Y|T^*) = \max\{\alpha*\delta, 1-\alpha*\delta\} = 1-\alpha*\delta,} as $\alpha\leq p\leq \frac{1}{2}$, and also 
    \eq{I(X;T^*) = h_\mathsf{b}(p) - h_\mathsf{b}(\alpha).}
    Combining these two, we obtain 
    \eq{\cp(Y|T^*) = 1-\delta*h^{-1}_\mathsf{b}(h_\mathsf{b}(p)-R).}
 \end{proof}

\begin{proof}[Proof of Theorem~\ref{Thm:Gerber_AlphaGamma}]
 Let $\sU_{\alpha}$ and $\sL_{\alpha}$ denote the $\sU_{\Phi,\Psi}$ and $\sL_{\Phi,\Psi}$, respectively, when $\Psi(Q_X) = \Phi(Q_X) = \|Q_X\|_\alpha$. In light of  \eqref{IB_alpha1} and \eqref{PF_alpha1}, it is sufficient to compute $\sL_{\alpha}$ and $\sU_{\alpha}$.
 To do so, we need to construct the lower convex envelope $\conv[F_\beta^{(\alpha)}]$ and upper concave envelope  $\conc[F_\beta^{(\alpha)}]$ of the map $F_\beta^{(\alpha)}(q)$ given by $q\mapsto \|Q_Y\|_\alpha-\beta\|Q_X\|_\alpha$ where 
 $X'\sim \sBer(q)$ and $Y'$ is the result of passing $X'$ through $\bsc(\delta)$, i.e., $Y'\sim \sBer(\delta*q)$. In this case, we have  
 \eqn{Prf:F_alpha}{q\mapsto F_\beta^{(\alpha)}(q) =  \|q*\delta\|_\alpha - \beta ||q||_\alpha,}
 where $\|a\|_\alpha$ is to mean $\|[a, \bar a]\|_\alpha$ for any $a\in [0,1]$.

 We begin by $\sL_{\alpha}$ for which we aim at obtaining $\conv[F_\beta^{(\alpha)}]$.  
 A straightforward computation shows that $F_\beta^{(\alpha)}(q)$ is convex for $\beta\leq (1-2\delta)^2$ and $\alpha\geq 2$. For $\beta> (1-2\delta)^2$ and $\alpha\geq 2$, it can be shown that  $F_\beta^{(\alpha)}(q)$ is concave  an interval $[q_\beta, 1-q_\beta]$ where $q_\beta$ solves $\frac{d}{dq}F_\beta^{(\alpha)}(q)=0$. (The shape of $q\mapsto F_\beta^{(\alpha)}(r)$ in is similar to what was depicted in Fig.~\ref{fig:Gerber}.) 
 By symmetry, $\conv[F_\beta^{(\alpha)}]$ is therefore obtained by replacing $F_\beta^{(\alpha)}(q)$ on this interval by $F_\beta^{(\alpha)}(q_\beta)$. Hence, if $p < q_\beta$, $\conv[F_\beta^{(\alpha)}]$ at $p$ coincides with $F_\beta^{(\alpha)}(p)$ which results in trivial $P_{T|X}$ (see the proof of Theorem~\ref{Thm:Gerber_Infty} for more details). If, on the other hand, $p\geq q_\beta$, then  $\conv[F_\beta^{(\alpha)}]$ evaluated at $p$ is given by a convex combination of $F_\beta^{(\alpha)}(q_\beta)$ and $F_\beta^{(\alpha)}(1-q_\beta)$.  Relabeling $q_\beta$ as a parameter (say, $q$), we can write an optimal binary $T^*$ via the following: $P_{X|T*=0}=\sBer(1-q)$ and $P_{X|T*=1}=\sBer(q)$ for $q\leq p$. This channel induces 
 $\Psi(Y|T^*) = \|q*\delta\|_\alpha$
 and 
 $\Phi(X|T^*) = \|q\|_\alpha$. Hence, the graph of $\sL_{\alpha}$ is given by  
 \eq{\big\{(\|q\|_\alpha, \|q*\delta\|_\alpha), 0\leq q\leq p\big\}.}  
 Therefore, 
 \eq{\sup_{\substack{P_{T|X}\\H_\alpha(X|T)\geq \zeta}}H_\alpha(Y|T) = \frac{\alpha}{1-\alpha}\log \|q*\delta\|_\alpha,} where $ q\leq p$ solves $\frac{\alpha}{1-\alpha}\log \|q\|_\alpha = \zeta$. Since the map $q\mapsto \|q\|_\alpha$ is strictly decreasing for $q\in [0,0.5]$, this equation has a unique solution.
 
Next, we compute $\sU_{\alpha}$ or equivalently $\conc[F_\beta^{(\alpha)}]$ the upper concave envelop of $F_\beta^{(\alpha)}$ defined in \eqref{Prf:F_alpha}. As mentioned earlier, $q\mapsto F_\beta^{(\alpha)}(q)$ is convex for $\beta \leq (1-2\delta)^2$ and $\alpha\geq 2$. For $\beta > (1-2\delta)^2$, we need to consider three cases: (1) $\conc[F_\beta^{(\alpha)}]$ is given by the convex combination of $F_\beta^{(\alpha)}(0)$ and $F_\beta^{(\alpha)}(1)$, (2) $\conc[F_\beta^{(\alpha)}]$ is given by the convex combination of $F_\beta^{(\alpha)}(0)$, $F_\beta^{(\alpha)}(\frac{1}{2})$, and $F_\beta^{(\alpha)}(1)$, (3) $\conc[F_\beta^{(\alpha)}]$ is given by the convex combination of $F_\beta^{(\alpha)}(0)$ and $F_\beta^{(\alpha)}(q^\dagger)$ where $q^\dagger$ is a point $\in [0,\frac{1}{2}]$.   Without loss of generality, we can ignore the first case. The other two cases correspond to the following solutions
\begin{itemize}
    \item $T^*$ is a ternary variable given by $P_{X|T^*=0} = \sBer(0)$, $P_{X|T^*=1} = \sBer(1)$, and $P_{X|T^*=2} = \sBer(\frac{1}{2})$ with marginal $T^*\sim \sBer(1-p-\frac{\lambda}{2}, p-\frac{\lambda}{2}, \lambda)$ for some $\lambda\in[0,2p]$. This produces
    \eq{\Psi(Y|T^*) = \bar\lambda \|\delta\|_\alpha + \lambda \|\frac{1}{2}\|_\alpha,}
    and 
    \eq{\Phi(X|T^*) = \bar\lambda + \lambda \|\frac{1}{2}\|_\alpha.}
    \item $T^*$ is a binary variable given by $P_{X|T^*=0} = \sBer(0)$ and $P_{X|T^*=1} = \sBer(\frac{p}{\lambda})$ with marginal $T^*\sim \sBer(\lambda)$ for some $\lambda\in[2p,1]$. This produces 
    \eq{\Psi(Y|T^*) = \bar\lambda \|\delta\|_\alpha + \lambda\|\delta*\frac{p}{\lambda}\|_\alpha,}
    and 
    \eq{\Phi(X|T^*) = \bar\lambda + \lambda\|\frac{p}{\lambda}\|_\alpha.}
\end{itemize}
  Combining these two cases, can write 
  \eq{\sU_\alpha (\zeta) = \bar\lambda\|\delta\|_\alpha + \lambda\|\frac{q}{z}*\delta\|_\alpha,}
  where 
  \eq{\zeta = \bar\lambda + \lambda\|\frac{q}{z}\|_\alpha,}
  and $z = \max\{2p, \lambda\}$.
 Plugging this into \eqref{IB_alpha1} completes the proof.
\end{proof}

\begin{proof}[Proof of Lemma~\ref{Lemma:Alpha_Gamma_infinity}]
 The facts that $\gamma\mapsto H_\gamma(X|T)$ is non-increasing on $[1, \infty]$ \cite[Proposition 5]{Arimoto's_Conditional_Entropy} and $\left(\sum_{i}|x_i|^\gamma\right)^{1/\gamma}\geq \max_{i}|x_i|$ for all $p\geq 0$ imply
 \eqn{Proof_sandwich}{\frac{\gamma-1}{\gamma}H_\gamma(X|T)\leq H_\infty(X|T)\leq H_\gamma(X|T).}
 Since $I_\infty(X;T) = H_\infty(X) - H_\infty(X|T)$, the above lower bound yields
 \eqn{PF_Bound1}{I_\gamma(X; T)\geq \frac{\gamma}{\gamma-1}I_\infty(X; T)-\frac{\gamma}{\gamma-1}H_\infty(X) + H_\gamma(X),
 }
 where the last inequality follows from the fact that $\gamma\mapsto H_\gamma(X)$ is non-increasing. The upper bound in \eqref{Proof_sandwich} (after replacing $X$ with $Y$ and $\gamma$ with $\alpha$) implies
 \eqn{PF_Bound2}{I_\alpha(Y; T)\leq I_\infty(Y; T) + H_\alpha(Y) - H_\infty(Y).}
 Combining \eqref{PF_Bound1} and \eqref{PF_Bound2}, we obtain the desired upper bound for $\pf^{(\alpha, \gamma)}$. The other bounds can be proved similarly by interchanging $X$ with $Y$ and $\alpha$ with $\gamma$ in \eqref{PF_Bound1} and \eqref{PF_Bound2}.
\end{proof}

\bibliographystyle{IEEEtran}
\bibliography{references}
\end{document}